\PassOptionsToPackage{dvipsnames}{xcolor}
\documentclass[a4paper, authoryear]{arxiv}
\usepackage{mystyle}
\usepackage{url}
\usepackage{booktabs}
\usepackage{nicefrac}
\usepackage[framemethod=tikz]{mdframed}
\ifvenue
  \mdfdefinestyle{paperbox}{linewidth=0.5pt, roundcorner=6pt,
    innertopmargin=2pt, innerbottommargin=2pt, innerleftmargin=2pt,
    innerrightmargin=2pt, skipabove=2pt, skipbelow=2pt}
\else
  \mdfdefinestyle{paperbox}{linewidth=0.5pt, roundcorner=6pt,
    innertopmargin=4pt, innerbottommargin=6pt}
\fi
\usepackage{mathtools}
\mathtoolsset{showonlyrefs}
\usepackage{dsfont}
\usepackage{tikz}
\usetikzlibrary{positioning, arrows.meta, calc, fit, backgrounds}
\usepackage{enumitem}
\usepackage{algorithm}
\usepackage{algorithmic}
\usepackage{subcaption}

\newcommand{\defi}[1]{{\bf #1}}

\newcommand{\BR}{{\normalfont\texttt{BR}}}

\title[Bayesian Learning of Parametric MFGs with Common Noise]{Bayesian Learning of Parametric Mean Field Games with Common Noise}
\author[1]{Gr\'egoire Lambrecht}
\author[2]{Mathieu Lauri\`ere}
\affil[1]{Center for Data Science, New York University and NYU Shanghai. \email{gl3048@nyu.edu}}
\affil[2]{Shanghai Center for Data Science; NYU-ECNU Institute of Mathematical Sciences at NYU Shanghai; NYU Shanghai, Shanghai, People's Republic of China. \email{mathieu.lauriere@nyu.edu}}
\funding{Gr\'egoire Lambrecht is partially supported by NSF Award 1922658.}
\keywords{mean field games, common noise, inverse problems, Bayesian inference}

\begin{document}
\maketitle
\bibliographystyle{plainnat}

\begin{abstract}
We study the inverse problem of recovering unknown parameters governing the dynamics and rewards of Mean Field Games (MFGs) from partial observations of population trajectories, made stochastic by common noise. The observer seeks to recover a consistent parameter $\theta$ via Bayesian inference, facing the challenge that evaluating the likelihood requires solving a new Nash equilibrium (NE) for each candidate $\theta$. Due to common noise, the mean-field evolution is stochastic and equilibrium policies are population dependent, which makes solving for each parameter computationally expensive. To overcome this, we introduce the \emph{Bayesian-Extended Mean Field Game} (BE-MFG), whose single NE simultaneously encodes equilibrium policies for all parameters, enabling efficient learning of the conditional density of mean-field trajectories. We formalize parameter consistency and identifiability in MFGs with common noise and investigate conditions under which consistent parameter recovery is guaranteed. On three synthetic environments of various complexity, we recover the ground-truth parameter accurately from a small number of partial, stochastic observations, outperforming an adaptation of inverse reinforcement learning for MFGs. A fourth experiment, on public electricity market data, suggests that the approach can also be applied to real-world data.
\end{abstract}

\section{Introduction}

Learning in multi-agent environments presents significant hurdles due to the complex, often non-stationary, strategic interactions among agents. While Deep Reinforcement Learning has demonstrated remarkable results in small-scale games like Go \cite{silver2016mastering, silver2017mastering}, Chess \cite{Silver18AlphaZero}, Poker \cite{heinrich2016deep}, and complex strategy games such as StarCraft \cite{samvelyan2019starcraft} or Stratego \cite{perolat2022mastering}, extending these techniques to massive populations remains challenging. As the population grows, the joint state-action space expands exponentially, rendering traditional Multi-Agent Reinforcement Learning methods computationally infeasible \cite{busoniu2008comprehensive, yang2020overviewmarl, zhangyangbasar2021multiagentrloverview, gronauer2022multi, wong2023deep}.

Mean Field Games (MFGs) \cite{lasry2007mean, huang2006large} address this scaling problem by modeling the interaction between a representative individual and the aggregate distribution of the population. MFGs simplify high-dimensional games into a system of coupled equations describing the optimal behavior of a single agent and the macroscopic evolution of the distribution. 

This framework is highly effective for applications involving large populations in continuous spaces, such as economics \cite{lachapelle2016efficiency, achdou2017income}, finance \cite{cardaliaguet2018mean, yang2017mean}, engineering \cite{djehiche2017mean}, crowd dynamics \cite{lachapelle2011mean, achdou2018mean}, biological swarming \cite{nourian2010synthesisflocking, fornasier2014meanfield}, and large-scale network resource management \cite{hanif2016mean, mao2022mean}. 
In many realistic settings, the population is subjected to aggregate, macroscopic shocks (such as market fluctuations or weather events) which are modeled through common noise \cite{carmona2016mean}. 
In recent years, the numerical resolution of high-dimensional MFGs has been advanced by deep learning and reinforcement learning (RL) techniques; see \cite{lauriere2022learning, hu2024recent} for reviews. In particular, model-free RL algorithms \cite{guo2019learningneurips, subramanian2019reinforcement, elie2020convergencemodelfreemfg, cui2021approximately, angiuli2022unified, yardim2023policy} learn equilibrium policies through direct interaction with an environment implementing the MFG, but do not address the question of building this environment from real data.

\noindent\textbf{Challenges.} Inferring unknown model parameters from observed trajectories is central to real-world applications where the game's specifications are not known a priori. This inverse problem is closely related to Inverse Reinforcement Learning (IRL) and Imitation Learning (IL) for MFGs, which recover reward functions or dynamics from observed equilibrium trajectories \cite{chen2022individualinverseRL,chen2023adversarial,anahtarci2024maximum}, but without common noise or with access to many individual state-action trajectories. The latter is often unrealistic, e.g., due to data protection constraints in epidemiology, economics, or finance. Moreover, when the environment incorporates common noise and the latent parameter $\theta$ is unknown, standard IRL techniques struggle for two reasons: (1) the observer only witnesses stochastic realizations of the mean-field trajectory, necessitating a probabilistic approach to identifiability and inference; (2) the equilibrium policies are in general population-dependent, hence more complex. In short, the main challenges we tackle are:
\begin{mdframed}[style=paperbox, backgroundcolor=red!5, linecolor=red!80]
\begin{enumerate}[leftmargin=*]
    \item Learning MFG parameters in both the \textbf{dynamics} and the \textbf{reward};
    \item Using only \textbf{(state) population-level information}, possibly limited to certain time steps;
    \item Extracting information from \textbf{noisy mean-field trajectories} due to common noise;
    \item Handling \textbf{population-dependent policies} due to mean field stochasticity.
\end{enumerate}
\end{mdframed}

\noindent\textbf{Main Contributions.} In this paper, we address the challenge of learning the latent parameters of an MFG with common noise from observed population trajectories, either state-action or state-only, and potentially truncated in time. Our main contributions, summarized through Fig.~\ref{fig:schema}, are:
\begin{mdframed}[style=paperbox, backgroundcolor=blue!5, linecolor=blue!80]
\begin{enumerate}[leftmargin=*]
    \item We formalize the problem of \textbf{parameter identifiability} in MFGs with common noise, establishing \textbf{theoretical conditions} under which an observer can recover a parameter consistent with the observations, and the true parameter $\theta_0$ under identifiability, using Bayesian estimation, despite the latent nature of the common noise.
    \item We introduce the \textbf{Bayesian-Extended Mean Field Game (BE-MFG)}, a novel augmented game formulation that embeds the entire parametric family of MFGs. By computing a single equilibrium for the BE-MFG, we simultaneously solve for the equilibrium policies across all parameters, bypassing the need to repeatedly solve MFGs during the inference phase.
    \item We propose a \textbf{scalable method} combining a deep Fictitious Play (FP) algorithm with a Feature-wise Linear Modulation (FiLM) architecture and Masked Autoregressive Flows (MAF) to efficiently approximate the posterior distribution over the parameters.
    \item We report \textbf{numerical experiments} on three synthetic environments of increasing complexity, varying parameter dimensionality, state space size, and observation type, in which the ground-truth parameter is recovered and the method is compared against an adaptation of Mean-Field IRL (MF-IRL)~\cite{chen2022individualinverseRL}. A fourth experiment, on public electricity market data, illustrates the \textbf{potential applicability} of the pipeline to real-world data.
\end{enumerate}
\end{mdframed}

\begin{figure}[h]
\centering
\begin{tikzpicture}[
    scale=0.70, transform shape, 
    node distance = 0.5cm,
    box/.style = {rectangle, rounded corners, draw, thick, align=center, minimum width=2.4cm, minimum height=0.85cm, font=\small},
    arrow/.style = {->, thick, >=stealth},
    dasharrow/.style = {->, thick, >=stealth, dashed},
]
\node[box, draw=Turquoise!60, fill=Turquoise!5] (bemfg) {%
    \textbf{BE-MFG} $\bar{\mathcal{M}}_q$\\
    {\footnotesize State: $(\theta,x)$}
};
\node[box, draw=Turquoise!60, fill=Turquoise!5, right=0.3cm of bemfg] (bene) {%
    \textbf{BE-NE} $\bar\pi^*$\\
    {\footnotesize Thm.~\ref{thm:NE_BE-MFG_induces_NE_PMFG}}
};
\draw[arrow] (bemfg) -- (bene);
\node[box, draw=green!50!black, fill=green!5, right=0.3cm of bene, yshift=1.3cm] (pmfg1) {%
    \textbf{PMFG} $\mathcal{M}_{\theta_1}$, NE: $\pi^*_{\theta_1}$
};
\node[box, draw=green!50!black, fill=green!5, right=0.3cm of bene] (pmfgk) {%
    \textbf{PMFG} $\mathcal{M}_{\theta_k}$, NE: $\pi^*_{\theta_k}$
};
\node[box, draw=green!50!black, fill=green!5, right=0.3cm of bene, yshift=-1.3cm] (pmfgK) {%
    \textbf{PMFG} $\mathcal{M}_{\theta_K}$, NE: $\pi^*_{\theta_K}$
};
\draw[dasharrow] (bene.east) -- (pmfg1.west);
\draw[dasharrow] (bene.east) -- (pmfgk.west);
\draw[dasharrow] (bene.east) -- (pmfgK.west);
\node[font=\footnotesize] at ($(pmfg1)!0.4!(pmfgk) - (0.05,0)$) {$\vdots$};
\node[font=\footnotesize] at ($(pmfgk)!0.4!(pmfgK) - (0.05,0)$) {$\vdots$};
\node[box, draw=orange!80, fill=orange!5, right=0.8cm of pmfgk] (condlaw) {%
    \textbf{Cond. law} $\Gamma_{\theta,\mathcal{I}}$\\
    {\footnotesize MAF / GP}
};
\draw[arrow] (pmfg1.east) -- ++(0.2,0) |- (condlaw.north west);
\draw[arrow] (pmfgk.east) -- (condlaw.west);
\draw[arrow] (pmfgK.east) -- ++(0.2,0) |- (condlaw.south west);
\node[box, draw=green!50!black, fill=green!5, above=0.5cm of condlaw] (identif) {%
    \textbf{Identifiability}\\
    {\footnotesize Thm.~\ref{thm:identifiability}}
};
\node[box, draw=red!50, fill=red!5, right=0.9cm of condlaw, yshift=0.6cm] (map) {%
    \textbf{MAP} $\hat\theta^{\mathrm{MAP}}_N$
};
\node[box, draw=red!50, fill=red!5, right=0.9cm of condlaw, yshift=-0.6cm] (bma) {%
    \textbf{BMA} $\pi_{\bar q_N}$
};
\coordinate (midcondmap) at ($(condlaw.east)!0.5!(map.west)$);
\draw[arrow] (condlaw.east) -- (map.west);
\draw[arrow] (condlaw.east) -- (bma.west);
\draw[arrow, black!50!black] (identif.east) -| (midcondmap);
\node[box, draw=gray!60, fill=gray!5, below=0.6cm of condlaw] (obs) {%
    \textbf{True} $\mathcal{M}_{\theta_0}$\\
    {\footnotesize $\hat\xi^1_{\theta_0,\mathcal{I}},\dots,\hat\xi^N_{\theta_0,\mathcal{I}}$}
};
\draw[arrow] (obs) -- (condlaw) node[midway, right, font=\footnotesize] {Bayes update};
\end{tikzpicture}
\caption{\it Solving the BE-MFG once yields a Nash equilibrium that simultaneously encodes equilibrium policies for all parameters; one can simulate trajectories to learn the conditional density, which, combined with the true observations via Bayes' rule, gives the posterior, and hence the MAP estimator and the BMA policy. 
}
\label{fig:schema}
\end{figure}

\noindent\textbf{Related Work.} Our work relates primarily to IL and IRL in MFGs (\cite{shmaya2025bayesian} considers players performing Bayesian learning during the game, a different question). IL for MFGs, which infers the equilibrium policies used by experts, has been studied in \cite{ramponi2024imitation, lambrecht2026} from a theoretical perspective. On the IRL side, \cite{chen2022individualinverseRL,chen2023adversarial} recover reward functions from observed behaviors. Recent extensions have incorporated Maximum Causal Entropy \cite{anahtarci2024maximum, anahtarci2025maximum}, average reward criteria \cite{alkir2025inverse}, and kernel-based methods \cite{anahtarci2025kernel} to improve robustness and representational capacity. However, these methods typically assume deterministic mean-field at equilibrium (no common noise) and focus solely on the reward function, leaving the transition dynamics fixed and known. Furthermore, they often require computing the forward MFG equilibrium in the inner loop of the inference, which is computationally prohibitive; meta-IRL \cite{chen2024meta} still relies on nested optimization, and model-based RL for MFGs \cite{huang2024model} learns dynamics by exploration rather than from expert trajectories. In contrast, our Bayesian framework handles general parametric uncertainty (encompassing both rewards and dynamics) under common noise. Crucially, by formulating the inference problem through the Bayesian-Extended MFG, our method uniquely bypasses the inner-loop equilibrium computation, solving for all parameterized policies in a single forward phase. 

\noindent\textbf{Organization of the Paper.} In Sec.~\ref{sec:pb-setting}, we formalize the setting of Parametric Mean Field Games (PMFGs). In Sec.~\ref{sec:consistency-identifiability}, we define the concepts of consistency and identifiability. Sec.~\ref{sec:bays} presents the Bayesian learning setting, introduces the BE-MFG, establishes its equivalence to the family of PMFGs and details our algorithmic approach. Finally, Sec.~\ref{sec:expes} demonstrates the effectiveness of our method through numerical experiments.
\onlyin{venue}{\vspace{-0.5em}}
\section{Problem setting} 
\label{sec:pb-setting}
\onlyin{venue}{\vspace{-0.5em}}
We consider a family of MFGs indexed by a parameter $\theta$, and aim to recover the true parameter $\theta_0$, its Nash equilibrium policy, or best responses, from observations of mean-field trajectories.
\onlyin{venue}{\vspace{-0.5em}}
\subsection{Mean field dynamics and Nash equilibria} 

\onlyin{venue}{\vspace{-0.5em}}
\noindent\textbf{Notation.} For any integer $N$, $[N] = \{0, \dots, N\}$ and $[N]^* = [N] \setminus \{0\}$. $\Delta_{\mathcal{S}}$ (resp. $\mathcal{P}(E)$) denotes the set of probability measures over a finite set $\mathcal{S}$ (resp. a measurable space $E$). We otherwise use standard notation, all of which is defined, together with the abbreviations, in App.~\ref{appendix:notation}.

\noindent\textbf{Main components.} 
Let $\cX$ be a finite \defi{state space}, $\cA$ be a finite action space, $\cE^0$ be a Polish space for the \defi{common noise}, $\nu^0\in \cP(\cE^0)$ be a probability measure for the \defi{common noise law}, $H \in \mathbb{N}^*$ be the \defi{time horizon}, and $\rho_0 \in \Delta_{\cX}$ be the \defi{initial probability measure}. Let $d \in \mathbb{N}^*$, $\bar\Theta \subset \mathbb{R}^d$ be a compact \defi{set of parameters}, and $\lambda$ a reference measure on $\mathcal{B}(\bar\Theta)$. Let $P:\bar \Theta \times \cX \times \cA \times \Delta_{\cX} \times \cE^0 \to \Delta_{\cX}$ and $r:\bar \Theta \times \cX \times \cA \times \Delta_{\cX} \to \RR$ be two measurable mappings representing the \defi{transition kernel} and the \defi{reward function}, $r$ being moreover bounded. For any $\theta \in \bar \Theta$, for any $(x, a, \rho, e^0) \in \cX \times \cA \times \Delta_{\cX} \times \cE^0 $ define $P_\theta(x, a, \rho, e^0) = P(\theta, x, a, \rho, e^0)$ and $r_\theta(x, a, \rho) = r(\theta, x, a, \rho)$.  A \defi{Parametric Mean Field Game (PMFG)} associated to the parameter $\theta \in \bar \Theta$ is denoted by a tuple $\cM_\theta = (\cX, \cA, \cE^0, \nu^0, P_\theta, r_\theta, H, \rho_0)$. We call a \defi{family of PMFG} the set $\{\cM_\theta\}_{\theta \in \bar \Theta}$. We define a \defi{strategy} as a measurable mapping from $\cX \times \Delta_{\cX}$ to $\Delta_{\cA}$, and a \defi{policy} as a sequence of $H$ strategies. A \defi{policy} is denoted by $\pi = (\pi_0, \dots, \pi_{H-1})$, the set of all strategies is denoted by $\Pi_0$ and the set of all such policies is denoted by $\Pi$. Elements of $\Delta_{\cX}$ and $(\Delta_{\cX})^H$ are referred to as \defi{mean field} and \defi{mean field sequences} respectively.

\noindent\textbf{Distribution evolution.} Let $\theta \in \bar \Theta$ and consider the MFG $\cM_\theta$. 
For any pair of population distributions $\rho, \rho' \in \Delta_\cX$, any strategy $\pi_0$, we define the \defi{mean-field transition operator} $\Phi_{\theta}: \Delta_\cX \times \Pi_0 \times \Delta_\cX \times \cE^0 \to \Delta_\cX $ as
\begin{equation*}
    \Phi_{\theta}(\rho', \pi_0; \rho, e^0) = \sum_{x \in \cX, a \in \cA} \rho'(x) \pi_0(a \vert x, \rho) P_\theta(\cdot \vert x, a, \rho, e^0).
\end{equation*}
This operator $\Phi_\theta$ represents the evolution of the probability distribution of an agent between time $i$ and $i+1$, given that at time $i$ the agent's state distribution was $\rho'$, the agent used strategy $\pi_0$, the total population was distributed according to $\rho$, and the common noise value is $e^0$.
We consider an independent and identically distributed (i.i.d.) sequence of random variables $\varepsilon^0_1, \dots, \varepsilon^0_{H-1} \in \cE^0$ for the common noise. We denote the common noise trajectory as $\varepsilon^0 = (\varepsilon^0_1, \dots, \varepsilon^0_{H-1})$. For any $\pi \in \Pi$, we define the mean field sequence $\rho^{\pi,\varepsilon^0}_\theta = (\rho^{\pi,\varepsilon^0}_{\theta,0}, \rho^{\pi, \varepsilon^0}_{\theta,1}, \dots, \rho^{\pi,\varepsilon^0}_{\theta,H-1})$ by $\rho^{\pi,\varepsilon^0}_{\theta,0} = \rho_0$ and
\begin{equation}
\label{eq:mf-evol-pi}
        \rho^{\pi,\varepsilon^0}_{\theta, i+1} = \Phi_{\theta}(\rho^{\pi,\varepsilon^0}_{\theta, i}, \pi_i \,; \, \rho^{\pi,\varepsilon^0}_{\theta, i}, \varepsilon^0_{i+1}), \qquad i \in [H-2].
\end{equation}
This sequence represents the evolution of the population distribution when all agents follow policy $\pi$ and are affected by the common noise trajectory $\varepsilon^0_{1:H-1}$. For any pair of policies $(\pi, \pi') \in \Pi^2$, we define the mean field sequence $\rho^{\pi, \pi', \varepsilon^0}_\theta= (\rho^{\pi, \pi', \varepsilon^0}_{\theta,0}, \rho^{\pi, \pi', \varepsilon^0}_{\theta,1}, \dots, \rho^{\pi, \pi', \varepsilon^0}_{\theta,H-1})$ by: $\rho^{\pi, \pi', \varepsilon^0}_{\theta,0} = \rho_0$,
\begin{equation}
\label{eq:mf-evol-pi-pi'}
        \rho^{\pi, \pi', \varepsilon^0}_{\theta, i+1} = \Phi_{\theta}(\rho^{\pi, \pi', \varepsilon^0}_{\theta, i}, \pi'_i; \rho^{\pi, \varepsilon^0}_{\theta, i}, \varepsilon^0_{i+1}), \qquad i \in [H-2].
\end{equation}
This sequence represents the evolution of the state distribution of a \emph{single agent} following policy $\pi'$ when all other agents follow policy $\pi$ and are affected by the common noise trajectory $\varepsilon^0$. %

\noindent\textbf{Best response and Nash equilibrium.} 
When the population of agents follows the policy $\pi$, a single agent wishes to find a \defi{best response} to $\pi$, i.e., a policy $\pi'$ maximizing the \defi{value function}:
\onlyin{venue}{\vspace{-5pt}}
\begin{equation}
    \label{eq:V_theta}
    V_\theta(\pi'; \pi) = \mathbb{E} \Big[ \sum_{i = 0}^{H-1} r_\theta(X_i, \alpha_i, \rho^{\pi, \varepsilon^0}_{\theta,i}) \Big],
\end{equation}
where the process is defined by $X_0 \sim \rho_0$, $X_{i+1} \sim P_\theta(\cdot \vert X_i, \alpha_i, \rho^\pi_{\theta,i}, \varepsilon^0_{i+1})$, and $\alpha_i \sim \pi'(\cdot \vert X_i, \rho^\pi_{\theta,i})$. %
Note that under these conditions, the law of the state $X_i$ (conditioned on the common noise) is $\rho^{\pi, \pi', \varepsilon^0}_{\theta,i}$.  
We denote the law of state-action pair $(X_i, \alpha_i)$ by $\mu^{\pi, \pi',\varepsilon^0}_{ \theta, i}\in \Delta_{\cX\times \cA}$.
If $\pi = \pi'$, we will simply denote $\mu^{\pi, \varepsilon^0}_{\theta, i} = \mu^{\pi, \pi, \varepsilon^0}_{\theta, i}$.  In the sequel, we will often omit to write explicitly the common noise $\varepsilon^0$, but the sequences $\rho$ and $\mu$ are random variables due to $\varepsilon^0$.

\begin{definition}
    Let $\theta \in \bar \Theta$. For $\pi \in \Pi$, the set of \defi{best responses} to the PMFG $\cM_\theta$ is $\BR_\theta(\pi) = \arg \max_{\pi' \in \Pi} V_\theta(\pi'; \pi)$. $\pi \in \Pi$ is a \defi{(mean field) Nash equilibrium} (NE) for $\cM_\theta$ if $\pi \in \BR_\theta(\pi)$.
\end{definition}
\onlyin{venue}{\vspace{-5pt}}
This is a fixed-point condition: it implies that when the entire population follows policy $\pi$, the resulting mean-field flow $\rho^{\pi}_\theta$ creates a reward and transition structure such that the same policy $\pi$ remains optimal for any individual agent.

\begin{assumption}
    \label{assumption:uniqueness_nash}
    For every $\theta \in \bar \Theta$, if $\pi$ and $\pi'$ are two Nash equilibria to $\cM_\theta$, then $\cL(\mu^{\pi}_\theta) = \cL(\mu^{\pi'}_\theta)$. 
\end{assumption}
\onlyin{venue}{\vspace{-5pt}}
The existence and uniqueness of the NE for finite-state, discrete-time MFGs has been proved under various assumptions; see e.g.~\cite{cui2021approximately} for the finite-horizon setting and~\cite{saldi2018markov,anahtarci2020qregmfg} for the infinite horizon discounted setting. Moreover, our approach only requires uniqueness of the law of the state-action mean field $\cL(\mu_{\theta}^*)$. In the sequel we work under Assumption~\ref{assumption:uniqueness_nash} and we will focus solely on equality of laws. For this reason, we are not interested in a specific policy being a NE, and we will denote by $\rho_\theta^* = \rho_\theta^{\pi_\theta^*,\varepsilon^0}$ (resp. by $\mu_\theta^* = \mu_\theta^{\pi_\theta^*,\varepsilon^0}$) a mean field trajectory (resp. the state-action mean field trajectory) associated to some NE policy $\pi_\theta^*$. Both are random variables due to the common noise.

\noindent\textbf{Observations.} 
We will consider two types of observations: the observation of state-action mean-fields trajectories $\mu$, and the observation of only state mean-fields trajectories $\rho$. To have a common framework and simplify the notation, we will denote the observations by $\xi$ and specify case by case whether it represents $\mu$ or $\rho$. We denote the space of $\xi$ by $\Xi$, referring either to $\Delta_\cX$ or $\Delta_{\cX\times \cA}$. For any $\theta \in \bar \Theta$, we will write $\xi_\theta^*$ a realization of the state or state-action mean-fields trajectory, associated to the unique law of the NE (Assumption~\ref{assumption:uniqueness_nash}). 

In the following, we will consider situations where the observer has only access to some time steps in the mean-fields trajectories. For this purpose, let us introduce the following notation: for any set of accessible time steps from the observer $\cI \subset [H-1]$ and any type of observation $\xi \in \Xi$, let $\xi_{\cI} = \{\xi_{ i}\}_{i \in \cI}$. Recalling that $\xi^*_{\theta, \cI}$ is random due to the common noise, let $\PP_{\theta, \cI}$ denote the probability measure over $\Xi^{\vert \cI\vert}$ of $\xi^*_{\theta, \cI}$. 

\subsection{Objectives} 

Our objective is as follows. Let $\theta_0$ be the "true parameter" unknown to the observer, $\cI \subset [H-1]$ be the subset of accessible snapshots. Assume that an observer sees realizations $\hat \xi_{\theta_0, \cI}^1, \dots , \hat \xi_{\theta_0, \cI}^N \in \Xi^{\vert \cI\vert }$, where for any $n \in [N]^*$, $\hat \xi_{\theta_0, \cI}^n$ has been generated from $\pi_{\theta_0}^*$. The observer tries to recover a set of parameters that is consistent with the observations. We study four questions:
\begin{mdframed}[style=paperbox, backgroundcolor=green!5, linecolor=green!80]
\begin{enumerate}[
  label=\textbf{(Q\arabic*)},
  ref=\textbf{(Q\arabic*)},
  noitemsep,
  topsep=0pt,
  leftmargin=*
]
    \item\label{theory-q1} \textit{How should consistency and identifiability be formalized in the presence of common noise?}
    \item\label{theory-q2} \textit{Can the observer recover a parameter $\theta$ consistent with the observed trajectory?}
    \item\label{theory-q3} \textit{Can the observer recover the NE policy $\pi_{\theta_0}^*$?}
    \item\label{theory-q4} \textit{Can the observer reconstruct a best response against the NE policy $\pi_{\theta_0}^*$?}
\end{enumerate}
\end{mdframed}

We address \ref{theory-q1} in the next section. To answer \ref{theory-q2}-\ref{theory-q4} we propose \textbf{two Bayesian approaches}: i) The \defi{maximum a posteriori (MAP)} approach, ii) The \defi{Bayesian model averaging (BMA)} approach in Sec.~\ref{sec:bays}.
\onlyin{venue}{\vspace{-5pt}}
\section{Consistency and identifiability}
\label{sec:consistency-identifiability}
\onlyin{venue}{\vspace{-5pt}}
Injectivity from the parameter space to the observation space is often not verified. Consequently, we first introduce the notions of consistency and identifiability as necessary conditions for the inverse problem of MFGs to be well-posed.

\noindent\textbf{Parameter consistency.} Suppose an observer witnesses realizations of $\xi^*_{\theta_0, \cI}$ for some unknown $\theta_0 \in \bar \Theta$, without knowledge of the specific common noise realization $\varepsilon^0_{1:H-1}$. The observer may seek a parameter to explain these observations; however, the latent nature of the common noise path forces them to consider equality in law. 
\begin{definition}
\label{def:consistency}
Let $\theta, \theta' \in \bar \Theta$. $\theta$ and $\theta'$ are \defi{consistent with respect to $\cI$ for the observation $\xi$} if and only if $\PP_{\theta, \cI} = \PP_{\theta', \cI}$. With a slight abuse of terminology, we say that $\theta$ is \defi{consistent with respect to $\xi^*_{\theta', \cI}$} if and only if $\theta$ and $\theta'$ are consistent with respect to $\cI$ for the observation $\xi$.
\end{definition}
This notion of consistency, which refers to the equality of the laws of the observations, should not be confused with the statistical consistency of an estimator.

\begin{remark}
    Even if the dependency has not been made explicit in the last definition, the consistency is dependent on the type of observations considered ($\xi = \rho$ or $\xi = \mu$). 
\end{remark}

Def.~\ref{def:consistency} provides a natural framework for identifiability. We seek to address the following fundamental question: \textit{Does recovering a parameter that is consistent with respect to $\xi$ allow for the recovery of the original NE $\pi_{\theta_0}^*$?} We give a natural situation where recovering a consistent parameter is equivalent to finding a NE, and we provide the proof in App.~\ref{appendix:identifiability}.

\begin{theorem}
    \label{thm:consistency}
    Consider the observation $\xi = \mu$ with $\cI = [H-1]$. Suppose that Assumption~\ref{assumption:uniqueness_nash} holds, that $P_\theta = P$ for every $\theta \in \bar\Theta$, and that there is a measurable mapping $\Lambda$, not depending on $\theta$, such that $\varepsilon^0 = \Lambda(\mu^*_{\theta, \cI})$ $\PP$-a.s. for every $\theta \in \bar\Theta$. Then, for any $\theta, \theta' \in \bar \Theta$, if $\PP_{\theta, \cI} = \PP_{\theta', \cI}$, then any NE $\pi^*_\theta$ for $\cM_\theta$ is a NE for $\cM_{\theta'}$.
\end{theorem}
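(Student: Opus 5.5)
The plan is to compare $\pi^*_\theta$ with an equilibrium of $\cM_{\theta'}$ pathwise in the common noise. We show that the two policies generate the same population flow almost surely, and then use the fact that, with $P_\theta=P$, the best-response problem depends on the population policy only through that flow. We fix an NE $\pi' := \pi^*_{\theta'}$ of $\cM_{\theta'}$, assuming its existence as is implicit in Assumption~\ref{assumption:uniqueness_nash} and in the definition of $\PP_{\theta',\cI}$.

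\textbf{Step 1: pathwise equality of the flows.} For a fixed policy and fixed $P$, the recursion \eqref{eq:mf-evol-pi} is deterministic given the noise path. So there are measurable maps $F,F'$ with $\mu^{\pi^*_\theta,\varepsilon^0} = F(\varepsilon^0)$ and $\mu^{\pi',\varepsilon^0} = F'(\varepsilon^0)$; the $\mu$ flow is obtained from the $\rho$ flow and the policy. Because $\varepsilon^0=\Lambda(\mu^*_{\theta,\cI})$ a.s. with the same $\Lambda$ for both parameters, the pair $(\varepsilon^0,\mu^*_\theta)$ is a.s. equal to $(\Lambda(\mu^*_\theta),\mu^*_\theta)$, which is a measurable image of $\mu^*_\theta$. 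The same holds for $\theta'$.

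Since $\PP_{\theta,\cI}=\PP_{\theta',\cI}$ with $\cI=[H-1]$, the joint laws of $(\varepsilon^0,F(\varepsilon^0))$ and $(\varepsilon^0,F'(\varepsilon^0))$ coincide. Hence the law of $(\varepsilon^0,F(\varepsilon^0),F'(\varepsilon^0))$ puts all its mass on the diagonal of the last two coordinates, which gives $F(\varepsilon^0)=F'(\varepsilon^0)$ a.s. One clean way to see this: for any bounded measurable $g$, $\EE[g(\varepsilon^0)h(F(\varepsilon^0))]=\EE[g(\varepsilon^0)h(F'(\varepsilon^0))]$. The regular conditional law of the second component given the first is therefore both $\delta_{F(\varepsilon^0)}$ and $\delta_{F'(\varepsilon^0)}$, so these Dirac masses agree a.s. Taking $x$-marginals, the state flows also agree: $\rho^{\pi^*_\theta,\varepsilon^0}_{i}=\rho^{\pi',\varepsilon^0}_{i}$ for all $i$, a.s.

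\textbf{Step 2: identical best-response problems.} Take any deviation $\pi''\in\Pi$. The value $V_{\theta'}(\pi'';\pi)$ in \eqref{eq:V_theta} depends on $\pi$ only through the flow $\rho^{\pi,\varepsilon^0}_{\theta'}$: it enters the reward $r_{\theta'}$, the transition $P$, which is the same for all parameters, and the argument of $\pi''$. Conditioning on $\varepsilon^0$ and using \eqref{eq:mf-evol-pi-pi'}, we get
\[
V_{\theta'}(\pi'';\pi)=\EE\Big[\sum_{i=0}^{H-1}\sum_{x,a}\mu^{\pi,\pi'',\varepsilon^0}_{i}(x,a)\,r_{\theta'}\big(x,a,\rho^{\pi,\varepsilon^0}_i\big)\Big],
\]
where $\mu^{\pi,\pi'',\varepsilon^0}$ is a measurable function of $(\rho^{\pi,\varepsilon^0},\varepsilon^0)$ and $\pi''$. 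By Step 1, $V_{\theta'}(\pi'';\pi^*_\theta)=V_{\theta'}(\pi'';\pi')$ for every $\pi''$. Taking $\pi''=\pi^*_\theta$ on the left and $\pi''=\pi'$ on the right, the a.s. equality $F=F'$ of the full state-action flows also gives $V_{\theta'}(\pi^*_\theta;\pi^*_\theta)=V_{\theta'}(\pi';\pi')$. Combining these,
\[
V_{\theta'}(\pi^*_\theta;\pi^*_\theta)=V_{\theta'}(\pi';\pi')=\max_{\pi''}V_{\theta'}(\pi'';\pi')=\max_{\pi''}V_{\theta'}(\pi'';\pi^*_\theta),
\]
so $\pi^*_\theta\in\BR_{\theta'}(\pi^*_\theta)$, which is the claim.

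\textbf{Main obstacle.} The delicate step is Step 1: turning equality in law of the observations into almost sure equality of the flows as functions of the latent noise. The hypothesis that $\Lambda$ does not depend on $\theta$ is exactly what makes the joint law of (noise, flow) determined by the law of the flow. I would write the argument with regular conditional distributions on the Polish space $(\cE^0)^{H-1}$. A secondary point to check is that $\rho^{\pi,\pi'',\varepsilon^0}$ is a jointly measurable function of the population flow and the noise, so that Step 2's substitution is legitimate; this is immediate from the finite-state recursion \eqref{eq:mf-evol-pi-pi'}.
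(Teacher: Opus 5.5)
Your proof is correct, but it reaches the conclusion by a different route than the paper. The paper never shows that the equilibrium flows coincide pathwise. It proceeds as follows:
\begin{itemize}
\item It writes $V_{\theta'}(\pi;\pi^*_\theta)$ and $V_{\theta'}(\pi;\pi^*_{\theta'})$ as $\EE[\Psi(\rho^*_\theta,\varepsilon^0)]$ and $\EE[\Psi(\rho^*_{\theta'},\varepsilon^0)]$, with a single functional $\Psi$ depending only on $(\theta',\pi)$.
\item It replaces $\varepsilon^0$ by $\Lambda(\mu^*)$ to get $\EE[\tilde\Psi(\mu^*_\theta)]$ and $\EE[\tilde\Psi(\mu^*_{\theta'})]$, and concludes from $\cL(\mu^*_\theta)=\cL(\mu^*_{\theta'})$.
\item It treats the equilibrium values the same way.
\end{itemize}

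You instead first extract a stronger intermediate fact: $\mu^*_\theta=\mu^*_{\theta'}$ almost surely as functions of the common noise, because the graph measures of $F$ and $F'$ coincide. After that, both value identities are pathwise and immediate.

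Both arguments use the same two ingredients. The first is $\theta$-independence of $P$, so that values depend on the population policy only through its flow and the noise. The second is $\theta$-independence of $\Lambda$, so that the joint law of (noise, flow) is determined by the law of the flow.

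What each approach buys:
\begin{itemize}
\item The paper's version is slightly more economical: it only pushes one measurable map through two equal laws, staying at the level of laws, as in the definition of consistency.
\item Yours gives a more structural statement. Under these hypotheses, consistency forces the two equilibria to generate literally the same flow, hence identical best-response problems in $\cM_{\theta'}$ path by path. This also makes visible how strong the noise-recoverability assumption is.
\end{itemize}

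Finally, your Step~1 does not need regular conditional distributions. The graph of the measurable map $F$ is a measurable set, so
\[
\PP\big((\varepsilon^0,F'(\varepsilon^0))\in\mathrm{graph}(F)\big)=\PP\big((\varepsilon^0,F(\varepsilon^0))\in\mathrm{graph}(F)\big)=1,
\]
which already gives $F'(\varepsilon^0)=F(\varepsilon^0)$ almost surely.
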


\begin{remark}
    \label{remark:shared-state}
    The common noise is recoverable from the observation in particular when it is carried as a \defi{shared state component}. This is the case, e.g., of the Beach Bar process of \cite{perrin2020fictitious}, in which the bar is alternatively open and closed , the status of the bar being shared by all the agents, and of the electricity model of App.~\ref{appendix:energy}. Without common noise, the condition is trivially satisfied.
\end{remark}

\noindent\textbf{Parameter identifiability.}  Another natural question is: \textit{When does consistency imply equality?} Consider a family of PMFGs where consistency is equivalent to equality. Then, the problem of recovering the equilibrium is equivalent to the problem of recovering a parameter consistent with some observations. We highlight that these problems have not been considered in previous work in the context of MFGs. We introduce the definition of \defi{Identifiable Family of PMFG}, prove a result to characterize such models, and provide an example in App.~\ref{appendix:identifiability}.

\begin{definition}
    \label{def:identifiability}
    The family of PMFGs $\{\cM_\theta\}_{\theta \in \bar \Theta}$ is \defi{identifiable with respect to $\cI$ for the observation $\xi$} if for any parameters $\theta, \theta' \in \bar \Theta$, their consistency w.r.t. $\cI$ for the observation $\xi$ implies their equality. 
\end{definition}

To discuss when Definition~\ref{def:identifiability} is satisfied, we extend the mapping $\BR$ to laws of partial mean-field trajectories.
We will use the following notation: $\fP_{\theta, \cI} =  \{ \PP_{\cI}'\in \cP(\Delta^{\vert \cI\vert }_\cX) \mid \, \exists \pi \in \Pi : \cL(\rho_{\theta, \cI}^{\pi}) = \PP_\cI'\}$ which is the set of all state distributions that can be generated by some policy.
For any $\theta \in \bar \Theta$, for any $\PP_{\cI} \in \fP_{\theta, \cI}$, we define 
$\BB \RR_\theta (\PP_\cI) = \{\cL(\rho^{\pi, \pi'}_{\theta, \cI}) \mid \, \pi'\in \BR_\theta(\pi), \, \pi \in \Pi : \cL(\rho_{\theta, \cI}^{\pi}) = \PP_\cI \}.$
Notice that the output of this mapping is not a set of policies, but a set of distributions of mean field trajectories. The definition takes the union over all policies $\pi$ inducing the law $\PP_\cI$, and all the policies $\pi' \in \BR_\theta(\pi)$ so that $\BB\RR_\theta(\PP_\cI)$ is well-defined. Theorem~\ref{thm:identifiability} gives what we believe is the first result on identifiability for inverse problem in discrete-time MFGs with common noise. The proof is provided in App.~\ref{appendix:identifiability}.

\begin{theorem}
    \label{thm:identifiability}
    If for any $\theta \neq \theta'\in \bar \Theta$, for any $\PP_{\cI} \in \fP_{\theta, \cI} \cap \fP_{\theta', \cI} $,  we have 
    $
    \BB\RR_\theta(\PP_\cI)\cap \BB\RR_{\theta'}(\PP_\cI) = \emptyset, 
    $
    then $\{\cM_\theta\}_{\theta \in \bar \Theta}$ is identifiable. 
\end{theorem}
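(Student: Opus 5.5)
We argue by contraposition: if two distinct parameters are consistent, the disjointness hypothesis fails. So fix $\theta \neq \theta' \in \bar\Theta$ with $\PP_{\theta,\cI} = \PP_{\theta',\cI}$, where the observation is the state mean field $\xi = \rho$; the state-action case reduces to this by marginalising out the action, since $\PP_{\theta,\cI}$ for $\mu$ determines the law of $\rho_{\theta,\cI}$. Our goal is to exhibit a common element of $\BB\RR_\theta(\PP_\cI)$ and $\BB\RR_{\theta'}(\PP_\cI)$ for the common law $\PP_\cI := \PP_{\theta,\cI}$.

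The first step is to check that $\PP_\cI$ lies in both $\fP_{\theta,\cI}$ and $\fP_{\theta',\cI}$. Take a NE $\pi^*_\theta$ of $\cM_\theta$, which exists implicitly by Assumption~\ref{assumption:uniqueness_nash} and the notation $\rho^*_\theta$. Then $\cL(\rho^{\pi^*_\theta}_{\theta,\cI}) = \PP_{\theta,\cI} = \PP_\cI$, so $\PP_\cI \in \fP_{\theta,\cI}$. Symmetrically, with a NE $\pi^*_{\theta'}$ of $\cM_{\theta'}$ we have $\cL(\rho^{\pi^*_{\theta'}}_{\theta',\cI}) = \PP_{\theta',\cI} = \PP_\cI$, so $\PP_\cI \in \fP_{\theta',\cI}$. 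Hence $\PP_\cI$ is in the intersection on which the hypothesis applies.

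The second step is to produce a common best-response law. Since $\pi^*_\theta \in \BR_\theta(\pi^*_\theta)$, taking $\pi = \pi' = \pi^*_\theta$ in the definition gives $\cL(\rho^{\pi^*_\theta,\pi^*_\theta}_{\theta,\cI}) \in \BB\RR_\theta(\PP_\cI)$. By the definition of the single-agent flow, $\rho^{\pi,\pi}_\theta = \rho^\pi_\theta$, so this element equals $\cL(\rho^{\pi^*_\theta}_{\theta,\cI}) = \PP_\cI$. The same reasoning with $\pi^*_{\theta'}$ gives $\PP_\cI \in \BB\RR_{\theta'}(\PP_\cI)$. The intersection is therefore nonempty, contradicting the hypothesis. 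Hence consistency forces $\theta = \theta'$, which is identifiability in the sense of Definition~\ref{def:identifiability}.

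The only delicate point is the identity $\rho^{\pi,\pi}_\theta = \rho^\pi_\theta$, i.e.\ that a fixed point of $\BB\RR_\theta$ is produced by a NE. This follows by induction on $i$ from \eqref{eq:mf-evol-pi} and \eqref{eq:mf-evol-pi-pi'}, since both recursions coincide when $\pi' = \pi$, pathwise for each realisation of the common noise $\varepsilon^0$. One must also be careful that the definition of $\BB\RR_\theta$ matches the observation type. If $\xi = \mu$ is used, the same argument goes through with $\mu^{\pi,\pi}_\theta = \mu^\pi_\theta$. The existence of a NE for each $\theta$ is taken as part of the standing framework.
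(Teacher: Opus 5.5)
Your proposal is correct and follows the paper's own argument. The paper also argues by contradiction, passes to the common law of the equilibrium state flow (marginalising out actions when $\xi=\mu$), and observes that this law belongs to both $\BB\RR_\theta(\PP_\cI)$ and $\BB\RR_{\theta'}(\PP_\cI)$ through the respective Nash equilibria. You additionally make explicit two points the paper leaves implicit: that this common law lies in $\fP_{\theta,\cI}\cap\fP_{\theta',\cI}$, and that $\rho^{\pi,\pi}_\theta=\rho^{\pi}_\theta$.
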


\begin{corollary}
    \label{corollary:fP-cap-fP'=empty}
    If for any $\theta \neq \theta'\in \bar\Theta$,  $\fP_{\theta, \cI} \cap \fP_{\theta', \cI} = \emptyset$, then $\{\cM_\theta\}_{\theta \in \bar \Theta}$ is identifiable.
\end{corollary}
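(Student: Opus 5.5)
The corollary follows directly from Theorem~\ref{thm:identifiability}, whose hypothesis is a statement quantified over $\PP_\cI \in \fP_{\theta, \cI} \cap \fP_{\theta', \cI}$. I plan to show that the assumption of the corollary makes this hypothesis vacuously true.

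Fix $\theta \neq \theta' \in \bar\Theta$. By assumption, $\fP_{\theta, \cI} \cap \fP_{\theta', \cI} = \emptyset$. Hence there is no $\PP_\cI$ for which the condition $\BB\RR_\theta(\PP_\cI)\cap \BB\RR_{\theta'}(\PP_\cI) = \emptyset$ must be checked. The universally quantified hypothesis of Theorem~\ref{thm:identifiability} therefore holds trivially for every pair $\theta \neq \theta'$. Applying that theorem gives that $\{\cM_\theta\}_{\theta \in \bar\Theta}$ is identifiable.

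As a sanity check independent of the theorem, I would also confirm the conclusion directly. Suppose $\theta$ and $\theta'$ are consistent, so $\PP_{\theta,\cI} = \PP_{\theta',\cI}$. A NE policy $\pi^*_\theta$ generates the law $\PP_{\theta,\cI} = \cL(\rho^{\pi^*_\theta}_{\theta,\cI})$, so $\PP_{\theta,\cI} \in \fP_{\theta,\cI}$, and similarly $\PP_{\theta',\cI} \in \fP_{\theta',\cI}$. Equality of the two laws then puts a common element in $\fP_{\theta,\cI} \cap \fP_{\theta',\cI}$. This contradicts the assumption unless $\theta = \theta'$.

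One subtlety needs care: $\fP$ is defined through state trajectories $\rho$, while consistency may be stated for $\xi = \mu$. If the observation is $\mu$, equality of the laws of $\mu_\cI$ implies equality of the laws of their state marginals $\rho_\cI$, since marginalization is a measurable map. So the argument above applies in both cases. Apart from this point, there is no real obstacle; the proof is essentially a one-line reduction.
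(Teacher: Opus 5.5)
Your proposal is correct and matches the paper's intended argument: the paper gives the corollary no separate proof, because an empty intersection $\fP_{\theta, \cI} \cap \fP_{\theta', \cI}$ makes the hypothesis of Theorem~\ref{thm:identifiability} vacuously true. Your direct sanity check, including the passage to state marginals when $\xi = \mu$, reproduces the key step of the paper's own proof of Theorem~\ref{thm:identifiability}.
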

\onlyin{venue}{\vspace{-15pt}}

\section{Bayesian setting}
\label{sec:bays}
\onlyin{venue}{\vspace{-7pt}}

\subsection{Maximum a posteriori and Bayesian model averaging estimators}
To use the Bayesian approach, the observer wishes to consider a prior $q$ over $\bar \Theta$, such that $q(\theta)>0$ for each $\theta \in \bar \Theta$. Let $\cI \subset [H-1]$ be the time steps accessible to the observer. Recall that $\PP_{\theta, \cI}$ denotes the probability measure over $\Xi^{\vert \cI\vert}$ of $\xi^*_{\theta, \cI}$. Let $\Theta$ be a $\bar\Theta$-valued random variable with law $q$, and let $\PP_{\cI} = \int_{\bar \Theta } \PP_{\theta, \cI}q(\theta)\lambda(d\theta)$, which is the probability measure of $\xi^*_{\Theta, \cI}$. To define a likelihood, we need $\PP_{\theta, \cI} \ll \PP_{\cI}$, which is ensured by the following assumption.
\begin{assumption}
    \label{assumption:likelihood}
    One of the following conditions holds:
    \textbf{(C1)} $\bar\Theta$ is at most countable and $\lambda$ is the counting measure;
    \textbf{(C2)} for any $A \in \cB(\Xi^{\vert \cI\vert})$, the mapping $\theta \mapsto \PP_{\theta, \cI}(A)$ is continuous on $\bar\Theta$, and $\lambda(U) > 0$ for any non-empty open subset $U$ of $\bar\Theta$.
\end{assumption}
\begin{lemma}
    \label{lemma:likelihood}
    Under Assumption~\ref{assumption:likelihood}, for every $\theta \in \bar\Theta$, $\PP_{\theta, \cI} \ll \PP_{\cI}$, and the Radon-Nikodym derivative $\Gamma_{\theta,\cI} = d\PP_{\theta, \cI}/d\PP_{\cI}$ exists.
\end{lemma}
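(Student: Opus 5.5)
Under either condition of Assumption~\ref{assumption:likelihood}, the plan is to show that any Borel set $A \subset \Xi^{\vert \cI\vert}$ with $\PP_{\cI}(A) = 0$ also satisfies $\PP_{\theta, \cI}(A) = 0$ for every $\theta \in \bar\Theta$. Existence of $\Gamma_{\theta,\cI}$ then follows from the Radon--Nikodym theorem, since $\PP_{\theta,\cI}$ and $\PP_{\cI}$ are both finite measures (probability measures) on $\Xi^{\vert \cI\vert}$. A preliminary point must be settled first: the mixture $\PP_{\cI}(A) = \int_{\bar\Theta} \PP_{\theta,\cI}(A) q(\theta)\lambda(d\theta)$ is a probability measure. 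This requires $\theta \mapsto \PP_{\theta,\cI}(A)$ to be measurable. Under (C1) this is automatic, and under (C2) it follows from continuity. Countable additivity then comes from monotone convergence, and $\PP_\cI(\Xi^{|\cI|}) = \int q\, d\lambda = 1$ because $q$ is a density of $\Theta$.

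\textbf{Case (C1).} Here $\PP_{\cI}(A) = \sum_{\theta' \in \bar\Theta} q(\theta') \PP_{\theta',\cI}(A)$. If this sum is zero, every term vanishes because all terms are nonnegative. Since $q(\theta) > 0$ for each $\theta$, we get $\PP_{\theta,\cI}(A) = 0$ for each $\theta$, as required.

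\textbf{Case (C2).} Fix $A$ with $\PP_\cI(A)=0$ and suppose, for contradiction, that $\PP_{\theta_1,\cI}(A) = c > 0$ for some $\theta_1$. By continuity of $\theta \mapsto \PP_{\theta,\cI}(A)$, there is a relatively open neighbourhood $U$ of $\theta_1$ in $\bar\Theta$ on which $\PP_{\theta,\cI}(A) > c/2$. Then
\[
0 = \PP_\cI(A) \ge \tfrac{c}{2}\int_U q(\theta)\,\lambda(d\theta).
\]
Because $q > 0$ everywhere and $\lambda(U) > 0$, the integral of a strictly positive measurable function over a set of positive measure is strictly positive. This gives the contradiction.

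The main obstacle is this last step, and in particular the measurability and positivity of $q$. Strict positivity of $q$ pointwise, together with $\lambda(U)>0$, is sufficient, since $U = \bigcup_k \{q > 1/k\}\cap U$ forces some $\{q>1/k\}\cap U$ to have positive measure. I will also check that "open" is understood relative to $\bar\Theta$, matching the topology in which continuity is assumed.
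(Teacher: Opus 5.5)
Your proposal is correct and follows essentially the same argument as the paper. Under (C1) you use nonnegativity of the terms in the countable sum. Under (C2) you use continuity to get a relatively open set of positive $\lambda$-measure on which $\PP_{\theta,\cI}(A)$ is bounded away from zero, and then conclude by Radon--Nikodym; the paper instead shows directly that $U_A=\{\theta:\PP_{\theta,\cI}(A)>0\}$ is open and $\lambda$-null, hence empty. Your extra check that $\PP_{\cI}$ is a well-defined probability measure, via measurability of $\theta\mapsto\PP_{\theta,\cI}(A)$, is a detail the paper leaves implicit.
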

The proof is provided in App.~\ref{appendix:likelihood}.
Suppose that the observer has access to $N\in \NN^*$ independent observations $\hat \xi_{\theta_0, \cI}^1, \dots , \hat \xi_{\theta_0, \cI}^N $ of the possibly truncated mean field trajectory at equilibria of $\cM_{\theta_0}$, where $\theta_0\in \bar \Theta$ is the true parameter. If the observer has full knowledge of $\Gamma_{\theta, \cI}$ for any $\theta \in \bar \Theta$, then they can compute the posterior associated to $\Theta$ using Bayes' rule:
\onlyin{venue}{\vspace{-5pt}}
\begin{equation}
    \label{eq:barqN}
    \bar q_N(\theta ) = q(\theta)  \frac{ \prod_{n = 1}^N \Gamma_{\theta, \cI}(\hat \xi_{\theta_0, \cI}^n)}{\Gamma_{\cI}^N(\hat \xi_{\theta_0, \cI}^1, \dots, \hat \xi_{\theta_0, \cI}^N)}.
\end{equation}
where $\Gamma_{\cI}^N(\xi_{\cI}^{*,1},\dots, \xi_{\cI}^{*,N} ) = \int_{\bar \Theta} q(\theta)\prod_{n = 1}^N \Gamma_{\theta, \cI}(\xi_{\cI}^{*,n}) \lambda(d\theta) $. The observer can compute the \defi{MAP estimator}
$
    \hat \theta^{\mathrm{MAP}}_N = \arg \max_{\theta \in \bar \Theta}   \bar q_N(\theta).
$
Then, they can use this recovered parameter to compute the associated NE of $\cM_{\hat \theta^{\mathrm{MAP}}_N}$ and sample new mean-field trajectories. But they could also use the posterior distribution 
to construct an averaged policy $\pi_{\bar q_N}$ that would correspond to a \defi{BMA method},
$$
\pi_{\bar q_N}(a \vert x, \rho ) = \EE_{\Theta \sim  \bar q_N }\left[\pi_\theta^* (a \vert x, \rho)\right], \quad \forall(x, a, \rho) \in \cX\times \cA\times \Delta_\cX.
$$
The cornerstone of this method is to be able to compute $\Gamma_{\theta, \cI}$ for any $\theta \in \bar \Theta$. A naive approach would be to compute a NE on each node of a grid over $\bar \Theta$. But computing a single NE is computationally expensive \cite{perrin2020fictitious,perolat2022scaling,perrin2022generalization}, and this approach would not work: we show quantitatively in App.~\ref{appendix:grid-search} that already for a one-dimensional parameter, a single BE-MFG solve is cheaper than a grid of dedicated solves and recovers the parameter more accurately, and that a grid search becomes out of reach in higher dimension. To address this issue, we propose to compute a NE for an augmented MFG.

\subsection{Bayesian-extended mean field games}
\label{sec:BEMFG}
We now introduce the concept of BE-MFG, which can be interpreted as a game in which a parameter $\Theta \sim q$ is sampled once at time $0$ and is common to all agents. This parameter remains fixed throughout the game and is observed by all agents. Conditionally on $\Theta = \theta$, the game reduces exactly to the PMFG $\cM_\theta$. 
Solving for a NE in the BE-MFG will allow us to recover, simultaneously for all $\theta$, the corresponding equilibrium policies.
\onlyin{venue}{\vspace{-6pt}}

\subsubsection{Definitions}

Let us start by introducing the augmented dynamic kernel $\bar P$ and reward $\bar r$. For any $(\theta, x, a, \bar \rho, e^0,\theta ', x') \in \bar \Theta\times \cX \times \cA \times \cP(\bar \Theta \times \cX) \times \cE^0 \times \bar \Theta\times \cX$, we define
\begin{equation}
    \label{eq:barP-barr}
    \bar P(\theta',x'\mid\theta,x,a,\bar\rho,e^0) = \delta_{\theta=\theta'}P_\theta\!\left(x'\mid x,a,\textstyle\int_{\bar\Theta}\bar\rho(d\theta'',\cdot),e^0\right), \; \bar r(\theta,x,a,\bar\rho) = r_\theta\!\left(x,a,\textstyle\int_{\bar\Theta}\bar\rho(d\theta',\cdot)\right).
\end{equation}
Let $q$ be a probability distribution on $(\bar\Theta, \mathcal{B}(\bar\Theta))$. The \defi{Bayesian-Extended Mean Field Game (BE-MFG)} associated to the family of PMFGs $\{\cM_\theta\}_{\theta \in \bar \Theta}$ and the prior $q$ is the tuple $\bar{\cM}_q = (\bar\Theta \times \mathcal{X}, \mathcal{A}, \mathcal{E}^0, \nu^0, \bar{P}, \bar{r}, H, Q^{\rho_0})$, where $Q^{\rho_0} = q \circ (\delta_{\cdot} \otimes \rho_0)^{-1}$ is the pushforward of $q$ under the map $\theta \mapsto \delta_\theta \otimes \rho_0$, which defines a probability measure on $\mathcal{P}(\bar\Theta \times \cX)$. We define a \defi{Bayesian strategy} as a measurable mapping from $\bar \Theta \times \cX \times \cP(\bar \Theta \times \cX) $ to $\Delta_{\cA}$. The set of Bayesian strategies is denoted $\bar \Pi_0$. We define a  \defi{Bayesian policy} as a sequence of $H$ Bayesian strategies. A \defi{Bayesian policy} is denoted by $\bar \pi = (\bar \pi_0, \dots, \bar \pi_{H-1})$, and the set of all such policies is denoted by $\bar \Pi$. 

The distribution evolution, value function, best responses and NE of the BE-MFG are defined exactly as for the PMFG (Sec.~\ref{sec:pb-setting}), replacing the state space $\mathcal{X}$ by the augmented state space $\bar\Theta\times\mathcal{X}$, the mappings $P_\theta$, $r_\theta$ by $\bar P$,$\bar r$, and the policy class $\Pi$ by $\bar\Pi$. A NE of $\bar{\cM}_q$ is called a \defi{Bayesian-Extended Nash equilibrium (BE-NE)}. Full details are provided in Appendix~\ref{appendix:be-mfg-definitions}. We now proceed to the link between the original PMFG and the BE-MFG. 
\onlyin{venue}{\vspace{-5pt}}

\subsubsection{Link between the models}

 Since $\Theta$ is common to all agents, the evolution of the system conditional on $\Theta = \theta$ coincides with that of $\mathcal{M}_\theta$. Any Bayesian policy $\bar\pi$ thus induces, for each $\theta$, a policy $\pi_\theta$ in $\mathcal{M}_\theta$, and conversely any family $\{\pi_\theta\}_{\theta\in\bar\Theta}$ is lifted to a Bayesian policy. More precisely, the mean-field flow satisfies $\bar\rho^{\bar\pi,\varepsilon^0}(\Theta) = \delta_\Theta \otimes \rho^{\pi_\Theta,\varepsilon^0}_\Theta$ $\mathbb{P}$-a.s., and the value functions satisfy $\bar V_q(\bar\pi';\bar\pi) = \mathbb{E}_{\Theta\sim q}[V_\Theta(\pi'_\Theta;\pi_\Theta)]$, where $\bar\rho^{\bar\pi,\varepsilon^0}$ and $\bar V_q$ are the mean-field flow and value function of $\bar\cM_q$ (App.~\ref{appendix:be-mfg-definitions}). These two identities, established in Lemmas~\ref{lemma:bar_rho^pi(Theta)=d_Theta_rho^pi}--\ref{lemma:barV=EV_family} in App.~\ref{appendix:be-mfg-link}, are the key tools for establishing the main result of this section:

\begin{theorem}
\label{thm:NE_BE-MFG_induces_NE_PMFG}
    Let $\bar\pi^* \in \bar\Pi$ be a NE of $\bar{\mathcal{M}}_q$. Define, for any $\theta \in \bar\Theta$ and $(i, x, a, \rho) \in [H-1] \times \mathcal{X} \times \mathcal{A} \times \Delta_\mathcal{X}$,
    $
    \pi^{**}_{\theta, i}(a \mid x, \rho) = \bar\pi^*_i(a \mid \theta, x, \delta_\theta \otimes \rho).
    $
    Then $\pi^{**}_\Theta$ is a NE of $\mathcal{M}_\Theta$, $q$-a.s.
\end{theorem}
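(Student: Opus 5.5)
Throughout, write $\pi^{**}$ for the family $\{\pi^{**}_\theta\}_{\theta\in\bar\Theta}$, and note that the lift of $\pi^{**}$ is $\bar\pi^*$ itself, at least along the mean-field flows that actually occur. We argue by contradiction through a measurable selection of improving deviations.

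First we use the two identities quoted just before the statement. For any Bayesian policy $\bar\pi$ with induced family $\{\pi_\theta\}$ we have $\bar\rho^{\bar\pi}(\Theta)=\delta_\Theta\otimes\rho^{\pi_\Theta}_\Theta$ almost surely (Lemma~\ref{lemma:bar_rho^pi(Theta)=d_Theta_rho^pi}). Hence when $\bar\pi^*$ is evaluated along its own flow, it only uses arguments of the form $(\theta,x,\delta_\theta\otimes\rho)$, and its induced family is exactly $\pi^{**}$. Then Lemma~\ref{lemma:barV=EV_family} gives, for every Bayesian policy $\bar\pi'$ inducing $\{\pi'_\theta\}$,
\begin{equation*}
\bar V_q(\bar\pi';\bar\pi^*)=\mathbb{E}_{\Theta\sim q}\big[V_\Theta(\pi'_\Theta;\pi^{**}_\Theta)\big].
\end{equation*}

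Next, suppose the conclusion fails. Then the set $B=\{\theta:\ \sup_{\pi'}V_\theta(\pi';\pi^{**}_\theta)>V_\theta(\pi^{**}_\theta;\pi^{**}_\theta)\}$ has $q(B)>0$. We need to build a Bayesian deviation $\bar\pi'$ that uses an improving policy on $B$ and $\pi^{**}_\theta$ off $B$. This gives
\begin{equation*}
\bar V_q(\bar\pi';\bar\pi^*)-\bar V_q(\bar\pi^*;\bar\pi^*)=\mathbb{E}_q\big[(V_\Theta(\pi'_\Theta;\pi^{**}_\Theta)-V_\Theta(\pi^{**}_\Theta;\pi^{**}_\Theta))\mathbf 1_B\big]>0,
\end{equation*}
which contradicts $\bar\pi^*\in\BR(\bar\pi^*)$. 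Concretely, we set $\bar\pi'_i(a\mid\theta,x,\bar\rho)=\pi'_{\theta,i}(a\mid x,\int\bar\rho(d\theta'',\cdot))$ and check via the lemmas that it induces $\{\pi'_\theta\}$.

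The main obstacle is measurability, and it has two parts.

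\textbf{Measurability of $B$ and of the value maps.} We must show that $B$ is measurable, or at least universally measurable, and that $\theta\mapsto V_\theta(\pi'_\theta;\pi^{**}_\theta)$ is measurable. For fixed policies this follows because $V_\theta$ is a finite-horizon expectation built from measurable $P,r$. The supremum over $\pi'$ is more delicate. We avoid it by dynamic programming: for fixed $\theta$ and fixed population policy, the best-response problem is a finite-horizon MDP with finite $\cX,\cA$, whose state is augmented by the population flow. Its optimal value is obtained by backward induction, and the maximizers can be chosen as deterministic argmax with a fixed tie-breaking order over the finite set $\cA$.

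\textbf{Choice of the improving selection.} With this construction, the optimal value and the greedy policy are jointly measurable in $(\theta,x,\rho)$, since they come from finitely many maxima of measurable functions. We then take $\pi'_\theta$ to be this jointly measurable optimal policy for every $\theta$, so no restriction to $B$ is needed. This yields $\bar V_q(\bar\pi';\bar\pi^*)\ge\bar V_q(\bar\pi^*;\bar\pi^*)$, with strict inequality if $q(B)>0$. Consequently $V_\theta(\pi^{**}_\theta;\pi^{**}_\theta)=\max_{\pi'}V_\theta(\pi';\pi^{**}_\theta)$ for $q$-a.e.\ $\theta$, which is the NE property $q$-a.s.
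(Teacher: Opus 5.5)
Your proposal is correct and is essentially the paper's proof. Both arguments build a Borel best-response selection by backward induction with a fixed tie-breaking argmax over the finite set $\mathcal{A}$, then use the value decomposition and the NE property of $\bar\pi^*$ to show that the nonnegative gap $\theta\mapsto V_\theta(\pi'_\theta;\pi^{**}_\theta)-V_\theta(\pi^{**}_\theta;\pi^{**}_\theta)$ has nonpositive $q$-integral; your contradiction on $q(B)>0$ is the same argument. The only slip is a citation: since $\bar\pi^*$ need not be the lift of a family, the decomposition you need is Lemma~\ref{lemma:barV=EV}, which holds for arbitrary Bayesian policies, not Lemma~\ref{lemma:barV=EV_family}.
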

\onlyin{venue}{\vspace{-5pt}}

Under condition (C1) of Assumption~\ref{assumption:likelihood}
Theorem~\ref{thm:NE_BE-MFG_induces_NE_PMFG} holds for every $\theta \in \bar\Theta$.

Combined with Assumption~\ref{assumption:uniqueness_nash}, this result is very useful: Given access to $\bar\pi^*$, one obtains for any $\theta \in \bar\Theta$ a sampler of $\xi^*_{\theta,\mathcal{I}}$: it suffices to simulate the policy $\pi^{**}_\theta$ defined above under a drawn common noise $\varepsilon^0$. Indeed, Assumption~\ref{assumption:uniqueness_nash} guarantees that $\mathcal{L}(\xi^{\pi^{**}_\theta}_{\theta,\mathcal{I}}) = \mathcal{L}(\xi^*_{\theta,\mathcal{I}}) = \PP_{\theta,\mathcal{I}}$, so that the simulated trajectories are distributed according to the true conditional law $\Gamma_{\theta,\mathcal{I}}$. Yet $\bar{\mathcal{M}}_q$ is a standard MFG with common noise, and can be solved with the same algorithms as any individual PMFG $\mathcal{M}_\theta$, at a comparable computational cost (see App.~\ref{appendix:be-mfg-link} and App.~\ref{appendix:grid-search}).
\onlyin{venue}{\vspace{-5pt}}

\subsection{Algorithms}
\label{sec:algos}

\noindent\textbf{Learning via BE-MFG Pipeline.}
The methodology consists of three steps (Algorithm~\ref{algo:bays-inf-mfgs}, App.~\ref{appendix:nash_equilibrium}):
\begin{enumerate}
    \item \textbf{Computing the BE-NE.} Select a prior $q$ over $\bar\Theta$ and compute the BE-NE $\bar\pi^*$ of $\bar{\mathcal{M}}_q$, using a FiLM architecture~\cite{perez2018film} (App.~\ref{appendix:film}) and the FP and deep learning pipeline of~\cite{lambrecht2026} (App.~\ref{appendix:nash_equ_general_algo}). By Theorem~\ref{thm:NE_BE-MFG_induces_NE_PMFG}, this single solve grants access to $\{\pi^{**}_\theta\}_{\theta\in\bar\Theta}$.
    \item \textbf{Learning the conditional density.} Train a parametric approximation $\Gamma^{\gamma}_{\theta,\mathcal{I}}$, with trainable parameters $\gamma$, of the conditional density $\Gamma_{\theta,\mathcal{I}}$ by simulating trajectories from $\bar\pi^*$, using a MAF~\cite{papamakarios2017masked} (App.~\ref{appendix:maf}). As a lightweight alternative, a Gaussian Process (GP) inverse regression, trained on the same simulated trajectories, directly approximates the posterior (App.~\ref{appendix:gp}).
    \item \textbf{Bayesian inference and policy deployment.} Given observations $(\hat\xi^n_{\theta_0,\mathcal{I}})_{1\le n\le N}$, compute the posterior $\bar q_N$ via Bayes' rule, derive $\hat\theta^{\mathrm{MAP}}_N$ and $\pi_{\bar q_N}$, and deploy both directly using $\bar\pi^*$, i.e., using $\pi^{**}_\theta$ in place of $\pi^*_\theta$.
\end{enumerate}

\noindent\textbf{Baseline: Mean Field Inverse Reinforcement Learning.} 
As a baseline, we use the method introduced by \cite{chen2022individualinverseRL}. This method, as all existing methods in the literature to the best of our knowledge, requires access to full state-action observations $\xi = \mu$ and is not designed for the common noise setting. We adapted it to our framework following App.~\ref{appendix:mf-irl}. As discussed in Sec.~\ref{sec:bays} and quantified in App.~\ref{appendix:grid-search}, we do not include a grid search over dedicated PMFG solves as a baseline, because its computational cost makes it uncompetitive.
\onlyin{venue}{\vspace{-0.5em}}
\section{Numerical experiments}
\label{sec:expes}
\onlyin{venue}{\vspace{-5pt}}
\subsection{Metrics and observations}

We will evaluate the performance of our methods for each question using the following four metrics. Below, $\hat \pi$ and $\hat \PP_{\cI}$ refer respectively to the policy obtained and the induced probability measure of $\xi$, using any of the method (MAF with MAP, MAF with BMA, etc.). Each experiment was conducted using five seeds. 
\begin{itemize}[noitemsep,topsep=0pt]
    \item \textbf{Law Identifiability \ref{theory-q1}:} Wasserstein $1$ distance, $\mathcal{W}_1(\PP_{\theta_0, \cI}, \hat \PP_{\cI})$.
    \item \textbf{Parameters Inference \ref{theory-q2}:} For each example, we will introduce a specific metric to evaluate the inference. 
    \item \textbf{Equilibrium Quality \ref{theory-q3}:} Exploitability, $\cE_{\theta_0}(\hat \pi) = \sup_{\pi\in \Pi} V_{\theta_0}( \pi; \hat \pi) - V_{\theta_0}(\hat\pi; \hat\pi)$.
    \item \textbf{Performance vs Expert \ref{theory-q4}:} $V_{\theta_0}(\hat \pi; \pi_{\theta_0}^*) - V_{\theta_0}(\pi_{\theta_0}^*; \pi_{\theta_0}^*).$
\end{itemize}

\noindent\textbf{Observations:} We consider for the Bayesian setting $\xi = \rho$, while the MF-IRL needs intrinsically more informative observation $\xi = \mu$. We let $N\in \{1, 10, 100\}$, and consider three possible sets of available time steps: the full trajectory $\cI = [H-1]$, denoted as `\textbf{Full}'; subsampling every fourth step from the terminal time $\cI = \{H-1, H-5, \dots\}$, denoted as `\textbf{Sparse}'; and $\cI = \{H-1\}$ denoted as `\textbf{Terminal}'. 
\onlyin{venue}{\vspace{-0.5em}}

\onlyin{venue}{\vspace{-5pt}}
\subsection{Experiment 1: proof of concept}
\onlyin{venue}{\vspace{-5pt}}

We consider a variant of the \textit{Beach Bar} environment~\cite{perrin2022generalization}. In this setting, the bar is permanently open and the agents are subject to a common noise shifting their positions: at each time step and at each location, a random shift of the agents is activated with probability $\eta \ge 0$, the \defi{common noise intensity}.  In this example, $\theta$ is one-dimensional and appears in the reward; however the observation space is high-dimensional due to the number of states and time steps. As this environment is standard in MFGs, we defer its detailed description to App.~\ref{appendix:beachbar}.

\noindent\textbf{Bayesian setup:} We take the uniform prior $q$ on $\bar\Theta = [0, 2.5]$ and $\theta_0 \in \{0.5, 0.875, 1.25, 1.625, 2.0\}$. We present the results as a function of $\eta$, aggregated over $\theta_0$, for $\xi = \rho$; the full sensitivity analysis in $\eta$ and $N$ is in App.~\ref{appendix:beachbar}.

\noindent\textbf{Parameter Inference Error:} We measure the accuracy of parameter inference via the normalized error $\vert \hat\theta - \theta_0|/2.5$, expressed as a percentage, where $\theta_0$ is the true parameter and $\hat\theta$ is the parameter inferred by the method under consideration (MAF, GP, or MF-IRL).

\begin{figure}[h]
    \centering
    \begin{minipage}{0.76\linewidth}
        \centering
        \includegraphics[width=\linewidth]{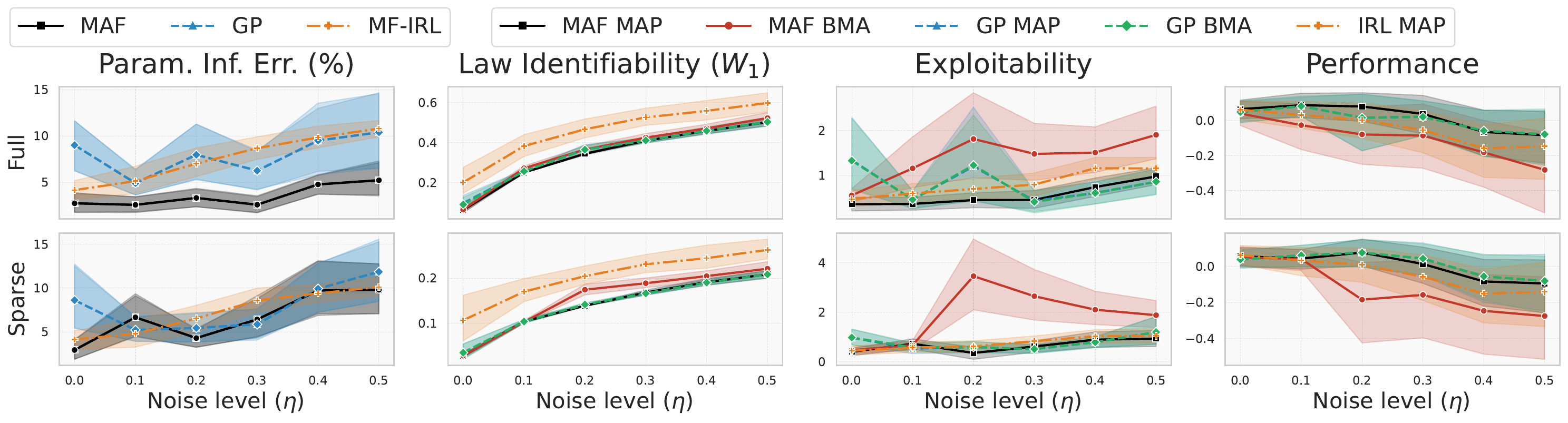}
    \end{minipage}
    \begin{minipage}{0.22\linewidth}
        \centering
        \includegraphics[width=\linewidth]{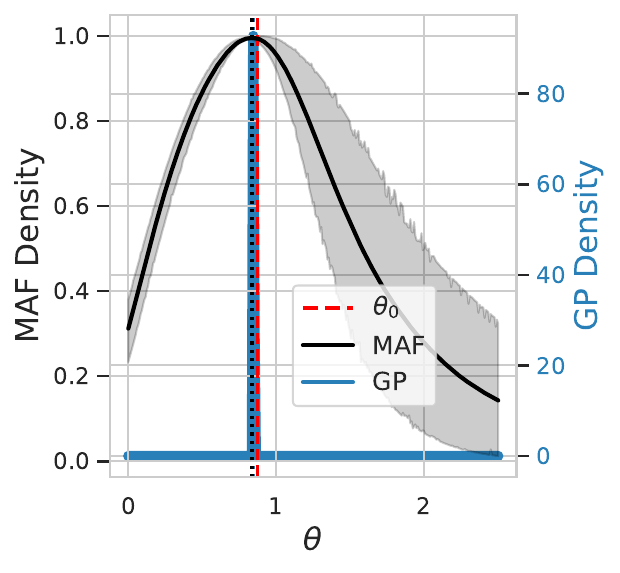}
    \end{minipage}
    \caption{\it Beach Bar, $\xi = \rho$, $N = 100$. \textbf{Left:} metrics as a function of noise level $\eta$ for two available timestep schemes. \textbf{Right:} posterior $\bar q_N$ for $\eta = 0.2$, $\theta_0 = 0.875$, $\mathcal{I} = [H-1]$.}
    \label{fig:beachbar_averaged_xi=rho_N=100}
\end{figure}

\noindent\textbf{Results.} Figure~\ref{fig:beachbar_averaged_xi=rho_N=100} shows that with $N=100$ state-only observations ($\xi=\rho$), the MAF estimator recovers $\theta_0$ with high accuracy even under the \textbf{Sparse} regime, corresponding to only five time steps $i\in\{3,7,11,15,19\}$ out of $H=20$, and outperforms MF-IRL on all metrics, despite MF-IRL having access to the more informative $\xi=\mu$. Across all settings, MAP outperforms BMA for both MAF and GP, suggesting that with $N=100$ the learned conditional densities remain insufficiently sharp for reliable posterior sampling. 
Figure~\ref{fig:beachbar_averaged_xi=rho_N=100} confirms this picture at the posterior level: even for $N = 100$, the likelihood curves remain spread. However, the MAP estimate aligns closely with the true $\theta_0 = 0.875$ even for $N=1$. Overall, these results demonstrate the \emph{reliability} of the proposed method. From sparse, state-only, and potentially partial observations, without any knowledge of agent actions, the observer successfully recovers the parameter, reproduces the law of the observed densities, derives a near-optimal policy, and recovers a NE under $\theta_0$. These findings go significantly beyond existing inverse MFG results, which require either no common noise or direct access to state-action trajectories.
\onlyin{venue}{\vspace{-5pt}}
\subsection{Experiment 2: higher parametric dimension space}
\onlyin{venue}{\vspace{-5pt}}
We now demonstrate our method on a higher-dimensional inference problem, where both the spatial preferences and congestion aversion of agents are unknown.

\noindent\textbf{Model:} Agents move on a $3\times 3$ torus; at each location, a common noise shock of probability $\eta$ replaces the chosen move by a uniformly random one. The reward combines an attraction $\theta^e \in \RR_+^9$ to hotspots and a congestion penalty of intensity $\theta^c \in [0,5]$; full equations are given in App.~\ref{appendix:district}.

\noindent\textbf{Bayesian setup:} We consider $\bar \Theta = \RR^9_+\times [0, 5]$. The observer considers the prior $q$ being the density of $\cN\big(((\frac{1}{9}, \dots,\frac{1}{9}), \frac{5}{2}), 0.01\big)$ while using GP and $\text{Dirichlet}(1, \dots, 1) \otimes \text{Unif}[0, 5]$ while using MAF. We vary the ground truth $\theta_0$ across the three scenarios (Bullseye, Dual Core, Corridor) provided in Figure~\ref{fig:district_traj_eta=0.3} in App.~\ref{appendix:district}, showing three realizations of trajectories with $\eta = 0.3$. We refer to this model as the \textbf{Entertainment District} environment.

\noindent\textbf{Parameter Inference Error:} Since $\theta^e$ and $\theta^c$ have very different effects on the equilibrium, we distinguish the inference error between these two parameters. For $\theta^e$, we use the District Error (L1) $\sum_x |\hat\theta^e(x) - \theta^e_0(x)|$, and for $\theta^c$, we use the normalized error $|\hat\theta^c - \theta^c_0|/5$, expressed as a percentage.

\noindent\textbf{Challenges:} While the general shape of $\theta^e$ is partially recoverable from agent aggregation patterns, finding precise values at both hot and non-hot spots is difficult due to congestion-induced non-trivial behavior. Moreover, the absence of prior knowledge on the congestion level $\theta^c$ makes this a genuinely hard 10-dimensional inference problem.

\begin{figure}[htbp]
    \centering
    \begin{minipage}{0.75\linewidth}
        \centering
        \includegraphics[width=\linewidth]{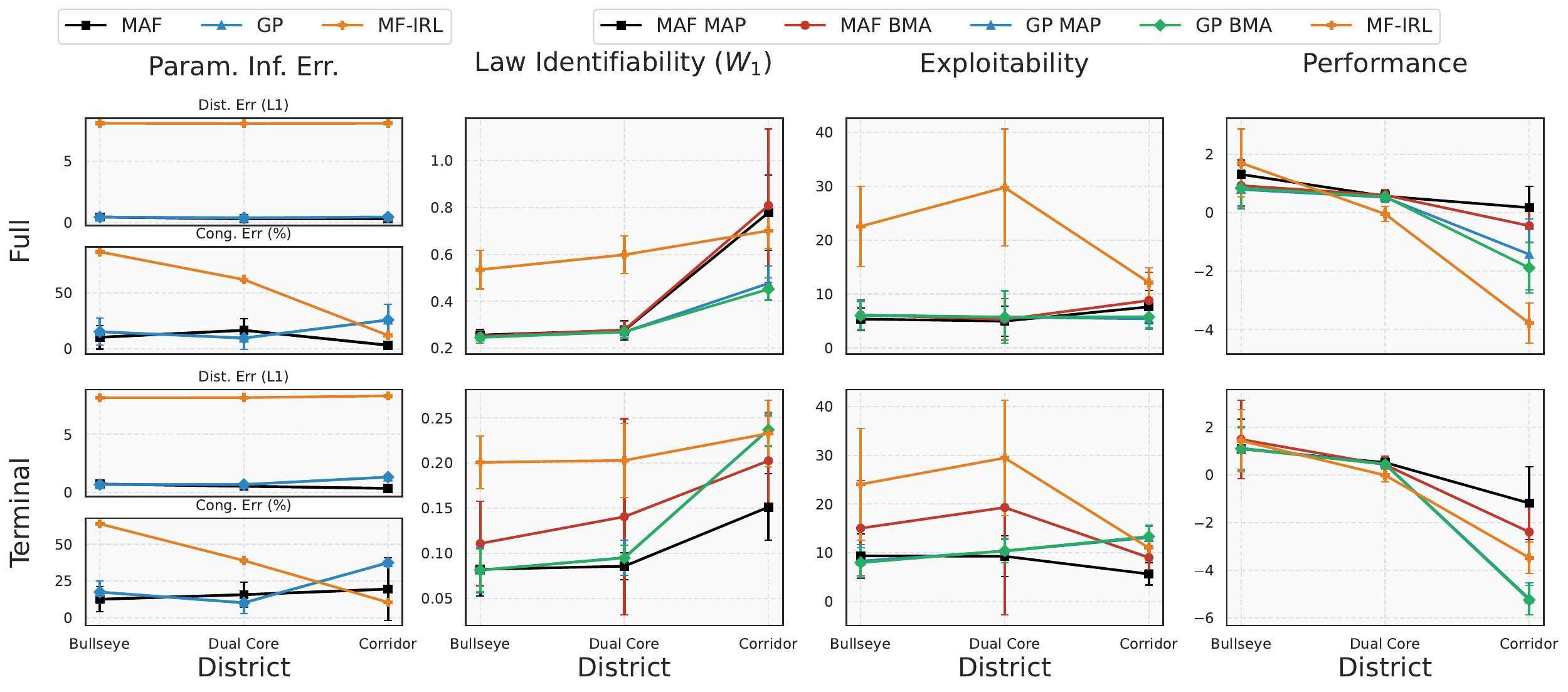}
    \end{minipage}%
    \begin{minipage}{0.23\linewidth}
        \centering
        \includegraphics[width=\linewidth]{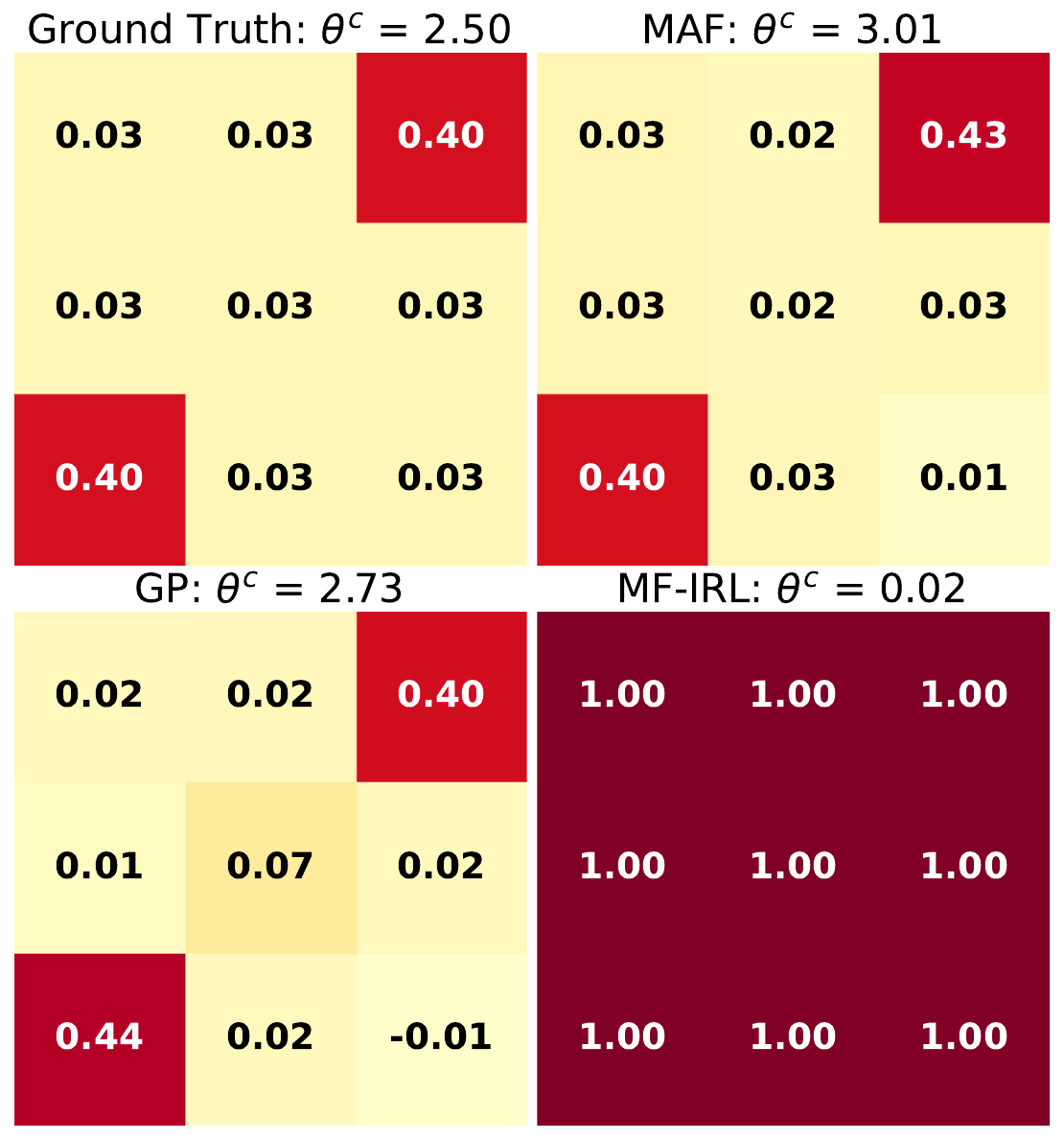}
    \end{minipage}
    \caption{\it District, $\xi = \rho$, $N = 1$. \textbf{Left:} metrics as a function of the district scenario for the three sampling schemes. \textbf{Right:} inferred $\theta_0 = (\theta^e_0, \theta^c_0)$ for the Dual Core scenario, $\eta = 0.5$, $\mathcal{I} = [H-1]$.}
    \label{fig:district_metrics_xi=rho_N=1}
\end{figure}

\noindent\textbf{Results:} Figure~\ref{fig:district_metrics_xi=rho_N=1} reports metrics averaged over $\eta \in \{0.3, 0.5\}$, for each ground truth. The Bayesian approach consistently succeeds across all three scenarios. As shown in Figure~\ref{fig:district_metrics_xi=rho_N=1}, it accurately reconstructs the shape of $\theta^e$ despite the 10-dimensional nature of the problem, and does so from a single observation ($N=1$) with remarkably low variance across seeds. The MF-IRL baseline partially outputs a near-uniform $\theta^e$, inconsistent with the observed trajectories, and fails on all remaining metrics. This is particularly striking given that MF-IRL operates on the more informative $\xi = \mu$ observations.

\onlyin{venue}{\vspace{-5pt}}
\subsection{Experiments 3 \& 4: dynamic inference and application to real data}
\onlyin{venue}{\vspace{-5pt}}
To show that our framework also applies to dynamics inference and to the state-action observation setting $\xi = \mu$, we consider the Susceptible--Infected--Susceptible (SIS) epidemic model~\cite{cui2021approximately} (App.~\ref{sec:exp-sis}). To illustrate that the method can be applied to real data, we consider an MFG of electricity producers, whose common noise is the observed electricity demand, and apply the BE-MFG pipeline to public data of the French electricity market (App.~\ref{appendix:energy}). 

\onlyin{venue}{\vspace{-1em}}
\section{Conclusion}
\label{section:conclusion}
\onlyin{venue}{\vspace{-1em}}
We introduced a Bayesian framework for learning the latent parameters of MFGs with common noise from partial, population trajectories. The central contribution, the BE-MFG, embeds the entire parametric family into a single augmented game whose Nash equilibrium simultaneously encodes equilibrium policies for all parameters, bypassing repeated forward solves. We formalized identifiability under common noise and established consistency guarantees. On three synthetic environments of increasing complexity, we recover the ground-truth parameter accurately and derive near-optimal policies, consistently outperforming MF-IRL despite the latter having access to more informative observations. A fourth experiment, on public electricity market data, suggests that the pipeline can also be run on real-world observations, although the model we use there is deliberately stylized and we make no claim of predictive value.
\textbf{Limitations:} We provided the basic ideas around identifiability but one should look for general sufficient conditions. Moreover, we restricted our study to finite state models; scaling the BE-MFG approach to continuous state-action spaces via function approximation is a natural next step.

\begin{onlyinblock}{arxiv,notes}
\section*{Acknowledgements}

Gr\'egoire Lambrecht is partially supported by NSF Award 1922658.

The authors gratefully acknowledge the NYU Center for Data Science and NYU Shanghai for providing the computational resources, the access to the scientific literature, and the research tools without which this work would not have been possible.
\end{onlyinblock}

\subsection*{AI use statement}

In this work, we used generative AI tools to assist with implementing the code of the experiments, improving the English of the manuscript, analyzing the numerical results, creating the figures, and detecting typographical and formatting issues. The research questions, the model and problem formulation, the theoretical statements and their proofs, the design of the experiments, and the interpretation of the results are our own. We have reviewed all AI-assisted work. We take responsibility for the final content of this work, including text, claims or artifacts produced with the aid of generative AI.

\bibliography{biblio}

\clearpage

\appendix

\section{Notation and abbreviations}
\label{appendix:notation}

\begin{table}[htbp]
\centering
\small
\caption{Abbreviations.}
\label{tab:abbreviations}
\begin{tabular}{@{}ll@{}}
\toprule
\textbf{Abbreviation} & \textbf{Meaning} \\
\midrule
BE-MFG & Bayesian-Extended Mean Field Game (Sec.~\ref{sec:BEMFG}) \\
BE-NE & Bayesian-Extended Nash equilibrium, i.e., a NE of the BE-MFG \\
BMA & Bayesian model averaging \\
FiLM & Feature-wise Linear Modulation \\
FP & Fictitious Play \\
GP & Gaussian Process \\
IL / IRL & Imitation Learning / Inverse Reinforcement Learning \\
MAF & Masked Autoregressive Flow \\
MAP & maximum a posteriori \\
MF-IL & Mean Field Imitation Learning \\
MF-IRL & Mean-Field Inverse Reinforcement Learning \\
MFG & Mean Field Game \\
NE & Nash equilibrium \\
PCA & principal component analysis \\
PMFG & Parametric Mean Field Game \\
RL & Reinforcement Learning \\
SIS & Susceptible--Infected--Susceptible \\
\bottomrule
\end{tabular}
\end{table}

\begin{table}[htbp]
\centering
\small
\caption{Main notation.}
\label{tab:notation}
\begin{tabular}{@{}lp{0.68\linewidth}@{}}
\toprule
\textbf{Symbol} & \textbf{Meaning} \\
\midrule
\multicolumn{2}{@{}l}{\textit{General}} \\
$[N]$, $[N]^*$ & $\{0, \dots, N\}$ and $\{1, \dots, N\}$ \\
$\Delta_{\mathcal{S}}$, $\cP(E)$ & probability measures on a finite set $\mathcal{S}$, on a measurable space $E$ \\
$\cB(U)$ & Borel $\sigma$-algebra on $U \subset \RR^n$, $n \in \NN^*$, induced by the Euclidean topology \\
$\cL(X)$ & law of a random variable $X$ \\
$P \otimes Q$, $P \ll Q$, $\frac{dP}{dQ}$ & product measure, absolute continuity, Radon-Nikodym derivative of two probability measures $P$, $Q$ \\
$\mathcal{W}_1$, $\mathrm{dist}$ & Wasserstein-1 distance on a metric space $(\mathcal{S}, \mathrm{dist})$: $\mathcal{W}_1(P, Q) = \inf_{\nu \in \Pi(P,Q)} \mathbb{E}_{(X,Y)\sim\nu}[\mathrm{dist}(X,Y)]$, where $\Pi(P,Q)$ is the set of couplings with marginals $P$ and $Q$ \\
\midrule
\multicolumn{2}{@{}l}{\textit{Parametric MFGs (Sec.~\ref{sec:pb-setting})}} \\
$\cX$, $\cA$ & finite state and action spaces \\
$\cE^0$, $\nu^0$, $\varepsilon^0$ & common noise space, common noise law, common noise trajectory $(\varepsilon^0_1, \dots, \varepsilon^0_{H-1})$ \\
$\eta$ & common noise intensity (experiments) \\
$H$, $\rho_0$ & time horizon, initial distribution \\
$\bar\Theta \subset \RR^d$, $\lambda$ & parameter set, reference measure on $\bar\Theta$ \\
$P_\theta$, $r_\theta$, $\cM_\theta$ & transition kernel, reward and PMFG associated with $\theta$ \\
$\pi = (\pi_0, \dots, \pi_{H-1})$, $\Pi_0$, $\Pi$ & policy, set of strategies, set of policies \\
$\Phi_\theta$ & mean-field transition operator \\
$\rho^{\pi, \varepsilon^0}_\theta$, $\rho^{\pi, \pi', \varepsilon^0}_\theta$ & population flow under $\pi$; flow of a single agent playing $\pi'$ against a population playing $\pi$ \\
$\mu^{\pi, \pi', \varepsilon^0}_\theta$ & state-action counterpart of $\rho^{\pi, \pi', \varepsilon^0}_\theta$ \\
$V_\theta(\pi'; \pi)$, $\BR_\theta(\pi)$ & value of $\pi'$ against $\pi$, set of best responses to $\pi$ \\
$\pi^*_\theta$, $\rho^*_\theta$, $\mu^*_\theta$ & a NE of $\cM_\theta$ and its (state, state-action) mean-field trajectories \\
$\cE_{\theta}(\pi)$ & exploitability of $\pi$ in $\cM_\theta$ \\
\midrule
\multicolumn{2}{@{}l}{\textit{Observations and identifiability (Sec.~\ref{sec:pb-setting}--\ref{sec:consistency-identifiability})}} \\
$\xi \in \{\rho, \mu\}$, $\Xi$ & type of observation and its space \\
$\cI \subset [H-1]$, $\xi_\cI$ & observed time steps, observation restricted to $\cI$ \\
$\PP_{\theta, \cI}$ & probability measure of $\xi^*_{\theta, \cI}$ \\
$\theta_0$ & true (unknown) parameter \\
$N$, $\hat\xi^n_{\theta_0, \cI}$ & number of observations, $n$-th observation \\
$\fP_{\theta, \cI}$, $\BB\RR_\theta$ & laws generated by some policy, best-response map on laws \\
\midrule
\multicolumn{2}{@{}l}{\textit{Bayesian setting (Sec.~\ref{sec:bays})}} \\
$q$, $\Theta$ & prior on $\bar\Theta$, random parameter with law $q$ \\
$\PP_\cI$, $\Gamma_{\theta, \cI}$ & prior predictive measure, likelihood $d\PP_{\theta, \cI}/d\PP_\cI$ \\
$\bar q_N$ & posterior after $N$ observations \\
$\hat\theta^{\mathrm{MAP}}_N$, $\pi_{\bar q_N}$ & MAP estimator, BMA policy \\
$\bar\cM_q$, $\bar P$, $\bar r$ & BE-MFG, its transition kernel and reward \\
$\bar\Pi_0$, $\bar\Pi$, $\bar V_q$ & Bayesian strategies, Bayesian policies, value function of $\bar\cM_q$ \\
$\bar\pi^*$, $\pi^{**}_\theta$ & a BE-NE, and the policy it induces in $\cM_\theta$ (Thm.~\ref{thm:NE_BE-MFG_induces_NE_PMFG}) \\
$\Gamma^\gamma_{\theta, \cI}$, $\gamma$ & learned approximation of $\Gamma_{\theta, \cI}$, trainable parameters \\
\bottomrule
\end{tabular}
\end{table}

\clearpage

\section{Additional definitions, lemmas and proofs}
\label{appendix:proofs}
\subsection{Identifiability and consistency }
\label{appendix:identifiability}

We first present the proof of Theorem~\ref{thm:consistency}.
\begin{proof}
    By assumption $\PP_{\theta, \cI} = \cL(\mu^{*}_\theta) = \cL(\mu^{*}_{\theta'}) = \PP_{\theta', \cI}$. Since $\rho^*_{\theta, i}$ is the first marginal of $\mu^*_{\theta, i}$, we deduce that $\cL(\rho^{*}_\theta) = \cL(\rho^{*}_{\theta'})$.

    Since the kernel does not depend on the parameter, the mean-field transition operator $\Phi_\theta = \Phi$ of \eqref{eq:mf-evol-pi} does not depend on it either. Consequently, the flows induced by a given policy are the same in $\cM_\theta$ and in $\cM_{\theta'}$; in particular
    \begin{equation}
    \label{eq:same-flow-H1}
        \rho^{\pi^*_\theta, \varepsilon^0}_{\theta'} = \rho^{\pi^*_\theta, \varepsilon^0}_{\theta} = \rho^*_\theta, \qquad \mu^{\pi^*_\theta, \varepsilon^0}_{\theta'} = \mu^{\pi^*_\theta, \varepsilon^0}_{\theta} = \mu^*_\theta, \qquad \PP\text{-a.s.}
    \end{equation}

    \textbf{Step 1: the deviation values coincide.} Let $\pi \in \Pi$. We prove by induction on $i \in [H-1]$ that there is a measurable mapping $G_i$, depending only on $\pi$, such that
    $$
    \rho^{\pi^*_\theta, \pi, \varepsilon^0}_{\theta', i} = G_i\big(\rho^*_{\theta, 0:i-1}, \varepsilon^0_{1:i}\big), \quad \PP\text{-a.s.}
    $$
    For $i = 0$, $\rho^{\pi^*_\theta, \pi, \varepsilon^0}_{\theta', 0} = \rho_0$ is deterministic. Assume the identity holds at rank $i \in [H-2]$. By \eqref{eq:mf-evol-pi-pi'} and \eqref{eq:same-flow-H1},
    $$
    \rho^{\pi^*_\theta, \pi, \varepsilon^0}_{\theta', i+1} = \Phi\big(\rho^{\pi^*_\theta, \pi, \varepsilon^0}_{\theta', i}, \pi_i \,;\, \rho^{*}_{\theta, i}, \varepsilon^0_{i+1}\big) = \Phi\big(G_i(\rho^*_{\theta, 0:i-1}, \varepsilon^0_{1:i}), \pi_i \,;\, \rho^{*}_{\theta, i}, \varepsilon^0_{i+1}\big),
    $$
    which is of the announced form and completes the induction. The same mappings $G_i$ give $\rho^{\pi^*_{\theta'}, \pi, \varepsilon^0}_{\theta', i} = G_i(\rho^*_{\theta', 0:i-1}, \varepsilon^0_{1:i})$. Since the state-action flow of the deviating agent is obtained from its state flow and from $\pi$, there is therefore a measurable mapping $\Psi$, depending only on $\theta'$ and $\pi$, such that
    $$
    V_{\theta'}(\pi; \pi^*_\theta) = \EE\big[\Psi(\rho^*_\theta, \varepsilon^0)\big], \qquad V_{\theta'}(\pi; \pi^*_{\theta'}) = \EE\big[\Psi(\rho^*_{\theta'}, \varepsilon^0)\big].
    $$
    By recoverability of the common noise, $\varepsilon^0 = \Lambda(\mu^*_{\theta})$ $\PP$-a.s., so that $\Psi(\rho^*_\theta, \varepsilon^0) = \tilde\Psi(\mu^*_\theta)$ $\PP$-a.s., where $\tilde\Psi$ maps $\nu$ to $\Psi$ evaluated at the first marginal of $\nu$ and at $\Lambda(\nu)$; the same identity holds for $\theta'$ with the same $\tilde\Psi$. As $\cL(\mu^*_\theta) = \cL(\mu^*_{\theta'})$, we conclude that
    \begin{equation}
    \label{eq:dev-values-equal}
        V_{\theta'}(\pi; \pi^*_\theta) = V_{\theta'}(\pi; \pi^*_{\theta'}), \qquad \forall \pi \in \Pi.
    \end{equation}

    \textbf{Step 2: the equilibrium values coincide.} By \eqref{eq:same-flow-H1},
    $
    V_{\theta'}(\pi^*_\theta; \pi^*_\theta) = \EE\big[\sum_{i, x, a} \mu^{*}_{\theta, i}(x, a) \, r_{\theta'}(x, a, \rho^{*}_{\theta, i})\big],
    $
    which is the image of $\mu^*_\theta$ by a measurable mapping depending only on $\theta'$. The same mapping applied to $\mu^*_{\theta'}$ gives $V_{\theta'}(\pi^*_{\theta'}; \pi^*_{\theta'})$, so that $\cL(\mu^*_\theta) = \cL(\mu^*_{\theta'})$ yields
    \begin{equation}
    \label{eq:eq-values-equal}
        V_{\theta'}(\pi^*_\theta; \pi^*_\theta) = V_{\theta'}(\pi^*_{\theta'}; \pi^*_{\theta'}).
    \end{equation}

    \textbf{Conclusion.} Taking the supremum over $\pi \in \Pi$ in \eqref{eq:dev-values-equal}, using that $\pi^*_{\theta'}$ is a NE of $\cM_{\theta'}$, and then \eqref{eq:eq-values-equal},
    $$
    \sup_{\pi \in \Pi} V_{\theta'}(\pi; \pi^*_\theta) = \sup_{\pi \in \Pi} V_{\theta'}(\pi; \pi^*_{\theta'}) = V_{\theta'}(\pi^*_{\theta'}; \pi^*_{\theta'}) = V_{\theta'}(\pi^*_\theta; \pi^*_\theta),
    $$
    i.e. $\pi^*_\theta \in \BR_{\theta'}(\pi^*_\theta)$, which means that $\pi_\theta^*$ is a NE for $\cM_{\theta'}$.
\end{proof}

We now pass to the proof of Theorem~\ref{thm:identifiability}.
\begin{proof}
    Suppose that $\{\cM_\theta\}_{\theta \in \bar \Theta}$ is not identifiable. Therefore, there exists two distinct parameters $\theta, \theta'\in \bar \Theta$, such that $\PP_{\theta, \cI} = \PP_{\theta', \cI}$. In the two situations, $\xi = \rho$ and $\xi = \mu$, this leads to $\cL(\rho^*_{\theta, \cI}) = \cL(\rho^*_{\theta', \cI})$. By definition $\cL(\rho^*_{\theta, \cI}) \in \BB\RR_{\theta}(\cL(\rho^*_{\theta, \cI}))$ and $\cL(\rho^*_{\theta, \cI}) \in \BB\RR_{\theta'}(\cL(\rho^*_{\theta, \cI}))$. Therefore $\BB\RR_\theta(\PP_\cI)\cap \BB\RR_{\theta'}(\PP_\cI) \ne \emptyset$. That concludes the proof. 
\end{proof}

We now give an example of identifiable family of PMFGs, satisfying the condition of Corollary~\ref{corollary:fP-cap-fP'=empty}.

\begin{example}
    \label{example:fP-cap-fP'=empty}
    Let $\cX  = \{ 0, \dots, 6\}$, $\cA = \{-1, 0, 1\}$, $\bar \Theta = \{0, 1\}$, $\cE^0 = \{-2, -1, 0, 1, 2\}$, $\rho_0 = \delta_3$ and $H > 1$. Let for any $(x, x', a, \rho, e^0) \in \cX\times \cX\times \cA \times \Delta_\cX\times \cE^0$, $P_0(x'\vert x, a, \rho, e^0) = \delta_{\{x' = x + a + e^0 \text{ mod } 7\}}$ and $P_1(x'\vert x, a, \rho, e^0) = \delta_{\{x' = x + a \text{ mod } 7\}}$. Suppose that $\nu^0(0) + \nu^0(1) + \nu^0(-1) \ne 1$. The condition in Corollary~\ref{corollary:fP-cap-fP'=empty} is satisfied for any $\cI \subset [H-1]$ such that $1\in \cI$.
\end{example}

Finally, we prove the statement in Example~\ref{example:fP-cap-fP'=empty}.
\begin{proof}
    For any $\pi \in \Pi$, and $\PP(\rho^\pi_{\theta = 0, 1}(2) + \rho^\pi_{\theta = 0, 1}(3) + \rho^\pi_{\theta = 0, 1}(4) \ne 1)>0$ because of the presence of the common noise. Yet, $\PP(\rho^\pi_{\theta = 1, 1}(2) + \rho^\pi_{\theta = 1, 1}(3) + \rho^\pi_{\theta = 1, 1}(4) \ne 1) = 0$. That is enough to conclude that Cor.\ref{corollary:fP-cap-fP'=empty} is verified.
\end{proof}

\subsection{Existence of the likelihood}
\label{appendix:likelihood}

We prove Lemma~\ref{lemma:likelihood}.
\begin{proof}
    Let $A \in \cB(\Xi^{\vert \cI\vert})$ such that $\PP_{\cI}(A) = 0$, i.e., $\int_{\bar \Theta} \PP_{\theta, \cI}(A) q(\theta) \lambda(d\theta) = 0$. Since $q(\theta) > 0$ for every $\theta \in \bar\Theta$, we get $\PP_{\theta, \cI}(A) = 0$ for $\lambda$-almost every $\theta \in \bar\Theta$.

    Under (C1), $\lambda$ is the counting measure on the countable set $\bar\Theta$, so the integral is the sum $\sum_{\theta \in \bar\Theta} \PP_{\theta, \cI}(A) q(\theta)$, whose terms are non-negative. Hence $\PP_{\theta, \cI}(A) = 0$ for every $\theta \in \bar\Theta$.

    Under (C2), the set $U_A = \{\theta \in \bar\Theta : \PP_{\theta, \cI}(A) > 0\}$ is open in $\bar\Theta$, since $\theta \mapsto \PP_{\theta, \cI}(A)$ is continuous, and $\lambda(U_A) = 0$. Since $\lambda$ charges every non-empty open subset of $\bar\Theta$, $U_A = \emptyset$, i.e., $\PP_{\theta, \cI}(A) = 0$ for every $\theta \in \bar\Theta$.

    In both cases, $\PP_{\cI}(A) = 0$ implies $\PP_{\theta, \cI}(A) = 0$ for every $\theta \in \bar\Theta$, i.e., $\PP_{\theta, \cI} \ll \PP_{\cI}$. Since these are finite measures, the Radon-Nikodym theorem gives the existence of $\Gamma_{\theta, \cI} = d\PP_{\theta, \cI}/d\PP_{\cI}$.
\end{proof}

\subsection{BE-MFG: full definitions}
\label{appendix:be-mfg-definitions}

We provide the full definitions for the BE-MFG $\bar{\mathcal{M}}_q$, following the same structure as Sec.~\ref{sec:pb-setting} with augmented state space $\bar\Theta\times\mathcal{X}$.

\noindent\textbf{Distribution evolution.} 
For any pair of population distributions $\bar \rho, \bar \rho' \in \cP(\bar \Theta \times \cX)$, any strategy $\bar \pi_0\in \bar \Pi_0$, we define the mean-field transition operator $\bar \Phi(\bar \rho', \bar \pi_0; \bar \rho, e^0)$ as
$$
    \bar \Phi(\bar \rho', \bar \pi_0; \bar \rho, e^0)(\theta ', x') = \int_{\theta \in \bar \Theta}\sum_{x \in \cX, a \in \cA} \bar \rho'(d\theta , x)\bar  \pi_0(a \vert\theta,  x, \bar \rho) \bar P(\theta ', x' \vert \theta, x, a, \bar \rho, e^0).
$$

Let $\Theta$ be a $\bar \Theta$-valued extra source of randomness for all the paper, with density $q$. The evolution of the population distribution when all agents follow policy $\bar \pi$ and are affected by the common noise trajectory $\varepsilon^0_{1:H-1}$ is defined by $\bar \rho^{ \bar \pi,\varepsilon^0}(\Theta) = (\bar \rho^{\bar \pi,\varepsilon^0}_{0}(\Theta) , \bar \rho^{\bar \pi,\varepsilon^0}_{1}(\Theta), \dots, \bar \rho^{\bar \pi,\varepsilon^0}_{H-1}(\Theta))$ where $\bar \rho^{\bar \pi,\varepsilon^0}_{0}(\Theta) = \delta_\Theta \otimes \rho_0$ and for $i \in [H-2]$,
\begin{equation}
\label{eq:b-mf-evol-pi}
        \bar \rho^{\bar \pi,\varepsilon^0}_{i+1}(\Theta) = \bar \Phi(\bar \rho^{\bar \pi,\varepsilon^0}_{i}(\Theta), \bar \pi_i \,; \, \bar \rho^{\bar \pi,\varepsilon^0}_{ i}(\Theta), \varepsilon^0_{i+1}).
\end{equation}
The evolution of the state distribution of a \emph{single agent} following policy $\bar \pi'$ when all other agents follow policy $\pi$ is defined by $\bar \rho^{\bar \pi, \bar \pi',\varepsilon^0}(\Theta) = (\bar \rho^{\bar \pi, \bar \pi',\varepsilon^0}_{0}(\Theta), \bar \rho^{\bar \pi, \bar \pi',\varepsilon^0}_{1}(\Theta), \dots, \bar \rho^{\bar \pi, \bar \pi',\varepsilon^0}_{H-1}(\Theta))$ where $\bar \rho^{\bar \pi, \bar \pi',\varepsilon^0}_{0}(\Theta) = \delta_\Theta \otimes \rho_0$ and for $i \in [H-2]$,
\begin{equation}
        \label{eq:b-mf-evol-pi-pi'}
        \bar \rho^{\bar \pi, \bar \pi',\varepsilon^0}_{i+1}(\Theta) = \bar \Phi(\bar \rho^{\bar \pi, \bar \pi',\varepsilon^0}_{i}(\Theta), \bar  \pi'_i; \bar \rho^{\bar \pi,\varepsilon^0}_{ i}(\Theta), \varepsilon^0_{i+1}).
\end{equation}

\noindent\textbf{Value Function of BE-MFG.} 
When the population of agents follows the policy $\bar \pi$, and a single agent follows $\bar \pi'$, its associated \textbf{Bayesian-Extended value function} is:
\begin{equation}
    \label{eq:barVq}
    \bar V_q(\bar \pi'; \bar \pi) = \mathbb{E} \left[ \sum_{i = 0}^{H-1} \bar r(\Theta_i, X_i, \alpha_i, \bar \rho^{\bar \pi, \varepsilon^0}_{i}(\Theta)) \right] ,
\end{equation}
where the process is defined by $\Theta_0 = \Theta \sim q$, $X_0 \sim\rho_0$, $(\Theta_{i+1}, X_{i+1}) \sim \bar P(\cdot \vert \Theta_i, X_i, \alpha_i, \bar \rho^{\bar \pi, \varepsilon^0}_{i}(\Theta), \varepsilon^0_{i+1})$, and $\alpha_i \sim \bar \pi'(\cdot \vert \Theta_i, X_i, \bar{\rho}^{\bar\pi, \varepsilon^0}_{i}(\Theta))$. Note that under these conditions, the law of the state $(\Theta_i, X_i)$ (conditioned on the common noise) is $\bar \rho^{\bar \pi,\bar  \pi', \varepsilon^0}_{i}(\Theta)$.  We define the Bayesian best responses and the NE in the exact same way that for the original PMFGs.

\begin{definition}
    For $\pi \in \bar\Pi$, the set of \defi{Bayesian best responses} to the PMFG $\bar \cM_q$ is defined by $\bar \BR_q(\bar \pi) = \arg \max_{\bar \pi' \in \bar \Pi} \bar V_q(\bar \pi'; \bar \pi)$. A policy $\bar\pi \in \bar\Pi$ is a \defi{(mean field) Bayesian-Extended Nash equilibrium} (BE-NE) for $\bar \cM_q$ if $\bar \pi \in \bar \BR_q(\bar \pi)$.
\end{definition}

\subsection{BE-MFG and PMFG: link between the models }
\label{appendix:be-mfg-link}

This appendix collects the technical lemmas used in the proofs of the main results concerning the BE-MFG framework. The first two lemmas establish that the population flow in the BE-MFG reduces, type by type, to the population flow in the corresponding PMFG. Specifically, Lemma~\ref{lemma:bar_rho^pi(Theta)=d_Theta_rho^pi} shows this reduction when the BE-MFG policy is constructed from a single MFG policy via a point-mass embedding of the parameter, while Lemma~\ref{lemma:bar_rho^pi(Theta)=d_Theta_rho^pi_family} establishes the same conclusion when the BE-MFG policy is constructed from a family of PMFG policies. The next two lemmas, Lemmas~\ref{lemma:bar_rho^pi_pi'(Theta)=d_Theta_rho^pi_pi'} and~\ref{lemma:bar_rho^pi_pi'(Theta)=d_Theta_rho^pi_pi'_family}, extend these reduction results to the setting where a deviating agent follows a policy different from the population. These four lemmas together ensure that the BE-MFG and the family of PMFGs are consistent in terms of population dynamics, regardless of whether one agent deviates. Building on these results, Lemmas~\ref{lemma:barV=EV} and~\ref{lemma:barV=EV_family} show that the value function in the BE-MFG decomposes as the expectation over parameters of the corresponding value functions in the PMFGs, again under the two respective policy constructions. This decomposition is the key identity linking Nash equilibria of the BE-MFG to Nash equilibria of the parametric MFGs, and it is what the subsequent proof of Theorem~\ref{thm:NE_BE-MFG_induces_NE_PMFG} exploits directly.

\begin{lemma}
    \label{lemma:bar_rho^pi(Theta)=d_Theta_rho^pi}
    Let $\pi \in \Pi$ and $\bar \pi\in \bar\Pi$ be such that for any $\theta \in \bar \Theta$ and $(i,x, a, \rho) \in [H-1]\times \mathcal{X} \times \mathcal{A} \times \Delta_\mathcal{X}$,
    $$
    \pi_{\theta, i}(a \mid x, \rho) = \bar \pi_i (a \mid \theta, x, \delta_\theta \otimes \rho).
    $$
    Then 
    $$
    \bar \rho^{\bar \pi, \varepsilon^0}(\Theta) = \delta_{\Theta} \otimes \rho^{\pi_\Theta,  \varepsilon^0}_\Theta , \quad \mathbb{P}\text{-a.s.},
    $$
    where $\bar \rho^{\bar \pi, \varepsilon^0}(\Theta)$ is defined as in \eqref{eq:b-mf-evol-pi} and $\rho^{\pi_\Theta, \varepsilon^0}_\Theta$ is defined as in \eqref{eq:mf-evol-pi}.
\end{lemma}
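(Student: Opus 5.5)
We argue by induction on the time index $i \in [H-1]$, pathwise in $(\Theta, \varepsilon^0)$. The claim at rank $i$ is
\[
\bar \rho^{\bar \pi, \varepsilon^0}_i(\Theta) = \delta_\Theta \otimes \rho^{\pi_\Theta, \varepsilon^0}_{\Theta, i}.
\]
At $i = 0$ this is immediate from the definitions: $\bar \rho^{\bar \pi,\varepsilon^0}_0(\Theta) = \delta_\Theta \otimes \rho_0$ and $\rho^{\pi_\Theta,\varepsilon^0}_{\Theta,0} = \rho_0$.

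For the inductive step, assume the identity at rank $i \in [H-2]$ and write $\rho_i := \rho^{\pi_\Theta, \varepsilon^0}_{\Theta, i}$ for brevity. Plug $\bar\rho' = \bar\rho = \delta_\Theta \otimes \rho_i$ into the definition of $\bar\Phi$ in \eqref{eq:b-mf-evol-pi}. Since the integrating measure is $\delta_\Theta \otimes \rho_i$, the integral over $\theta \in \bar\Theta$ collapses to evaluation at $\theta = \Theta$, and the state weights become $\rho_i(x)$. Three simplifications then follow.

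\begin{itemize}
\item The policy term becomes $\bar\pi_i(a \mid \Theta, x, \delta_\Theta \otimes \rho_i)$, which by the hypothesis of the lemma equals $\pi_{\Theta, i}(a \mid x, \rho_i)$.
\item The population argument of $\bar P$ is $\int_{\bar\Theta} (\delta_\Theta \otimes \rho_i)(d\theta'', \cdot) = \rho_i$, i.e.\ the $\cX$-marginal.
\item The kernel becomes $\bar P(\theta', x' \mid \Theta, x, a, \delta_\Theta\otimes\rho_i, \varepsilon^0_{i+1}) = \delta_{\Theta = \theta'}\, P_\Theta(x' \mid x, a, \rho_i, \varepsilon^0_{i+1})$ by \eqref{eq:barP-barr}.
\end{itemize}

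Summing over $(x,a)$ therefore gives
\[
\bar\rho^{\bar\pi,\varepsilon^0}_{i+1}(\Theta)(\theta', x') = \delta_{\Theta=\theta'}\, \Phi_\Theta(\rho_i, \pi_{\Theta,i}; \rho_i, \varepsilon^0_{i+1})(x'),
\]
and the right-hand factor is exactly $\rho^{\pi_\Theta, \varepsilon^0}_{\Theta, i+1}$ by \eqref{eq:mf-evol-pi}. This closes the induction.

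The main obstacle is measure-theoretic rather than algebraic. When $\bar\Theta$ is uncountable, the formula for $\bar\Phi(\cdot)(\theta', x')$ written as a pointwise "density" in $\theta'$ must be read as a measure on $\bar\Theta \times \cX$, i.e.\ evaluated on sets $B \times \{x'\}$ with $\delta_{\theta \in B}$ in place of $\delta_{\theta = \theta'}$. With that reading, the collapse of the integral against $\delta_\Theta$ is just integration against a Dirac mass. We also need $(\theta, x, \rho) \mapsto \pi_{\theta,i}(a\mid x,\rho)$ and $P$ to be jointly measurable, which holds since $\bar\pi_i$, $P$ and $\theta \mapsto \delta_\theta \otimes \rho$ are measurable; this makes $\rho^{\pi_\Theta,\varepsilon^0}_\Theta$ a well-defined random variable. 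The identity then holds for every realization of $(\Theta, \varepsilon^0)$, hence in particular $\PP$-a.s.
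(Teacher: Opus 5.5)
Your proof is correct and uses the same induction as the paper, with the same three substitutions: the lemma's hypothesis for the action law, the $\mathcal{X}$-marginal of $\delta_\Theta\otimes\rho_i$, and the factorization of $\bar P$ as $\delta_\Theta\otimes P_\Theta$. The only difference is that you work directly on the recursion through $\bar\Phi$ that defines $\bar\rho$, whereas the paper argues through the conditional laws of the representative-agent trajectories $(\Theta_i,X_i,\bar\alpha_i)$ and $(X_i',\alpha_i')$. Your version avoids having to identify agent marginals with the flow, and it gives the identity for every realization of $(\Theta,\varepsilon^0)$ rather than only almost surely.
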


\begin{proof}
    Let $(\Theta_i, X_i, \bar \alpha_i)_{0\leq i \leq H-1}$ be the state-action trajectory of a representative agent following policy $\bar \pi \in \bar \Pi$ when the population follows $\bar \pi$, in the BE-MFG. This trajectory is defined as in Equation~\eqref{eq:barVq}, and the marginal law of $(\Theta_i, X_i)$ at each time step $i \in [H-1]$ is $\bar \rho^{\bar \pi, \varepsilon^0}_i$. Let $(X_i', \alpha_i')_{0\leq i \leq H-1}$ be the state-action trajectory of a representative agent following policy $\pi \in \Pi$ when the population follows $\pi$, in the PMFG $\mathcal{M}_{\Theta}$, where $\pi$ is defined as in the lemma statement. This trajectory is defined as in Equation~\eqref{eq:V_theta}, and the marginal law of $X_i'$ at each time step $i \in [H-1]$ is $\rho^{\pi_\Theta, \varepsilon^0}_{\Theta, i}$.

    We prove by induction on $i \in [H-1]$ that, conditionally on $\varepsilon^0$,
    $$
    \bar \rho^{\bar \pi, \varepsilon^0}_i(\Theta) = \delta_{\Theta} \otimes \rho^{\pi_\Theta, \varepsilon^0}_{\Theta, i}, \quad \mathbb{P}\text{-a.s.}
    $$

    \textbf{Base case.} By definition, $\bar \rho^{\bar \pi, \varepsilon^0}_0(\Theta) = \delta_{\Theta} \otimes \rho_0$ and $\rho^{\pi_\Theta, \varepsilon^0}_{\Theta, 0} = \rho_0$, so the relation holds at $i = 0$.

    \textbf{Inductive step.} Suppose that $\bar \rho^{\bar \pi, \varepsilon^0}_i(\Theta) = \delta_{\Theta} \otimes \rho^{\pi_\Theta, \varepsilon^0}_{\Theta, i}$ holds $\mathbb{P}$-a.s.\ for some $i \in [H-2]$. By definition of $\pi$,
    $$
    \alpha_i' \sim \pi_{\Theta, i}(\cdot \mid X_i', \rho^{\pi_\Theta, \varepsilon^0}_{\Theta, i}) = \bar \pi_i(\cdot \mid \Theta, X_i', \delta_\Theta \otimes \rho^{\pi_\Theta, \varepsilon^0}_{\Theta, i}).
    $$
    By the induction hypothesis, $\bar \rho^{\bar \pi, \varepsilon^0}_i(\Theta) = \delta_{\Theta} \otimes \rho^{\pi_\Theta, \varepsilon^0}_{\Theta, i}$, so
    $$
    \bar \alpha_i \sim \bar \pi_i(\cdot \mid \Theta_i, X_i, \bar \rho^{\bar \pi, \varepsilon^0}_i(\Theta)) = \bar \pi_i(\cdot \mid \Theta, X_i, \delta_\Theta \otimes \rho^{\pi_\Theta, \varepsilon^0}_{\Theta, i}).
    $$
    Hence $\mathcal{L}(X_i, \bar\alpha_i \mid \varepsilon^0) = \mathcal{L}(X_i', \alpha_i' \mid \varepsilon^0)$. Now, by Equation~\eqref{eq:barP-barr},
    $$
    (\Theta_{i+1}, X_{i+1}) \sim \bar P(\cdot \mid \Theta_i, X_i, \bar\alpha_i, \bar \rho^{\bar \pi, \varepsilon^0}_i(\Theta)) = \delta_{\Theta_i} \otimes P_\Theta(\cdot \mid X_i, \bar\alpha_i, \rho^{\pi_\Theta, \varepsilon^0}_{\Theta, i}),
    $$
    and $X_{i+1}' \sim P_\Theta(\cdot \mid X_i', \alpha_i', \rho^{\pi_\Theta, \varepsilon^0}_{\Theta, i})$. Since $\Theta_i = \Theta$ $\mathbb{P}$-a.s.\ by the induction hypothesis, we conclude that
    $$
    \bar \rho^{\bar \pi, \varepsilon^0}_{i+1}(\Theta) = \delta_{\Theta} \otimes \rho^{\pi_\Theta, \varepsilon^0}_{\Theta, i+1}, \quad \mathbb{P}\text{-a.s.},
    $$
    which completes the induction and the proof.
\end{proof}

\begin{lemma}
\label{lemma:bar_rho^pi(Theta)=d_Theta_rho^pi_family}
    Let $\{\pi_\theta\}_{\theta \in \bar\Theta} \subset \Pi$ be a family of policies, and define $\bar \pi \in \bar\Pi$ by
    $$
    \bar \pi_i(a \mid \theta, x, \bar \rho) = \pi_{\theta, i}\!\left(a \;\middle|\; x,\, \int_{\bar\Theta} \bar \rho(d\theta', \cdot)\, \right), \quad (i, \theta, x, a, \bar\rho) \in [H-1] \times \bar\Theta \times \mathcal{X} \times \mathcal{A} \times \Delta_{\bar\Theta \times \mathcal{X}}.
    $$
    Then
    $$
    \bar \rho^{\bar \pi, \varepsilon^0}(\Theta) = \delta_{\Theta} \otimes \rho^{\pi_\Theta, \varepsilon^0}_\Theta, \quad \mathbb{P}\text{-a.s.},
    $$
    where $\bar \rho^{\bar \pi, \varepsilon^0}$ is defined as in \eqref{eq:b-mf-evol-pi} and $\rho^{\pi_\Theta, \varepsilon^0}_\Theta$ is defined as in \eqref{eq:mf-evol-pi}.
\end{lemma}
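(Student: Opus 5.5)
The plan is to argue by induction on $i \in [H-1]$ that, conditionally on $\varepsilon^0$,
\[
\bar\rho^{\bar\pi,\varepsilon^0}_i(\Theta) = \delta_\Theta \otimes \rho^{\pi_\Theta,\varepsilon^0}_{\Theta,i}, \quad \mathbb{P}\text{-a.s.}
\]
This mirrors the proof of Lemma~\ref{lemma:bar_rho^pi(Theta)=d_Theta_rho^pi}. Alternatively, one can reduce directly to that lemma: the family $\{\pi_\theta\}$ together with the $\bar\pi$ defined in the statement satisfies the hypothesis of Lemma~\ref{lemma:bar_rho^pi(Theta)=d_Theta_rho^pi}. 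Indeed, for every $\theta$, $x$, $\rho$ we have $\int_{\bar\Theta}(\delta_\theta\otimes\rho)(d\theta',\cdot) = \rho$. Hence
\[
\bar\pi_i(a\mid\theta,x,\delta_\theta\otimes\rho) = \pi_{\theta,i}(a\mid x,\rho),
\]
which is exactly the compatibility condition of Lemma~\ref{lemma:bar_rho^pi(Theta)=d_Theta_rho^pi}, read with the family $\pi_\theta$ in place of $\pi_{\theta}$ induced from $\bar\pi$. I would first state this marginalization identity explicitly, and then invoke the earlier lemma.

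For self-containedness, I would also write out the induction directly. The base case is immediate, since both sides are $\delta_\Theta\otimes\rho_0$. For the inductive step, assume $\bar\rho_i(\Theta) = \delta_\Theta\otimes\rho_{\Theta,i}$, abbreviating superscripts. I would then expand $\bar\Phi(\bar\rho_i,\bar\pi_i;\bar\rho_i,\varepsilon^0_{i+1})(\theta',x')$. The integral over $\theta$ collapses onto $\theta=\Theta$ because $\bar\rho_i$ is a point mass in the $\theta$-coordinate. The policy term becomes $\bar\pi_i(a\mid\Theta,x,\bar\rho_i) = \pi_{\Theta,i}(a\mid x,\rho_{\Theta,i})$ by the definition of $\bar\pi$ and the marginalization identity. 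The kernel $\bar P$ from \eqref{eq:barP-barr} contributes $\delta_{\Theta=\theta'}P_\Theta(x'\mid x,a,\rho_{\Theta,i},\varepsilon^0_{i+1})$. Summing over $x$ and $a$ yields $\delta_\Theta(\theta')\,\Phi_\Theta(\rho_{\Theta,i},\pi_{\Theta,i};\rho_{\Theta,i},\varepsilon^0_{i+1})(x')$, and by \eqref{eq:mf-evol-pi} this equals $\delta_\Theta\otimes\rho_{\Theta,i+1}$.

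The main (mild) obstacle is measure-theoretic care. I need to justify that $\bar\pi$ is a measurable Bayesian policy, which holds because the marginalization map $\bar\rho\mapsto\int\bar\rho(d\theta',\cdot)$ is measurable and $(\theta,x,\rho)\mapsto\pi_{\theta,i}(a\mid x,\rho)$ is assumed jointly measurable. I also need to handle the "a.s." qualifier correctly: the identity holds pointwise for every realization of $(\Theta,\varepsilon^0)$, hence $\mathbb{P}$-a.s. I would add the joint measurability of $\theta\mapsto\pi_\theta$ as an implicit standing requirement on the family.
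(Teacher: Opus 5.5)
Your proposal is correct. The self-contained induction you write out is essentially the paper's proof. The paper also inducts on $i$, and from the hypothesis $\bar\rho_i(\Theta)=\delta_\Theta\otimes\rho_{\Theta,i}$ it gets $\int_{\bar\Theta}\bar\rho_i(\Theta)(d\theta',\cdot)=\rho_{\Theta,i}$. It then deduces that the action law under $\bar\pi$ is $\pi_{\Theta,i}(\cdot\mid x,\rho_{\Theta,i})$ and that $\bar P$ factors as $\delta_\Theta\otimes P_\Theta$. The paper phrases these steps through the representative agent's state-action process rather than by expanding $\bar\Phi$ as you do. Your operator-level computation has a small advantage: it shows the identity holds for every value of $\Theta$ and every noise path, so the a.s.\ qualifier comes for free.

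Your primary route, the reduction to Lemma~\ref{lemma:bar_rho^pi(Theta)=d_Theta_rho^pi} via $\bar\pi_i(a\mid\theta,x,\delta_\theta\otimes\rho)=\pi_{\theta,i}(a\mid x,\rho)$, is not how the paper proves this lemma. It is, however, exactly the trick the paper uses shortly afterwards to derive Lemma~\ref{lemma:barV=EV_family} from Lemma~\ref{lemma:barV=EV}. It is shorter, and it makes explicit that the family version adds nothing new. Lemma~\ref{lemma:bar_rho^pi(Theta)=d_Theta_rho^pi} only constrains $\bar\pi$ on product mean fields $\delta_\theta\otimes\rho$, and these are the only mean fields the flow ever visits. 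The reduction is not circular, since Lemma~\ref{lemma:bar_rho^pi(Theta)=d_Theta_rho^pi} is proved independently.

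Your remark that $\bar\pi\in\bar\Pi$ requires joint measurability of $(\theta,x,\rho)\mapsto\pi_{\theta,i}(a\mid x,\rho)$ is legitimate. The statement leaves this hypothesis implicit when it writes ``define $\bar\pi\in\bar\Pi$''.
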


\begin{proof}
    We prove by induction on $i \in [H-1]$ that, conditionally on $\varepsilon^0$,
    $$
    \bar \rho^{\bar \pi, \varepsilon^0}_i(\Theta) = \delta_{\Theta} \otimes \rho^{\pi_\Theta, \varepsilon^0}_{\Theta, i}, \quad \mathbb{P}\text{-a.s.}
    $$

    \textbf{Base case.} By definition, $\bar \rho^{\bar \pi, \varepsilon^0}_0(\Theta) = \delta_\Theta \otimes \rho_0$ and $\rho^{\pi_\Theta, \varepsilon^0}_{\Theta, 0} = \rho_0$, so the relation holds at $i = 0$.

    \textbf{Inductive step.} Suppose that $\bar \rho^{\bar \pi, \varepsilon^0}_i(\Theta) = \delta_{\Theta} \otimes \rho^{\pi_\Theta, \varepsilon^0}_{\Theta, i}$ holds $\mathbb{P}$-a.s.\ for some $i \in [H-2]$. By the induction hypothesis,
    $$
    \int_{\bar\Theta} \bar \rho^{\bar \pi, \varepsilon^0}_i(\Theta)(d\theta', \cdot) = \int_{\bar\Theta} \delta_\Theta(d\theta')\, \rho^{\pi_\Theta, \varepsilon^0}_{\Theta, i}(\cdot)= \rho^{\pi_\Theta, \varepsilon^0}_{\Theta, i}.
    $$
    Therefore, for any agent with type $\Theta$ and state $x$, the action distribution under $\bar \pi$ satisfies
    $$
    \bar \pi_i(\cdot \mid \Theta, x, \bar \rho^{\bar \pi, \varepsilon^0}_i(\Theta)) = \pi_{\Theta, i}\!\left(\cdot \;\middle|\; x,\, \int_{\bar\Theta} \bar \rho^{\bar \pi, \varepsilon^0}_i(\Theta)(d\theta', \cdot)\right) = \pi_{\Theta, i}(\cdot \mid x, \rho^{\pi_\Theta, \varepsilon^0}_{\Theta, i}).
    $$
    Now, by Equation~\eqref{eq:barP-barr} and the induction hypothesis,
    $$
    (\Theta_{i+1}, X_{i+1}) \sim \bar P(\cdot \mid \Theta_i, X_i, \bar\alpha_i, \bar \rho^{\bar \pi, \varepsilon^0}_i(\Theta)) = \delta_{\Theta_i} \otimes P_\Theta(\cdot \mid X_i, \bar\alpha_i, \rho^{\pi_\Theta, \varepsilon^0}_{\Theta, i}),
    $$
    where $\Theta_i = \Theta$ $\mathbb{P}$-a.s.\ by the induction hypothesis. Since the action and transition distributions coincide with those of the PMFG $\mathcal{M}_\Theta$ under policy $\pi_\Theta$, we conclude that
    $$
    \bar \rho^{\bar \pi, \varepsilon^0}_{i+1}(\Theta) = \delta_{\Theta} \otimes \rho^{\pi_\Theta, \varepsilon^0}_{\Theta, i+1}, \quad \mathbb{P}\text{-a.s.},
    $$
    which completes the induction and the proof.
\end{proof}

\begin{lemma}
\label{lemma:bar_rho^pi_pi'(Theta)=d_Theta_rho^pi_pi'}
    Let $\pi, \pi' \in \Pi$ and $\bar \pi, \bar \pi' \in \bar\Pi$ be such that for any $\theta \in \bar \Theta$ and $(i, x, a, \rho) \in [H-1] \times \mathcal{X} \times \mathcal{A} \times \Delta_\mathcal{X}$,
    $$
    \pi_{\theta, i}(a \mid x, \rho) = \bar \pi_i (a \mid \theta, x, \delta_\theta \otimes \rho)
    \quad \text{and} \quad
    \pi_{\theta, i}'(a \mid x, \rho) = \bar \pi_i' (a \mid \theta, x, \delta_\theta \otimes \rho).
    $$
    Then
    $$
    \bar \rho^{\bar \pi, \bar \pi', \varepsilon^0}(\Theta) = \delta_{\Theta} \otimes \rho^{\pi_\Theta, \pi'_\Theta, \varepsilon^0}_\Theta , \quad \mathbb{P}\text{-a.s.},
    $$
    where $\bar \rho^{\bar \pi, \bar \pi', \varepsilon^0}(\Theta)$ is defined as in \eqref{eq:b-mf-evol-pi-pi'} and $\rho^{\pi_\Theta, \pi'_\Theta, \varepsilon^0}_\Theta$ is defined as in \eqref{eq:mf-evol-pi-pi'}.
\end{lemma}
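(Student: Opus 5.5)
We will argue by induction on the time index $i \in [H-1]$, following the scheme of Lemma~\ref{lemma:bar_rho^pi(Theta)=d_Theta_rho^pi}, and prove that, conditionally on $\varepsilon^0$,
$$
\bar \rho^{\bar \pi, \bar\pi', \varepsilon^0}_i(\Theta) = \delta_\Theta \otimes \rho^{\pi_\Theta, \pi'_\Theta, \varepsilon^0}_{\Theta, i}, \quad \PP\text{-a.s.}
$$
A first and essential input is Lemma~\ref{lemma:bar_rho^pi(Theta)=d_Theta_rho^pi} itself, applied to the pair $(\pi,\bar\pi)$. It gives $\bar\rho^{\bar\pi,\varepsilon^0}_i(\Theta) = \delta_\Theta\otimes\rho^{\pi_\Theta,\varepsilon^0}_{\Theta,i}$ a.s. for all $i$. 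This identity is what lets us evaluate the population argument of both $\bar P$ and $\bar\pi'$ along the deviating agent's flow.

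The base case is immediate: both sides equal $\delta_\Theta\otimes\rho_0$ at $i=0$. For the inductive step, assume the identity holds at rank $i$. We then expand $\bar\Phi$ in \eqref{eq:b-mf-evol-pi-pi'} and proceed in four moves.

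First, the induction hypothesis shows that the measure $\bar\rho^{\bar\pi,\bar\pi',\varepsilon^0}_i(\Theta)(d\theta,x)$ is concentrated on $\theta=\Theta$ with $x$-marginal $\rho^{\pi_\Theta,\pi'_\Theta,\varepsilon^0}_{\Theta,i}(x)$. The integral over $\theta$ therefore collapses to evaluation at $\theta=\Theta$.

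Second, the population argument $\bar\rho^{\bar\pi,\varepsilon^0}_i(\Theta)$ is rewritten via Lemma~\ref{lemma:bar_rho^pi(Theta)=d_Theta_rho^pi} as $\delta_\Theta\otimes\rho^{\pi_\Theta,\varepsilon^0}_{\Theta,i}$. Consequently
$$
\bar\pi'_i(a\mid\Theta,x,\bar\rho^{\bar\pi,\varepsilon^0}_i(\Theta)) = \pi'_{\Theta,i}(a\mid x,\rho^{\pi_\Theta,\varepsilon^0}_{\Theta,i})
$$
by the hypothesis linking $\pi'$ and $\bar\pi'$.

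Third, by \eqref{eq:barP-barr}, $\bar P(\theta',x'\mid\Theta,x,a,\bar\rho^{\bar\pi,\varepsilon^0}_i(\Theta),\varepsilon^0_{i+1}) = \delta_{\Theta=\theta'}\,P_\Theta(x'\mid x,a,\rho^{\pi_\Theta,\varepsilon^0}_{\Theta,i},\varepsilon^0_{i+1})$. Here we use that the $\cX$-marginal of $\delta_\Theta\otimes\rho$ is $\rho$.

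Fourth, substituting these three facts gives
$$
\bar\rho^{\bar\pi,\bar\pi',\varepsilon^0}_{i+1}(\Theta) = \delta_\Theta\otimes\Phi_\Theta\bigl(\rho^{\pi_\Theta,\pi'_\Theta,\varepsilon^0}_{\Theta,i},\pi'_{\Theta,i};\rho^{\pi_\Theta,\varepsilon^0}_{\Theta,i},\varepsilon^0_{i+1}\bigr).
$$
By \eqref{eq:mf-evol-pi-pi'}, the right-hand side equals $\delta_\Theta\otimes\rho^{\pi_\Theta,\pi'_\Theta,\varepsilon^0}_{\Theta,i+1}$, which closes the induction.

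The main obstacle is not conceptual. It lies in the measure-theoretic bookkeeping: each equality holds only $\PP$-a.s., and the null sets must be controlled jointly in $(\Theta,\varepsilon^0)$. Since there are finitely many time steps and each step produces one null set, a finite union of null sets suffices. There is also a notational subtlety: the strategy is evaluated at the population flow $\bar\rho^{\bar\pi}$, not at the deviating flow $\bar\rho^{\bar\pi,\bar\pi'}$. Keeping these two flows distinct is exactly why Lemma~\ref{lemma:bar_rho^pi(Theta)=d_Theta_rho^pi} is needed.
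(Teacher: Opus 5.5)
Your proposal is correct and follows the paper's route. It is an induction on $i$ in which Lemma~\ref{lemma:bar_rho^pi(Theta)=d_Theta_rho^pi} identifies the population flow $\bar\rho^{\bar\pi,\varepsilon^0}_i(\Theta)=\delta_\Theta\otimes\rho^{\pi_\Theta,\varepsilon^0}_{\Theta,i}$ that enters both $\bar\pi'_i$ and $\bar P$; your explicit expansion of $\bar\Phi$ spells out what the paper calls ``the same induction argument.'' The null-set bookkeeping you flag is not actually needed: given $(\Theta,\varepsilon^0)$ both recursions are deterministic and the hypotheses hold for every $\theta$, so the identity holds pointwise.
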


\begin{proof}
    The proof follows the same induction argument as in the proof of Lemma~\ref{lemma:bar_rho^pi(Theta)=d_Theta_rho^pi}. Lemma~\ref{lemma:bar_rho^pi(Theta)=d_Theta_rho^pi} is required to identify the population flow $\bar \rho^{\bar \pi, \varepsilon^0}(\Theta) = \delta_\Theta \otimes \rho^{\pi_\Theta, \varepsilon^0}_\Theta$ at each step of the induction.
\end{proof}

\begin{lemma}
\label{lemma:bar_rho^pi_pi'(Theta)=d_Theta_rho^pi_pi'_family}
    Let $\{\pi_\theta\}_{\theta \in \bar\Theta}, \{\pi'_\theta\}_{\theta \in \bar\Theta} \subset \Pi$ be two families of policies, and define $\bar \pi, \bar\pi' \in \bar\Pi$ by
    $$
    \bar \pi_i(a \mid \theta, x, \bar \rho) = \pi_{\theta, i}\!\left(a \;\middle|\; x,\, \int_{\bar\Theta} \bar \rho(d\theta', \cdot)\, \right), \quad \bar \pi_i'(a \mid \theta, x, \bar \rho) = \pi_{\theta, i}'\!\left(a \;\middle|\; x,\, \int_{\bar\Theta} \bar \rho(d\theta', \cdot)\, \right).
    $$
    Then
    $$
    \bar \rho^{\bar \pi, \bar\pi', \varepsilon^0}(\Theta) = \delta_{\Theta} \otimes \rho^{\pi_\Theta, \pi'_\Theta, \varepsilon^0}_\Theta, \quad \mathbb{P}\text{-a.s.},
    $$
    where $\bar \rho^{\bar \pi, \bar\pi', \varepsilon^0}$ is defined as in \eqref{eq:b-mf-evol-pi-pi'} and $\rho^{\pi_\Theta, \pi'_\Theta, \varepsilon^0}_\Theta$ is defined as in \eqref{eq:mf-evol-pi-pi'}.
\end{lemma}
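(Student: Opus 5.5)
We argue by induction on the time index $i \in [H-1]$, proving that conditionally on $\varepsilon^0$,
\[
\bar \rho^{\bar \pi, \bar\pi', \varepsilon^0}_i(\Theta) = \delta_{\Theta} \otimes \rho^{\pi_\Theta, \pi'_\Theta, \varepsilon^0}_{\Theta, i}, \quad \mathbb{P}\text{-a.s.}
\]
This mirrors the proof of Lemma~\ref{lemma:bar_rho^pi(Theta)=d_Theta_rho^pi_family}. The one new ingredient is that the population argument inside $\bar\Phi$ in \eqref{eq:b-mf-evol-pi-pi'} is the population flow $\bar\rho^{\bar\pi,\varepsilon^0}_i(\Theta)$, not the deviating agent's own flow. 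That population flow is already identified by Lemma~\ref{lemma:bar_rho^pi(Theta)=d_Theta_rho^pi_family}, applied to the family $\{\pi_\theta\}$ and the policy $\bar\pi$: it gives $\bar\rho^{\bar\pi,\varepsilon^0}_i(\Theta) = \delta_\Theta\otimes\rho^{\pi_\Theta,\varepsilon^0}_{\Theta,i}$ for all $i$, $\mathbb{P}$-a.s. We invoke this lemma first and use it throughout.

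\textbf{Base case.} Both sides equal $\delta_\Theta\otimes\rho_0$ by definition.

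\textbf{Inductive step.} Assume the claim at rank $i$. We expand $\bar\Phi$ with first argument $\delta_\Theta\otimes\rho^{\pi_\Theta,\pi'_\Theta,\varepsilon^0}_{\Theta,i}$ and third argument $\delta_\Theta\otimes\rho^{\pi_\Theta,\varepsilon^0}_{\Theta,i}$. Since the first argument is a point mass in $\theta$, the integral over $\theta$ collapses to evaluation at $\theta=\Theta$. The $\theta$-marginal of the population argument integrates out to $\int_{\bar\Theta}\bar\rho^{\bar\pi,\varepsilon^0}_i(\Theta)(d\theta'',\cdot) = \rho^{\pi_\Theta,\varepsilon^0}_{\Theta,i}$. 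Consequently:
\begin{itemize}
\item by the definition of $\bar\pi'$, the strategy $\bar\pi'_i(a\mid\Theta,x,\bar\rho^{\bar\pi,\varepsilon^0}_i(\Theta))$ equals $\pi'_{\Theta,i}(a\mid x,\rho^{\pi_\Theta,\varepsilon^0}_{\Theta,i})$;
\item by \eqref{eq:barP-barr}, the kernel becomes $\delta_{\Theta=\theta'}P_\Theta(x'\mid x,a,\rho^{\pi_\Theta,\varepsilon^0}_{\Theta,i},\varepsilon^0_{i+1})$.
\end{itemize}
Substituting these gives
\[
\bar\rho^{\bar\pi,\bar\pi',\varepsilon^0}_{i+1}(\Theta) = \delta_\Theta\otimes \Phi_\Theta\big(\rho^{\pi_\Theta,\pi'_\Theta,\varepsilon^0}_{\Theta,i},\pi'_{\Theta,i};\rho^{\pi_\Theta,\varepsilon^0}_{\Theta,i},\varepsilon^0_{i+1}\big),
\]
which by \eqref{eq:mf-evol-pi-pi'} is $\delta_\Theta\otimes\rho^{\pi_\Theta,\pi'_\Theta,\varepsilon^0}_{\Theta,i+1}$. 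This closes the induction.

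The main obstacle is purely bookkeeping: keeping the two roles distinct. The deviating agent's flow carries the strategy $\bar\pi'_i$, while the conditioning measure passed to both the strategy and $\bar P$ is the population flow generated by $\bar\pi$. Each of these must be handled by the correct hypothesis, namely the induction hypothesis versus Lemma~\ref{lemma:bar_rho^pi(Theta)=d_Theta_rho^pi_family}. A secondary point is measurability and the $\mathbb{P}$-a.s. qualifiers. Intersecting the null sets from the two lemmas over the finitely many $i\in[H-1]$ still leaves a null set, so the identity holds jointly for all $i$ $\mathbb{P}$-a.s.
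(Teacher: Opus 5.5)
Your proof is correct and follows the paper's route. The paper also runs the induction of Lemma~\ref{lemma:bar_rho^pi(Theta)=d_Theta_rho^pi_family}, using that lemma to identify the population flow $\bar\rho^{\bar\pi,\varepsilon^0}_i(\Theta)=\delta_\Theta\otimes\rho^{\pi_\Theta,\varepsilon^0}_{\Theta,i}$ that enters both $\bar\pi'_i$ and $\bar P$; your explicit expansion of $\bar\Phi$ just spells out what the paper's one-line proof leaves implicit.
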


\begin{proof}
    The proof follows the same induction argument as in Lemma~\ref{lemma:bar_rho^pi(Theta)=d_Theta_rho^pi_family}, which is also required to identify the population flow $\bar\rho^{\bar\pi, \varepsilon^0}(\Theta) = \delta_\Theta \otimes \rho^{\pi_\Theta, \varepsilon^0}_\Theta$ at each step of the induction.
\end{proof}

\begin{lemma}
\label{lemma:barV=EV}
    Let $\bar \pi \in \bar \Pi$ and define, for any $\theta \in \bar\Theta$ and $(i, x, a, \rho) \in [H-1] \times \mathcal{X} \times \mathcal{A} \times \Delta_\mathcal{X}$,
    $$
    \pi_{\theta, i}(a \mid x, \rho) = \bar \pi_i(a \mid \theta, x, \delta_\theta \otimes \rho).
    $$
    Let $\bar\pi' \in \bar\Pi$ and define $\pi'_\theta$ similarly. Then
    $$
    \bar V_q(\bar \pi'; \bar \pi) = \mathbb{E}_{\Theta \sim q}\!\left[V_\Theta(\pi'_\Theta; \pi_\Theta)\right].
    $$
\end{lemma}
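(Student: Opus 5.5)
We plan to condition on the random parameter and then apply the tower property. By definition \eqref{eq:barVq}, we have
\[
\bar V_q(\bar\pi';\bar\pi)=\mathbb{E}\Big[\sum_{i=0}^{H-1}\bar r(\Theta_i,X_i,\alpha_i,\bar\rho^{\bar\pi,\varepsilon^0}_i(\Theta))\Big],
\]
where $\Theta_0=\Theta\sim q$ is independent of $\varepsilon^0$ and of the idiosyncratic randomness. Writing $\mathbb{E}[\cdot]=\mathbb{E}_{\Theta\sim q}\big[\mathbb{E}[\cdot\mid\Theta]\big]$, which is legitimate because $\bar r$ is bounded since $r$ is, it suffices to show that for $q$-a.e.\ $\theta$ the conditional expectation given $\Theta=\theta$ equals $V_\theta(\pi'_\theta;\pi_\theta)$ as defined in \eqref{eq:V_theta}.

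First, Lemma~\ref{lemma:bar_rho^pi(Theta)=d_Theta_rho^pi} gives $\bar\rho^{\bar\pi,\varepsilon^0}_i(\Theta)=\delta_\Theta\otimes\rho^{\pi_\Theta,\varepsilon^0}_{\Theta,i}$ almost surely. Consequently, $\int_{\bar\Theta}\bar\rho^{\bar\pi,\varepsilon^0}_i(\Theta)(d\theta'',\cdot)=\rho^{\pi_\Theta,\varepsilon^0}_{\Theta,i}$, and by \eqref{eq:barP-barr} we get $\bar r(\Theta_i,X_i,\alpha_i,\bar\rho^{\bar\pi,\varepsilon^0}_i(\Theta))=r_\Theta(X_i,\alpha_i,\rho^{\pi_\Theta,\varepsilon^0}_{\Theta,i})$, using also that $\Theta_i=\Theta$ almost surely because $\bar P$ keeps the parameter component fixed. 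Second, we identify the law of the deviating agent. The transition $\bar P$ reduces to $\delta_\Theta\otimes P_\Theta(\cdot\mid X_i,\alpha_i,\rho^{\pi_\Theta,\varepsilon^0}_{\Theta,i},\varepsilon^0_{i+1})$, and the action satisfies $\alpha_i\sim\bar\pi'_i(\cdot\mid\Theta,X_i,\delta_\Theta\otimes\rho^{\pi_\Theta,\varepsilon^0}_{\Theta,i})=\pi'_{\Theta,i}(\cdot\mid X_i,\rho^{\pi_\Theta,\varepsilon^0}_{\Theta,i})$. Hence, conditionally on $(\Theta,\varepsilon^0)$, the process $(X_i,\alpha_i)_i$ has exactly the law of the agent in $\mathcal{M}_\Theta$ playing $\pi'_\Theta$ against the population $\pi_\Theta$. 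Equivalently, by Lemma~\ref{lemma:bar_rho^pi_pi'(Theta)=d_Theta_rho^pi_pi'}, its state marginals are $\delta_\Theta\otimes\rho^{\pi_\Theta,\pi'_\Theta,\varepsilon^0}_{\Theta,i}$, and its state-action marginals are $\mu^{\pi_\Theta,\pi'_\Theta,\varepsilon^0}_{\Theta,i}$.

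Combining the two steps, each reward term can be written as
\[
\mathbb{E}\big[\bar r(\cdots)\mid\Theta,\varepsilon^0\big]=\sum_{x,a}\mu^{\pi_\Theta,\pi'_\Theta,\varepsilon^0}_{\Theta,i}(x,a)\,r_\Theta(x,a,\rho^{\pi_\Theta,\varepsilon^0}_{\Theta,i}).
\]
Integrating over $\varepsilon^0$, whose law $\nu^0$ does not depend on $\Theta$ by independence, and summing over $i$ yields $V_\Theta(\pi'_\Theta;\pi_\Theta)$. Taking the expectation over $\Theta\sim q$ then gives the claim.

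The main technical point is that the almost-sure identities from Lemmas~\ref{lemma:bar_rho^pi(Theta)=d_Theta_rho^pi} and~\ref{lemma:bar_rho^pi_pi'(Theta)=d_Theta_rho^pi_pi'} must hold jointly in $(\Theta,\varepsilon^0)$, and the map $\theta\mapsto V_\theta(\pi'_\theta;\pi_\theta)$ must be measurable so that the outer expectation makes sense. We will obtain measurability from the measurability of $P$, $r$, and $\bar\pi,\bar\pi'$ in $\theta$: the finite recursions defining $\rho_\theta$ and $\mu_\theta$ are compositions of measurable maps, and Fubini then applies to the bounded integrand.
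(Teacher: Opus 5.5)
Your proposal is correct and follows essentially the same route as the paper. It uses Lemma~\ref{lemma:bar_rho^pi(Theta)=d_Theta_rho^pi} to reduce the population flow to $\delta_\Theta\otimes\rho^{\pi_\Theta,\varepsilon^0}_{\Theta,i}$, rewrites $\bar r$ as $r_\Theta$, identifies the deviating agent's conditional state-action law with $\mu^{\pi_\Theta,\pi'_\Theta,\varepsilon^0}_{\Theta,i}$, and concludes by the tower property over $\Theta\sim q$. Your proof also spells out three details the paper leaves implicit: conditioning jointly on $(\Theta,\varepsilon^0)$, independence of $\Theta$ and $\varepsilon^0$, and measurability of $\theta\mapsto V_\theta(\pi'_\theta;\pi_\theta)$.
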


\begin{proof}
    By definition of $\bar V_q(\bar\pi'; \bar\pi)$ in \eqref{eq:barVq}, we have
    $$
    \bar V_q(\bar\pi'; \bar\pi) = \mathbb{E}\!\left[\sum_{i=0}^{H-1} \bar r\!\left(\Theta_i, X_i, \bar\alpha_i, \bar\rho^{\bar\pi, \varepsilon^0}_i(\Theta)\right)\right],
    $$
    where $(\Theta_i, X_i, \alpha_i)$ is the trajectory of a representative agent acting under $\bar\pi'$ while the population follows $\bar\pi$. By Lemma~\ref{lemma:bar_rho^pi(Theta)=d_Theta_rho^pi}, $\bar\rho^{\bar\pi, \varepsilon^0}_i(\Theta) = \delta_\Theta \otimes \rho^{\pi_\Theta, \varepsilon^0}_{\Theta, i}$ $\PP$-a.s. In the proof of Lemma~\ref{lemma:bar_rho^pi_pi'(Theta)=d_Theta_rho^pi_pi'} it is shown that the joint law of $(X_i, \alpha_i)$ conditioned on $\Theta$ is equal to $\mu_{\Theta, i}^{\pi, \pi', \varepsilon^0}$. Using the definition of $\bar r$ in \eqref{eq:barP-barr} and the fact that $\bar\rho^{\bar\pi, \varepsilon^0}_i(\Theta) = \delta_\Theta \otimes \rho^{\pi_\Theta, \varepsilon^0}_{\Theta, i}$, we obtain
    $$
    \bar r\!\left(\Theta_i, X_i, \bar\alpha_i, \bar\rho^{\bar\pi, \varepsilon^0}_i(\Theta)\right) = r_\Theta\!\left( X_i, \bar\alpha_i, \rho^{\pi_\Theta, \varepsilon^0}_{\Theta, i}\right), \quad \mathbb{P}\text{-a.s.}
    $$
    Taking expectations and conditioning on $\Theta \sim q$ yields
    $$
    \bar V(\bar\pi'; \bar\pi) = \mathbb{E}_{\Theta \sim q}\!\left[\mathbb{E}\!\left[\sum_{i=0}^{H-1} r_{\Theta}\!\left( X_i', \alpha_i', \rho^{\pi_\Theta, \varepsilon^0}_{\Theta, i}\right) \;\middle|\; \Theta\right]\right] = \mathbb{E}_{\Theta \sim q}\!\left[V_\Theta(\pi'_\Theta; \pi_\Theta)\right],
    $$
    which completes the proof.
\end{proof}

\begin{lemma}
\label{lemma:barV=EV_family}
    Let $\{\pi_\theta\}_{\theta \in \bar\Theta} \subset \Pi$ be a family of policies and define $\bar\pi \in \bar\Pi$ by
    $$
    \bar\pi_i(a \mid \theta, x, \bar\rho) = \pi_{\theta,i}\!\left(a \;\middle|\; x,\, \int_{\bar\Theta} \bar\rho(d\theta', \cdot)\,\right).
    $$
    Let $\{\pi'_\theta\}_{\theta \in \bar\Theta} \subset \Pi$ and define $\bar\pi'$ similarly. Then
    $$
    \bar V(\bar\pi'; \bar\pi) = \mathbb{E}_{\Theta \sim q}\!\left[V_\Theta(\pi'_\Theta; \pi_\Theta)\right].
    $$
\end{lemma}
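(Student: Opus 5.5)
The plan is to mirror the proof of Lemma~\ref{lemma:barV=EV}, replacing the point-mass construction by the family lift and Lemma~\ref{lemma:bar_rho^pi(Theta)=d_Theta_rho^pi} by Lemma~\ref{lemma:bar_rho^pi(Theta)=d_Theta_rho^pi_family}. I start from the definition \eqref{eq:barVq}, writing
\[
\bar V_q(\bar\pi';\bar\pi)=\mathbb{E}\Big[\sum_{i=0}^{H-1}\bar r\big(\Theta_i,X_i,\bar\alpha_i,\bar\rho^{\bar\pi,\varepsilon^0}_i(\Theta)\big)\Big],
\]
where $(\Theta_i,X_i,\bar\alpha_i)$ is the trajectory of a representative agent using $\bar\pi'$ against a population using $\bar\pi$.

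\textbf{Step 1 (population flow).} By Lemma~\ref{lemma:bar_rho^pi(Theta)=d_Theta_rho^pi_family}, $\bar\rho^{\bar\pi,\varepsilon^0}_i(\Theta)=\delta_\Theta\otimes\rho^{\pi_\Theta,\varepsilon^0}_{\Theta,i}$ $\PP$-a.s. Consequently $\int_{\bar\Theta}\bar\rho^{\bar\pi,\varepsilon^0}_i(\Theta)(d\theta',\cdot)=\rho^{\pi_\Theta,\varepsilon^0}_{\Theta,i}$, and by \eqref{eq:barP-barr} the type stays frozen, $\Theta_i=\Theta$ a.s.

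\textbf{Step 2 (deviating agent).} I then check that, conditionally on $(\Theta,\varepsilon^0)$, the deviating agent's state-action pair $(X_i,\bar\alpha_i)$ has law $\mu^{\pi_\Theta,\pi'_\Theta,\varepsilon^0}_{\Theta,i}$. The argument is an induction on $i$, as in Lemma~\ref{lemma:bar_rho^pi_pi'(Theta)=d_Theta_rho^pi_pi'_family}. By construction of $\bar\pi'$ and Step 1,
\[
\bar\pi'_i\big(\cdot\mid\Theta,x,\bar\rho^{\bar\pi,\varepsilon^0}_i(\Theta)\big)=\pi'_{\Theta,i}\big(\cdot\mid x,\rho^{\pi_\Theta,\varepsilon^0}_{\Theta,i}\big).
\]
Likewise, the $x$-marginal of $\bar P$ is $P_\Theta(\cdot\mid x,a,\rho^{\pi_\Theta,\varepsilon^0}_{\Theta,i},\varepsilon^0_{i+1})$. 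These are exactly the one-step action and transition laws of the process defining $V_\Theta(\pi'_\Theta;\pi_\Theta)$ in \eqref{eq:V_theta}.

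\textbf{Step 3 (rewards and expectation).} With the laws matched, I substitute into the reward: $\bar r(\Theta_i,X_i,\bar\alpha_i,\bar\rho^{\bar\pi,\varepsilon^0}_i(\Theta))=r_\Theta(X_i,\bar\alpha_i,\rho^{\pi_\Theta,\varepsilon^0}_{\Theta,i})$ a.s. I then condition on $\Theta$ via the tower property. Since $r$ is bounded and $H$ is finite, Fubini applies, and the inner conditional expectation equals $V_\Theta(\pi'_\Theta;\pi_\Theta)$. Integrating over $\Theta\sim q$ gives the claim.

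The main obstacle is measure-theoretic rather than computational. Using the tower property requires the map $\theta\mapsto V_\theta(\pi'_\theta;\pi_\theta)$ to be measurable, and the conditioning on $\Theta$ to be legitimate. I plan to handle this by assuming the family is jointly measurable in $(\theta,x,\rho)$, which is implicit in requiring $\bar\pi\in\bar\Pi$. Under that assumption, each $\rho^{\pi_\theta,\varepsilon^0}_{\theta,i}$ is a measurable function of $(\theta,\varepsilon^0)$ by induction through $\Phi_\theta$. The conditional law of the deviating agent given $(\Theta,\varepsilon^0)$ is then a regular kernel built from measurable pieces. I also need $\Theta$ independent of $\varepsilon^0$, which I take as part of the setup.
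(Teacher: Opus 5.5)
Your argument is correct, but it re-derives from scratch what the paper gets by a one-line reduction.

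\textbf{The paper's route.} The paper observes that for the family lift, $\int_{\bar\Theta}(\delta_\theta\otimes\rho)(d\theta',\cdot)=\rho$. Hence $\bar\pi_i(a\mid\theta,x,\delta_\theta\otimes\rho)=\pi_{\theta,i}(a\mid x,\rho)$, and likewise for $\bar\pi'$. In other words, the policies that Lemma~\ref{lemma:barV=EV} extracts from $\bar\pi,\bar\pi'$ by evaluating at point-mass mean fields are exactly the given $\pi_\theta,\pi'_\theta$. So Lemma~\ref{lemma:barV=EV} applies verbatim and no new induction is needed.

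\textbf{Your route.} You repeat the proof of Lemma~\ref{lemma:barV=EV} with Lemma~\ref{lemma:bar_rho^pi(Theta)=d_Theta_rho^pi_family} in place of Lemma~\ref{lemma:bar_rho^pi(Theta)=d_Theta_rho^pi}, and redo the deviating-agent induction. Your Step~1 already contains the key identity. Applying it to an arbitrary $\delta_\theta\otimes\rho$, rather than to the population flow, would have let you stop there.

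\textbf{What each buys.} Your version is self-contained. It also makes explicit the measure-theoretic points that the paper's proof of Lemma~\ref{lemma:barV=EV} leaves implicit:
\begin{itemize}
\item joint measurability of $(\theta,x,\rho)\mapsto\pi_{\theta,i}(\cdot\mid x,\rho)$;
\item measurability of $\theta\mapsto V_\theta(\pi'_\theta;\pi_\theta)$;
\item independence of $\Theta$ and $\varepsilon^0$.
\end{itemize}
The reduction does not remove these issues; it delegates them to Lemma~\ref{lemma:barV=EV}. In exchange it is immediate, and it shows that the family-lift statements carry no content beyond their point-mass counterparts.
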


\begin{proof}
    By construction of $\bar\pi$ and $\bar\pi'$, we have for any $\theta \in \bar\Theta$ and $(i, x, a, \rho) \in [H-1] \times \mathcal{X} \times \mathcal{A} \times \Delta_\mathcal{X}$,
    $$
    \bar\pi_i(a \mid \theta, x, \delta_\theta \otimes \rho) = \pi_{\theta,i}(a \mid x, \rho), \quad \bar\pi_i'(a \mid \theta, x, \delta_\theta \otimes \rho) = \pi_{\theta,i}'(a \mid x, \rho).
    $$
    Hence $\bar\pi$ and $\bar\pi'$ satisfy the assumptions of Lemma~\ref{lemma:barV=EV}, which directly gives
    $$
    \bar V(\bar\pi'; \bar\pi) = \mathbb{E}_{\Theta \sim q}\!\left[V_\Theta(\pi'_\Theta; \pi_\Theta)\right]. \qedhere
    $$
\end{proof}

The proof of Theorem~\ref{thm:NE_BE-MFG_induces_NE_PMFG} requires selecting, for each parameter, a best response in $\cM_\theta$, in such a way that the resulting family is again a Bayesian policy. The following lemma shows that such a measurable selection always exists, and that no measurable selection theorem is needed: since $\cA$ is finite, the selection can be written explicitly.

\begin{lemma}
\label{lemma:measurable-br-selection}
    For every $\bar\pi \in \bar\Pi$, there exists $\bar\pi^{\mathrm{br}} \in \bar\Pi$ such that, for every $\theta \in \bar\Theta$, $\pi^{\mathrm{br}}_\theta \in \BR_\theta(\pi_\theta)$, where $\pi_\theta$ and $\pi^{\mathrm{br}}_\theta$ denote the policies induced by $\bar\pi$ and $\bar\pi^{\mathrm{br}}$ as in Theorem~\ref{thm:NE_BE-MFG_induces_NE_PMFG}.
\end{lemma}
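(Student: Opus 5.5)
Fix $\bar\pi \in \bar\Pi$ and write $\pi_\theta$ for the policy it induces in $\cM_\theta$, $\pi_{\theta,i}(a\mid x,\rho) = \bar\pi_i(a\mid\theta,x,\delta_\theta\otimes\rho)$. The plan is to construct an explicit deterministic best response by backward induction (dynamic programming) on the finite horizon. The state of a single deviating agent is its position $x$ together with the population flow. The population flow $\rho^{\pi_\theta,\varepsilon^0}_{\theta,i}$ is a measurable function of $(\theta,\varepsilon^0_{1:i})$, obtained by iterating $\Phi_\theta$ from \eqref{eq:mf-evol-pi}. Measurability in $\theta$ follows from the joint measurability of $P$ and $\bar\pi$, since compositions and finite sums of measurable maps are measurable.

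\textbf{Step 1: value functions.} Because strategies take $(x,\rho)$ as input, the deviating agent's decision at time $i$ may depend on the population mean field $\rho$ but not on the common-noise history directly. I will therefore work with the conditional expectation of future rewards given $(\theta,X_i,\rho_{\theta,i})$. Set $Q_{H-1}(\theta,x,a,\rho) = r_\theta(x,a,\rho)$, and proceed backward:
\begin{equation*}
Q_i(\theta,x,a,\rho) = r_\theta(x,a,\rho) + \EE\Big[\max_{a'} Q_{i+1}\big(\theta, X_{i+1}, a', \rho^{\pi_\theta}_{\theta,i+1}\big) \,\Big|\, \rho^{\pi_\theta}_{\theta,i}=\rho,\ X_i=x,\ \alpha_i=a\Big].
\end{equation*}
To avoid regular conditional probabilities in $\rho$, I will instead use the Markov structure of the flow. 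The population flow is Markov with the transition $\rho\mapsto\Phi_\theta(\rho,\pi_{\theta,i};\rho,e^0)$, $e^0\sim\nu^0$, and $\varepsilon^0_{i+1}$ is independent of the past. This allows $Q_i$ to be written as an integral against $\nu^0$:
\begin{equation*}
Q_i(\theta,x,a,\rho)=r_\theta(x,a,\rho)+\int\sum_{x'}P_\theta(x'\mid x,a,\rho,e)\,\max_{a'}Q_{i+1}\big(\theta,x',a',\Phi_\theta(\rho,\pi_{\theta,i};\rho,e)\big)\,\nu^0(de).
\end{equation*}
Each $Q_i$ is then jointly measurable in $(\theta,x,a,\rho)$ by Fubini--Tonelli, using that $r$ is bounded and $\cA,\cX$ are finite. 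With this representation, $Q_i$ is defined for every $\rho$ directly, not just $\PP$-a.s.

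\textbf{Step 2: explicit selection.} Enumerate $\cA=\{a_1,\dots,a_K\}$. Let $\bar\pi^{\mathrm{br}}_i(\cdot\mid\theta,x,\bar\rho)=\delta_{a_k}$, where $k$ is the smallest index attaining $\max_j Q_i(\theta,x,a_j,\int\bar\rho(d\theta'',\cdot))$. The set where index $k$ is selected is a finite intersection of measurable sets of the form $\{Q_i(\cdot,a_k)\ge Q_i(\cdot,a_j)\}$ and $\{Q_i(\cdot,a_k)> Q_i(\cdot,a_j)\}$ (the latter for $j<k$), and the map $\bar\rho\mapsto\int\bar\rho(d\theta'',\cdot)$ is measurable. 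Hence $\bar\pi^{\mathrm{br}}\in\bar\Pi$. Its induced policy $\pi^{\mathrm{br}}_\theta$ evaluates this choice at $\delta_\theta\otimes\rho$, whose $\cX$-marginal is $\rho$. It is therefore the greedy policy with respect to $Q_i(\theta,\cdot,\cdot,\cdot)$.

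\textbf{Step 3: optimality (main obstacle).} It remains to show, for each fixed $\theta$, that the greedy policy maximizes $V_\theta(\cdot;\pi_\theta)$ over all $\pi'\in\Pi$. This is a standard verification argument: by backward induction, $\EE[\sum_{j\ge i} r_\theta \mid \mathcal{F}_i]\le \max_a Q_i(\theta,X_i,a,\rho_{\theta,i})$ for any $\pi'$, with equality for the greedy policy. The delicate point is that policies see only $(x,\rho)$, not the whole noise history. Since the pair $(X_i,\rho_{\theta,i})$ is a controlled Markov chain whose kernel does not depend on the past given the current pair, Markov policies in $(x,\rho)$ suffice and the tower property closes the induction. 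I expect this step, making the Markov property and the conditioning precise, to be where most care is required, though it is routine once the integral representation of Step 1 is in place.
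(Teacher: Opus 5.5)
Your proposal is correct and follows essentially the same route as the paper. Both arguments define the backward-induction $Q_i$ as an integral against $\nu^0$, measurable by Fubini--Tonelli and the boundedness of $r$, then make a smallest-index argmax selection whose level sets are finite intersections of Borel sets, and conclude optimality by dynamic programming on the Markov pair $(X_i,\rho_i)$. Composing with the marginal map $\bar\rho\mapsto\int_{\bar\Theta}\bar\rho(d\theta'',\cdot)$ is a small but useful refinement: the paper writes $\bar\pi^{\mathrm{br}}_i$ directly on $\bar\Theta\times\cX\times\Delta_\cX$ rather than on $\bar\Theta\times\cX\times\cP(\bar\Theta\times\cX)$.
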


\begin{proof}
    Let $\bar\pi \in \bar\Pi$ and $\theta \in \bar\Theta$. An agent deviating from the population policy $\pi_\theta$ faces a finite-horizon MDP whose state is the pair $(X_i, \rho_i)$, where $\rho_i$ is the population mean field. Given $\theta$ and $\pi_\theta$, $\rho_{i+1}$ is a function of $(\rho_i, \varepsilon^0_{i+1})$ only, so $(X_i, \rho_i)$ is a Markov chain and the strategies of $\Pi_0$, which are mappings from $\cX \times \Delta_\cX$ to $\Delta_\cA$, are sufficient: the dynamic programming principle applies over the policy class $\Pi$ used throughout. Define on $\bar\Theta \times \cX \times \Delta_\cX \times \cA$
    $$
    Q_{H-1}(\theta, x, \rho, a) = r_\theta(x, a, \rho), \qquad V_i(\theta, x, \rho) = \max_{a \in \cA} Q_i(\theta, x, \rho, a),
    $$
    $$
    Q_i(\theta, x, \rho, a) = r_\theta(x, a, \rho) + \EE_{\varepsilon^0 \sim \nu^0}\Big[\sum_{x' \in \cX} P_\theta(x' \mid x, a, \rho, \varepsilon^0) \, V_{i+1}\big(\theta, x', \Phi_\theta(\rho, \pi_{\theta, i}; \rho, \varepsilon^0)\big)\Big],
    $$
    for $i \in [H-2]$.

    \textbf{Each $Q_i$ is Borel measurable.} We proceed by downward induction on $i$. For $i = H-1$ this is the measurability of $r$. Assume $V_{i+1}$ is Borel. The mapping $(\theta, \rho, \varepsilon^0) \mapsto P_\theta(x' \mid x, a, \rho, \varepsilon^0)$ is Borel by measurability of $P$, and $(\theta, \rho, \varepsilon^0) \mapsto \Phi_\theta(\rho, \pi_{\theta, i}; \rho, \varepsilon^0)$ is Borel as a finite sum of products of measurable mappings, using that $\bar\pi \in \bar\Pi$ is measurable. Composing with $V_{i+1}$ and summing over the finite set $\cX$ preserves measurability. Since $r$ is bounded, an immediate induction gives $\vert V_j \vert \leq (H-j)\Vert r\Vert_\infty$, so the integrand is bounded and integrating against $\nu^0$ preserves Borel measurability in $(\theta, x, \rho)$ by the Fubini--Tonelli theorem. Hence $Q_i$ is Borel, and so is $V_i$, being a maximum of finitely many Borel mappings.

    \textbf{Selection.} Write $\cZ = \bar\Theta \times \cX \times \Delta_\cX$ and $z = (\theta, x, \rho) \in \cZ$. Fix a total order $<$ on the finite set $\cA$, chosen once and independently of $z$, and let
    $$
    a^*_i(z) = \min\Big\{a \in \cA \mid Q_i(z, a) = \max_{a' \in \cA} Q_i(z, a')\Big\}, \qquad z \in \cZ.
    $$
    For each $a \in \cA$, we have
    \begin{align*}
    \big\{z \in \cZ \mid a^*_i(z) = a\big\} = \, & \Big(\bigcap_{a' \in \cA} \big\{z \in \cZ \mid Q_i(z, a) \geq Q_i(z, a')\big\}\Big) \\
    & \cap \Big(\bigcap_{a'' \in \cA, \, a'' < a} \big\{z \in \cZ \mid Q_i(z, a) > Q_i(z, a'')\big\}\Big),
    \end{align*}
    the first intersection expressing that $a$ is a maximizer of $Q_i(z, \cdot)$, and the second that no action smaller than $a$ is one. Both are finite intersections of Borel subsets of $\cZ$, so $\{z \in \cZ \mid a^*_i(z) = a\}$ is Borel and $a^*_i$ is Borel. Therefore $\bar\pi^{\mathrm{br}}_i(\cdot \mid \theta, x, \rho) = \delta_{a^*_i(\theta, x, \rho)}$ defines a Bayesian strategy and $\bar\pi^{\mathrm{br}} \in \bar\Pi$. By the dynamic programming principle, for every $\theta \in \bar\Theta$ the induced policy $\pi^{\mathrm{br}}_\theta$ maximizes $V_\theta(\cdot \, ; \pi_\theta)$ over $\Pi$, i.e. $\pi^{\mathrm{br}}_\theta \in \BR_\theta(\pi_\theta)$.
\end{proof}

\begin{remark}
    The finiteness of $\cA$ is what makes the selection explicit, so that no measurable selection theorem is needed. Continuity of $P$ and $r$ in $\theta$ would be sufficient but is not required: only their measurability, together with the boundedness of $r$, is used. Under condition (C1) of Assumption~\ref{assumption:likelihood}, $\bar\Theta$ is at most countable and the statement is immediate.
\end{remark}

\textbf{We now provide the proof of Theorem~\ref{thm:NE_BE-MFG_induces_NE_PMFG}.}
\begin{proof}
    Let $\bar\pi^{\mathrm{br}} \in \bar\Pi$ be given by Lemma~\ref{lemma:measurable-br-selection} applied to $\bar\pi^*$, and define
    $$
    \delta(\theta) = V_\theta\big(\pi^{\mathrm{br}}_\theta ; \pi^{**}_\theta\big) - V_\theta\big(\pi^{**}_\theta ; \pi^{**}_\theta\big), \qquad \theta \in \bar\Theta.
    $$
    The mapping $\delta$ is measurable, being a difference of values associated with two policies of $\bar\Pi$, and $\delta \geq 0$ since $\pi^{\mathrm{br}}_\theta \in \BR_\theta(\pi^{**}_\theta)$ for every $\theta \in \bar\Theta$. Applying Lemma~\ref{lemma:barV=EV} twice,
    $$
    \EE_{\Theta \sim q}\big[\delta(\Theta)\big] = \bar V_q\big(\bar\pi^{\mathrm{br}} ; \bar\pi^*\big) - \bar V_q\big(\bar\pi^* ; \bar\pi^*\big) \leq 0,
    $$
    where the inequality holds because $\bar\pi^*$ is a NE of $\bar{\cM}_q$. A non-negative measurable mapping with non-positive integral vanishes almost everywhere, so $\delta = 0$ $q$-a.s. Since $\delta(\theta) = 0$ means exactly that $\pi^{**}_\theta \in \BR_\theta(\pi^{**}_\theta)$, we conclude that $\pi^{**}_\Theta$ is a NE of $\cM_\Theta$, $q$-a.s.
\end{proof}

\noindent\textbf{Dimension of the BE-MFG.} At first sight, the BE-MFG reintroduces a curse of dimensionality: its state space is $\bar\Theta\times\cX$, and Bayesian policies depend on mean fields in $\cP(\bar\Theta\times\cX)$. However, the mean fields generated by the BE-MFG have a specific structure: for any Bayesian policy $\bar\pi$, $\bar\rho^{\bar\pi,\varepsilon^0}_i(\Theta) = \delta_\Theta \otimes \rho^{\pi_\Theta,\varepsilon^0}_{\Theta,i}$ for every $i \in [H-1]$ (Lemmas~\ref{lemma:bar_rho^pi(Theta)=d_Theta_rho^pi}--\ref{lemma:bar_rho^pi(Theta)=d_Theta_rho^pi_family}), so that the population never leaves the set $\{\delta_\theta \otimes \rho : \theta \in \bar\Theta, \rho \in \Delta_\cX\}$. Consequently, as in Theorem~\ref{thm:NE_BE-MFG_induces_NE_PMFG}, Bayesian strategies only need to be evaluated at mean fields of the form $\delta_\theta \otimes \rho$, i.e., it suffices to consider strategies defined on $\bar\Theta \times \cX \times \Delta_\cX$ instead of $\bar\Theta \times \cX \times \cP(\bar\Theta \times \cX)$: the augmentation enlarges the state space, but not the space of mean fields on which the policy depends. Moreover, when $P_\theta$ and $r_\theta$ depend regularly on $\theta$, a neural network policy can share information across parameters, which is impossible when each parameter is solved separately. This is the parametrization we use (App.~\ref{appendix:film}).

\clearpage

\section{Computing Nash equilibrium for PMFG and BE-MFG}
\label{appendix:nash_equilibrium}

The overall Bayesian inference pipeline of Sec.~\ref{sec:algos} is summarized in Algorithm~\ref{algo:bays-inf-mfgs}. Its first step, computing the BE-NE, is detailed in the rest of this section.

\begin{algorithm}[htbp]
\caption{Bayesian Inference for PMFGs}
\begin{algorithmic}[1]
\label{algo:bays-inf-mfgs}
\REQUIRE PMFGs $\{\cM_\theta\}_{\theta\in\bar\Theta}$; prior $q$ over $\bar \Theta$; available time steps $\cI$; observations $(\hat \xi^n_{\theta_0, \cI})_{1\leq n \leq N}$.   \\
\STATE Compute a BE-NE $\bar \pi^*$ of $\bar \cM_q$.
\STATE Compute conditional density $ \Gamma_{\cdot,\cI}$ to learn the conditional distributions $(\Gamma_{\theta, \cI})_{\theta \in \bar \Theta}$, by sampling mean-field trajectories using $\bar \pi^*$ and $\bar \cM_q$.
\STATE Compute posterior $\bar q_N$ using $ \Gamma_{\cdot, \cI}^\gamma$ and the observations $(\hat \xi^n_{\theta_0, \cI})_{1\leq n \leq N}$.
\STATE \textbf{Return} $\bar q_N$, $\hat \theta^{\mathrm{MAP}}_N = \arg\max_{\theta \in \bar \Theta}\bar q_N(\theta)$, $\bar \pi^*$.
\end{algorithmic}
\end{algorithm}

\subsection{Algorithms for general MFGs with common noise}
\label{appendix:nash_equ_general_algo}

Computing a Nash equilibrium for a MFG with common noise requires solving a fixed-point problem: the policy must be a best response to the mean-field flow it induces. We rely on the FP algorithm~\cite{perrin2020fictitious}, which iteratively trains a best response against the average of the policy history, and has been shown to converge to a Nash equilibrium under standard assumptions. Each best response is trained by maximizing the expected return of a representative agent against the current aggregate mean-field flow, using a neural network policy $\pi^\gamma$ (Algorithm~\ref{algo:nn_best_response}). The FP loop is summarized in Algorithm~\ref{algo:fictitious}. Once FP has converged, the history of policies $(\pi^0, \dots, \pi^K)$ encodes the Nash equilibrium as an implicit mixture. To obtain a single compact policy, we distill this history into a single neural network $\pi^\gamma$ via Mean Field Imitation Learning (MF-IL, Algorithm~\ref{algo:il_agg_meanfield}), which matches the state-action mean-field trajectory induced by $\pi^\gamma$ to the aggregate of the history, following~\cite{lambrecht2026}.

\begin{algorithm}[htbp]
\caption{Best Response Algorithm}
\begin{algorithmic}[1]
\label{algo:nn_best_response}
\REQUIRE MFG $\cM = (\cX, \cA, \cE^0, \nu^0, P, r, H, \rho_0)$; Sequence of policies $(\pi^0, \dots, \pi^m) \in \Pi^{m+1}$; Iterations $J$; Batch size $B$; Initial Parameters $\gamma_0$; Optimizer. \\ 
\FOR{$j = 1, \dots, J$} 
    \STATE Reset batch loss: $\cL(\gamma_{j-1}) = 0$.
    \STATE Sample a batch of i.i.d. common noises $(\varepsilon^{0, s}_t)_{0 \leq t \leq H-1}$ for $s=1, \dots, B$.
    \FOR{$s = 1, \dots, B$}
        \STATE Compute the aggregated sequence $\tilde{\rho}^{m,s}$ induced by $(\pi^0, \dots, \pi^m)$.
        \STATE Compute the mean-field sequence $\rho^{\gamma_{j-1}, s}$ induced by the learner $\pi^{\gamma_{j-1}}$ responding to $\tilde{\rho}^{m,s}$.
        \STATE Accumulate the loss:
        \[
        \cL(\gamma_{j-1}) \leftarrow \cL(\gamma_{j-1}) - \frac{1}{B} \sum_{t=0}^{H-1} \sum_{x, a} \rho^{\gamma_{j-1},s}_t(x) \pi^{\gamma_{j-1}}_t(a \mid x, \tilde{\rho}^{m,s}_t) r(x, a, \tilde{\rho}^{m,s}_t)
        \]
    \ENDFOR
    \STATE Compute the gradient $\nabla_\gamma \cL(\gamma_{j-1})$ and update parameters via the optimizer to obtain $\gamma_{j}$. 
\ENDFOR 
\STATE \textbf{Return} Optimized parameters $\gamma_{J}$ defining the best response policy $\pi^{\gamma_J}$.
\end{algorithmic}
\end{algorithm}
\vspace{1em}

\begin{algorithm}[htbp]
\caption{Fictitious Play for MFG with Common Noise}
\begin{algorithmic}[1]
\label{algo:fictitious}
\REQUIRE MFG $\cM = (\cX, \cA, \cE^0, \nu^0, P, r, H, \rho_0)$; Initial policy $\pi^0 \in \Pi$; Number of iterations $K$.
\FOR{$k = 1, \dots, K$}
    \STATE Train $\pi^k$ as the best response to $(\pi^0, \dots, \pi^{k-1})$, using Algorithm~\ref{algo:nn_best_response}.
\ENDFOR
\STATE \textbf{Return} policy tuple $(\pi^0, \dots, \pi^K)$.
\end{algorithmic}
\end{algorithm}
\vspace{1em}

\begin{algorithm}[htbp]
\caption{Mean Field Imitation Learning (MF-IL)}
\begin{algorithmic}[1]
\label{algo:il_agg_meanfield}
\REQUIRE MFG $\cM$; Policy sequence $(\pi^0, \dots, \pi^K)$; 
Initial parameters $\gamma_0$ for parametric policy $\pi^\gamma$; Optimizer; 
Iteration number $J$; Batch size $B$.
\FOR{$j = 1, \dots, J-1$}
    \STATE Sample common noises $(\varepsilon^{0, s}_t)_{0\leq t \leq H-1}^{1\leq s\leq B}$. %
    \FOR{$s = 1, \dots, B$}
        \STATE Compute $\tilde{\mu}^{K,s}$ from $(\pi^0, \dots, \pi^K)$.%
        \STATE Generate $\mu^{s, \pi^{\gamma_{j-1}}}$ induced by $\pi^{\gamma_{j-1}}$.
    \ENDFOR
    \STATE Compute 
        $\cL(\gamma_{j-1}) = \frac{1}{B} \sum_{s,t} \|\mu^{s, \pi^{\gamma_{j-1}}}_t - \tilde{\mu}^{K,s}_t \|_1.$
    \STATE Obtain $\gamma_j$ by one step of gradient of $\cL(\gamma_{j-1})$.
\ENDFOR
\STATE \textbf{Return} Trained parametric policy $\pi^{\gamma_J}$.
\end{algorithmic}
\end{algorithm}

\subsection{Nash equilibrium for PMFGs}
To compute the NE policies from which we sample observations of the ground truth model $\mathcal{M}_{\theta_0}$, we apply the algorithms of App.~\ref{appendix:nash_equ_general_algo} with a standard Multi-Layer Perceptron (MLP) as the parametric policy.

\subsection{Nash equilibrium for BE-MFG}
\label{appendix:film}

To compute the BE-NE $\bar\pi^*$, we apply the same FP pipeline of App.~\ref{appendix:nash_equ_general_algo}, replacing the standard MLP policy with a \textbf{FiLM-conditioned policy}. We briefly describe how FiLM works and refer to \cite{perez2018film} for more details. 

\noindent\textbf{Architecture.} The network takes as input the time step $t$, the state $x$, the current mean field $\rho$, and the parameter $\theta$, and outputs a softmax distribution over actions. The backbone consists of $L$ linear layers of hidden dimension $l$ with tanh activations. At each layer $i$, before applying the activation, the hidden representation $h_i$ is modulated by a small MLP applied to $\theta$:
\[
h^{\text{FiLM}}_i = \tanh(W^{(1)}_i \theta + b^{(1)}_i), \quad
(w_i, b_i^{(3)} )= W^{(2)}_i\,h^{\text{FiLM}}_i + b^{(2)}_i, \quad h_i = W_ih'_{i-1} + b_i
\]
$$
h_{i}' = \tanh\!\left(h_{i} \odot (1 + w_i) + b_i^{(3)}\right),
$$
The FiLM MLP maps $\theta \in \mathbb{R}^{d}$ to a hidden layer of size $l'$, then outputs a vector of size $2 \times l$ which is split into the scale $w_i$ and shift $b_i^{(3)}$ applied to the backbone activations.

The learnable parameters of the network are $\gamma = \{W^{(1)}_i, b^{(1)}_i, W^{(2)}_i, b^{(2)}_i\}_{i\in[L]}$ for the FiLM MLPs, $\{W_i, b_i\}_{i\in[L]}$ for the backbone layers, and the output layer weights.

\noindent\textbf{Motivation.} The BE-NE $\bar\pi^*$ must represent a \emph{family} of NE policies indexed by $\theta$. A naive shared MLP with $\theta$ concatenated to the input tends to average over $\theta$ rather than sharply distinguishing between parameter regimes. FiLM solves this by allowing $\theta$ to directly rescale and shift the internal representations at every layer, making the network's computation path highly sensitive to $\theta$. This is essential: if $\bar\pi^*$ does not distinguish between different values of $\theta$, the simulated trajectories $\xi^{\bar\pi^*, \varepsilon^0}_{\theta, \mathcal{I}}$ will be incorrectly distributed, corrupting the training of $\Gamma^\gamma_{\theta,\mathcal{I}}$ and all downstream inference.

\clearpage

\section{Mean field inverse reinforcement learning}
\label{appendix:mf-irl}

We refer to \cite{chen2022individualinverseRL} for the original description of the method.
In this appendix we describe precisely how we adapted it to our setting.

\paragraph{Overview of the original method.}
The MFIRL framework of~\cite{chen2022individualinverseRL} recovers a reward
function $r_\theta(x,a,\rho)$ from expert state-action trajectories issued of a PMFG $\tilde \cM_\theta = (\cX, \cA, \tilde P, r_\theta, H, \rho_0)$ in absence of common noise, meaning that the kernel $\tilde P$ does not depend of an extra source of randomness $\varepsilon^0$, and the policies are sequences of population-unaware policies $\tilde \pi(a\vert x)$. We denote the previous set by $\tilde \Pi$. We emphaze that in absence of common noise, the mean fields trajectories at equilibrium are deterministic. 
Given a state-action mean-field trajectory $\hat \mu_{\theta_0}$ for an unknown parameter $\theta_0$, the method solves the objective
$$
  \max_{\theta}\;\mathcal{L}(\theta)
  \;=\;
  \sum_{i=0}^{H-1}\sum_{x, a}\kappa^i \hat \mu_{\theta_0, i}(x, a)
      r_\theta(x,a,\hat \rho_{\theta_0, i})
  -\,J\!\left(\hat \rho_{\theta_0},\tilde \pi_\theta \right),
$$
where $\kappa$ is a discount factor, $\hat \rho_{\theta_0, i} = \sum_a \hat \mu_{\theta_0, i}(\cdot, a)$, $J(\hat \rho_{\theta_0},\, \tilde\pi_\theta) = \sum_{i=0}^{H-1} \kappa^i \sum_{x, a} \hat \rho_{\theta_0, i}(x)\, \tilde\pi_{\theta,i}(a|x)\, r_\theta(x, a, \hat \rho_{\theta_0, i})$
and 
$\tilde \pi_\theta$ is a Boltzmann best response to
$\hat \mu_{\theta_0}$ under $r_{\theta}$, computed by backward induction
via the soft Q-function
\[
  \tilde Q_\theta (i,x,a,\hat \rho_{\theta_0})
  \;=\;
  r_\theta(x,a,\rho_{ i})
  +\kappa\,\mathbb{E}_{X\sim \tilde P(\cdot \vert x, a, \hat \rho_{\theta_0, i}) }\!\left[\sum_{a'}\tilde\pi_{\theta, i+1}(a'|X)
      \tilde Q_\theta(i+1,X,a',\hat \rho_{\theta_0})\right],
\]
with $\tilde\pi_{\theta, i}(a|x)\propto\exp(\tau^{-1}\,\tilde Q_\theta(i,x,a,\rho))$, where $\tau > 0$ is a temperature.
In the original paper the $Q$-function is tabular, which is tractable because there is
no common noise and the state space is small.

\paragraph{Adaptations required by our setting.}
\;

\textbf{(i) Common Noise Approximation.}
Our setting considers PMFGs $\mathcal{M}_\theta = (\mathcal{X}, \mathcal{A}, \cE^0, \nu^0, P, r_\theta, H, \rho_0)$ where the transition kernel $P$ depends explicitly on a common noise realization $\varepsilon^0 \sim \nu^0$. When applying MF-IRL, we replace $P$ by its common noise averaged conterpart $\tilde P (\cdot \vert x, a, \rho) = \EE_{\varepsilon^0}[P(\cdot|x,a,\rho,\varepsilon^0)]$.

Under common noise, the mean-field trajectory $\hat \rho_{\theta_0}$ is stochastic, so the observer has access to $N$ independent realizations $(\hat{\mu}^n_{\theta_0,i})_{1 \le n \le N}$ rather than a single deterministic trajectory. We approximate $\hat \mu_{\theta_0}$ by the average $\hat{\mu}^E_i = \frac{1}{N}\sum_{n=1}^N \hat{\mu}^n_{\theta_0,i}$, and compute $\tilde{Q}_\theta$ using $\tilde P(\cdot|x,a,\hat{\rho}^E_i)$, where $\hat{\rho}^E_i = \sum_a \hat{\mu}^E_i(\cdot,a)$. 

We therefore follow the population-unaware approximation: the Boltzmann policy
$\tilde\pi_{\theta, t}(a|x)$ is computed from the soft Q-function evaluated at the
\emph{fixed} averaged mean field $\hat\mu^E_{\theta_0, i} = \frac{1}{N}\sum_{n = 1}^N\hat\mu_{\theta_0}^n$,
using the common noise averaged transition matrix $\tilde P(\cdot|x,a,\hat\rho^E_i)$.

This approximation is consistent with the population-unaware/aware policy
distinction introduced in~\cite{lambrecht2026} for the common-noise setting.

Concretely, our implementation computes the averaged transition matrix $\tilde P$ once, then
runs the backward scan
\begin{align*}
  \tilde Q_{\theta}(H-1) &= R_{\theta}(H-1),\\
  \tilde Q_{\theta }(i) &= R_\theta(i) + \kappa\,\tilde P\cdot V_\theta(i+1),\quad
         V_\theta(i+1, x)=\sum_a\tilde\pi_{\theta, i+1}(a|x)\,\tilde Q_{\theta}(i+1, x,a),
\end{align*}
where $R_{\theta}(i,x,a)=r_\theta(x,a,\hat\rho^E_i)$ uses the parametric reward at
the empirical density, and $\tilde\pi_{\theta, i}=\operatorname{softmax}(\tau^{-1} \tilde Q_{\theta}(i))$.

\textbf{(ii) Requirement of state--action observations ($\xi=\mu$).}
The backward induction requires the full state-action mean-field trajectory
$\hat\mu_{\theta_0}\in(\Delta_{\mathcal{X}\times\mathcal{A}})^H$.
The method therefore cannot be applied when the observer only sees state
distributions $\xi=\rho$, which is the harder and more realistic setting we
also evaluate. This is a structural limitation relative to our Bayesian approach,
which handles both $\xi=\mu$ and $\xi=\rho$.

\textbf{(iii) Handling partial observations ($\mathcal{I}\ne[H-1]$).}
The original paper observes the full trajectory.  When only a subset
$\mathcal{I}\subset[H-1]$ of time steps is available we construct the full
$H$-step empirical flow $\hat \mu^E_{\theta_0}$ by nearest-future imputation:
for any unobserved step $t$, we substitute the next observed step
$\min\{i\in\mathcal{I}:i\ge t\}$, or the last observed step if none exists.
This natural imputation allows the backward induction to run over the full
horizon, and we investigate its degradation as $|\mathcal{I}|$ decreases.

The full procedure is given in Algorithm~\ref{algo:mf-irl}.

\begin{algorithm}[htbp]
\caption{Adaptation of MF-IRL to our setting}
\label{algo:mf-irl}
\begin{algorithmic}[1]
\REQUIRE Family of PMFGs $\{M_\theta\}_{\theta\in\bar\Theta}$;
         parametric reward $r_\theta$;
         $N$ expert state-action trajectories
         $(\hat\mu^{n}_{\theta_0, \mathcal{I}})_{1\le n\le N}$
         at accessible steps $\mathcal{I}$;
         temperature $\tau$; discount $\kappa$; Initial guess $\theta^0 \in \bar \Theta$; Number of iterations $T_{\mathrm{IRL}}$; Optimizer.
\STATE Compute empirical mean-field:
       $\hat\mu^E_{\theta_0, i} = \frac{1}{N}\sum_{n=1}^N \hat\mu^n_{\theta_0, i}$
       for $i\in\mathcal{I}$.
\STATE Impute missing steps: for $i\notin\mathcal{I}$, set
       $\hat\mu^E_i = \hat\mu^E_{\min\{i'\in\mathcal{I}:i'\ge i\}}$
\STATE Compute the common noise averaged transition matrix $\tilde P(x'\vert x, a,\hat\rho^E_i) = \EE_{\varepsilon^0}[ P(x'\vert x, a, \hat\rho^E_i,\varepsilon^0)]$.
\FOR{for $m = 0$ to $T_{\mathrm{IRL}}-1$}
  \STATE Compute $\tilde Q_{\theta^m}$.
  \STATE Compute the loss $\cL(\theta^m) = \sum_{i=0}^{H-1}\sum_{x, a}\kappa^i \hat \mu_{\theta_0, i}(x, a)
      r_{\theta^m}(x,a,\hat \rho_{\theta_0, i})
  -\,J\!\left(\hat \rho_{\theta_0},\tilde \pi_{\theta^m} \right)$.
  \STATE Proceed with one gradient step to obtain $\theta^{m+1}$.
\ENDFOR
\STATE \textbf{Return} $\theta^{T_{\mathrm{IRL}}}$.
\end{algorithmic}
\end{algorithm}

\clearpage

\section{Learning $\Gamma_{\theta, \cI}$}
\label{appendix:learning-gamma}

\subsection{Principal component analysis}
\label{app:pca}
Since $\xi_\mathcal{I} \in (\Delta_\mathcal{X})^{|\mathcal{I}|}$ is high-dimensional, we first apply principal component analysis (PCA) to a training set $\{(\theta^n, \xi^{*,n}_{\theta^n, \cI})\}_{n=1}^{N_{\mathrm{pca}}}$ sampled using the BE-NE $\bar \pi^*$ and the prior $q$. Centering the trajectories and computing the SVD of the centered matrix, we retain the top $d_{\mathrm{PCA}}$ components explaining $99\%$ of the variance, yielding a projection $W \in \mathbb{R}^{D \vert \cI \vert \times d_{\mathrm{PCA}}}$ where $D$ refers to the dimension of the observation space $\Xi$. Each observation is then mapped to a normalized PCA embedding $z = ((\xi_\mathcal{I} - \bar\xi) W - \bar z) / \sigma_z \in \mathbb{R}^{d_{\mathrm{PCA}}}$. PCA is a well understood and classic method in the Machine Learning literature \cite{jolliffe2016principal}. 
\subsection{Gaussian processes estimation}
\label{appendix:gp}

We describe the GP-based method used as a lightweight alternative to the MAF. Contrary to the MAF, it does not learn the conditional density $\Gamma_{\theta,\mathcal{I}}$: it is an inverse regression, which directly models the map $z \mapsto \theta$ in PCA space, and outputs a Gaussian approximation of the posterior used for MAP and BMA inference.

\paragraph{GP training in PCA space.} We place a GP prior on the map $z \mapsto \theta$ with a linear kernel $k(z, z') = z^\top z'$, making the model equivalent to Bayesian linear regression in the PCA feature space. Given $N_{\mathrm{train}}$ training pairs $\{(z^n, \theta^n)\}_{n=1}^{N_{\mathrm{train}}}$ sampled from the BE-NE $\bar\pi^*$ and the prior $q$, the Gram matrix is $\mathbf{K} = (k(z^n, z^{n'}))_{n,n'} \in \mathbb{R}^{N_{\mathrm{train}}\times N_{\mathrm{train}}}$ and the GP is fitted by computing
$w = (\mathbf{K} + \sigma^2_{\mathrm{noise}} I)^{-1} Y,$
where $Y = (\theta^1, \dots, \theta^{N_{\mathrm{train}}})^\top \in \mathbb{R}^{N_{\mathrm{train}} \times d}$ stacks the targets and $\sigma_{\mathrm{noise}}^2$ is a hyperparameter controlling the confidence of the model. The training algorithm is summarized in Algorithm~\ref{algo:train-gp}.

\paragraph{Posterior inference for $N=1$.} For a single observation $\hat\xi_\mathcal{I}$, let $\hat z$ be its PCA embedding and $\hat k = (k(\hat z, z^n))_{n=1}^{N_{\mathrm{train}}} \in \mathbb{R}^{N_{\mathrm{train}}}$. The GP posterior is Gaussian $\mathcal{N}(\hat\theta^{\mathrm{MAP}}, \hat\sigma^2)$ with
\begin{align*}
\hat\theta^{\mathrm{MAP}} &= \hat k^\top w, \\
\hat\sigma^2 &= k(\hat z, \hat z) - \hat k^\top (\mathbf{K} + \sigma^2_{\mathrm{noise}} I)^{-1} \hat k,
\end{align*}
where $\hat\sigma^2$ is the posterior variance measuring the uncertainty of the estimate. The MAP estimator is $\hat\theta^{\mathrm{MAP}}$.

\paragraph{Multiple observations ($N > 1$).} When $N$ i.i.d.\ observations $(\hat\xi^n_{\theta_0,\mathcal{I}})_{1 \le n \le N}$ are available, each observation $\hat\xi^n_{\theta_0, \mathcal{I}}$ yields an independent GP posterior $\mathcal{N}(m_n, v_n)$, where $m_n = \hat k_n^\top w$ is the predicted parameter and $v_n = k(\hat z_n, \hat z_n) - \hat k_n^\top(\mathbf{K} + \sigma^2_{\mathrm{noise}}I)^{-1}\hat k_n$ is the associated posterior variance. Each $\mathcal{N}(\theta; m_n, v_n)$ approximates the single-observation posterior, which is proportional to $q(\theta)\Gamma_{\theta,\cI}(\hat\xi^n_{\theta_0,\cI})$. Since the observations are conditionally independent given $\theta$, Bayes' rule~\eqref{eq:barqN} involves the product of the likelihoods and a single prior factor, whereas the product of the $N$ single-observation posteriors contains the prior $N$ times. We therefore combine them as
\[
\bar q_N^{\mathrm{GP}}(\theta) \propto q(\theta)^{1-N} \prod_{n=1}^N \mathcal{N}(\theta; m_n, v_n).
\]
When the prior is Gaussian, $q = \mathcal{N}(\mu_0, v_0)$, $\bar q_N^{\mathrm{GP}}$ is Gaussian, $\bar q_N^{\mathrm{GP}} = \mathcal{N}(\hat\theta^{\mathrm{MAP}}_N, v_{\mathrm{comb}})$, with
\[
v_{\mathrm{comb}}^{-1} = \sum_{n=1}^N v_n^{-1} - (N-1) v_0^{-1}, \qquad
\hat\theta^{\mathrm{MAP}}_N = v_{\mathrm{comb}} \Big( \sum_{n=1}^N v_n^{-1} m_n - (N-1) v_0^{-1} \mu_0 \Big).
\]
When the prior is uniform on $\bar\Theta$, $q^{1-N}$ is constant on $\bar\Theta$, and $\bar q_N^{\mathrm{GP}}$ is the product of the $N$ Gaussians restricted to $\bar\Theta$. Both reduce to the single-observation GP posterior when $N = 1$. The BMA policy is then $\pi_{\bar q_N^{\mathrm{GP}}} = \mathbb{E}_{\theta \sim \bar q_N^{\mathrm{GP}}}[\pi^*_\theta]$.

\begin{algorithm}[htbp]
\caption{Training the Linear Gaussian Processes Estimator}
\begin{algorithmic}[1]
\label{algo:train-gp}
\REQUIRE Family of PMFGs $\{\mathcal{M}_\theta\}_{\theta\in\bar\Theta}$; BE-NE policy $\bar\pi^*$ of $\bar{\mathcal{M}}_{q}$; Type of observations $\xi$; available time steps $\mathcal{I}$; Number of PCA training samples $N_{\mathrm{pca}}$; PCA variance threshold $p_{\mathrm{var}}$; Number of GP training samples $N_{\mathrm{train}}$; Noise hyperparameter $\sigma^2_{\mathrm{noise}}$.
\STATE Sample $N_{\mathrm{pca}}$ i.i.d.\ pairs $(\Theta^b, \xi^{\bar\pi^*, \varepsilon^{0,b}}_{\Theta^b, \mathcal{I}})_{1\leq b \leq N_{\mathrm{pca}}}$ with $\Theta^b \sim q$ and $\varepsilon^{0,b} \sim (\nu^0)^{\otimes H}$, using $\bar \pi^*$.
\STATE Compute PCA projection $W \in \mathbb{R}^{D \vert \cI \vert \times d_{\mathrm{PCA}}}$ retaining $p_{\mathrm{var}}$ variance; store mean $\bar\xi$ and normalization $(\bar z, \sigma_z)$.
\STATE Sample $N_{\mathrm{train}}$ i.i.d.\ pairs $(\Theta^n, \xi^{\bar\pi^*, \varepsilon^{0,n}}_{\Theta^n, \mathcal{I}})_{1\leq n \leq N_{\mathrm{train}}}$ with $\Theta^n \sim q$ and $\varepsilon^{0,n} \sim (\nu^0)^{\otimes H}$.
\FOR{$n = 1$ \TO $N_{\mathrm{train}}$}
    \STATE Project: $z^n = \left((\xi^n_\mathcal{I} - \bar\xi)\,W - \bar z\right) / \sigma_z$.
\ENDFOR
\STATE Compute Gram matrix: $\mathbf{K}_{nn'} = (z^n)^\top z^{n'}$ for all $n, n' \in [N_{\mathrm{train}}]$.
\STATE Compute $w = (\mathbf{K} + \sigma^2_{\mathrm{noise}}\,I)^{-1} Y$, where $Y = (\Theta^1, \dots, \Theta^{N_{\mathrm{train}}})^\top$.
\STATE \textbf{Return} $w, \mathbf{K}$.
\end{algorithmic}
\end{algorithm}

\subsection{Masked auto regressive flows}
\label{appendix:maf}

Normalizing flows like MAF are powerful density estimators that transform a simple distribution into a complex one through a series of bijective transformations, making them ideal for capturing the intricate dependencies in Mean Field trajectories.  

MAF learns the  conditional density $\Gamma_{\theta,\mathcal{I}}(\xi)$ by transforming a simple base distribution $p_0 = \mathcal{N}(0, I_{d_{pca}})$ (where $d_{pca}$ is the dimension retained by the PCA as explained in \ref{app:pca}) into a complex one through a sequence of $L$ bijective coupling layers. 
 At each layer, the current representation $x^{(i-1)} \in \mathbb{R}^{d_{\mathrm{PCA}}}$ (with $x^{(0)} = z$ the PCA embedding of $\xi_\mathcal{I}$) is split into two halves $(x^{(i-1)}_1, x^{(i-1)}_2)$ and only the second half is transformed:
$$
[s^{(i)}, t^{(i)}] = \mathrm{MLP}^{(i)}([x^{(i-1)}_1;\, \theta]), \qquad y^{(i)}_2 = x^{(i-1)}_2 \odot \exp(s^{(i)}) + t^{(i)}, \qquad x^{(i)} = [ y^{(i)}_2, x^{(i-1)}_1\,],
$$
where $\mathrm{MLP}^{(i)}$ is a two-hidden-layer network with tanh activations. Let $\gamma$ be the trainable parameters of a MAF, and denote $\Gamma^\gamma_{\cdot, \cI}$ the transformation defined above. By the change-of-variables formula, the MAF approximates the log-likelihood of a PCA-projected observation $z = ((\xi^*_{\theta, \cI} - \bar\xi) W - \bar z) / \sigma_z$ under parameter $\theta$ as:
$$
\log \Gamma^{\gamma}_{\theta,\mathcal{I}}(z) = \log p_0(x^{(L)}) + \sum_{i=1}^L \mathbf{1}^\top s^{(i)},
$$
where $x^{(L)}$ is the final transformed variable and $z = x^{(0)}$ is the input. The MAF thus returns a tractable scalar approximation of $\log \Gamma_{\theta,\mathcal{I}}(\xi)$ for any $(\theta, \xi)$ pair, which is the quantity needed for Bayesian inference. The training objective is the expected negative log-likelihood (NLL) over pairs $(\Theta, \xi^*_{\Theta,\mathcal{I}})$ sampled from the BE-NE $\bar\pi^*$ and the prior $q$:
$$
\mathcal{L}(\gamma) = -\mathbb{E}_{\Theta \sim q}\!\left[\log \Gamma^{\gamma}_{\Theta,\mathcal{I}}(z^*_\Theta)\right], \quad z^*_\Theta = ((\xi^{*}_{\Theta,\mathcal{I}} - \bar\xi) W - \bar z) / \sigma_z,
$$
which is minimized by stochastic gradient descent as detailed in Algorithm~\ref{algo:train-maf-pca}.

\paragraph{Ensemble of MAF.}  It is important to note that training a MAF can be highly sensitive to the initial values of its parameters. To mitigate the dependency on a specific initialization and ensure more robust density estimation, we employ an ensemble approach. Specifically, we train $M$ independent MAF models starting from different random initializations. Let us denote the models $\Gamma^{\gamma_1}_\mathcal{I}, \dots, \Gamma^{\gamma_M}_\mathcal{I}$. We then aggregate their outputs by averaging their log-likelihoods to produce a more stable and reliable estimate of the global log-likelihood. For each $\theta \in \bar\Theta$ and $\xi \in \Xi$, define
$$
\Gamma^{\gamma_1,\dots,\gamma_M}_{\theta,\mathcal{I}}(z) \propto \exp\!\left(\frac{1}{M}\sum_{j=1}^M \log \Gamma^{\gamma_j}_{\theta,\mathcal{I}}(z)\right).
$$

\begin{algorithm}[htbp]
\caption{Training MAF $\Gamma^{\gamma}_{\mathcal{I}}$}
\begin{algorithmic}[1]
\label{algo:train-maf-pca}
\REQUIRE Family of PMFGs $(\mathcal{M}_\theta)_{\theta\in\bar\Theta}$; BE-NE policy $\bar\pi^*$ of $\bar{\mathcal{M}}_q$; Type of observations $\xi \in \{\rho, \mu\}$; available time steps $\mathcal{I} \subset [H-1]$; Initial parameters $\gamma_0$ for the MAF $\Gamma^{\gamma}_{\mathcal{I}}$; Number of iterations $T$; Batch size $B$; Optimizer; 
 Number of PCA training samples $N_{\mathrm{pca}}$; PCA variance threshold $p_{\mathrm{var}}$.
\STATE Sample $N_{\mathrm{pca}}$ i.i.d.\ pairs $(\Theta^b, \xi^{\bar\pi^*, \varepsilon^{0,b}}_{\Theta^b, \mathcal{I}})_{1\leq b \leq N_{\mathrm{pca}}}$ with $\Theta^b \sim q$ and $\varepsilon^{0,b} \sim (\nu^0)^{\otimes H}$, using $\bar \pi^*$.
\STATE Compute PCA projection $W \in \mathbb{R}^{D \vert \cI \vert \times d_{\mathrm{PCA}}}$ retaining $p_{\mathrm{var}}$ variance; store mean $\bar\xi$ and normalization $(\bar z, \sigma_z)$.
\FOR{$t = 1$ \TO $T$}
    \STATE Sample an i.i.d.\ batch of parameters $(\Theta^b)_{1\leq b\leq B}$ from prior $q$.
    \FOR{$b = 1$ \TO $B$}
        \STATE Sample one common noise trajectory $\varepsilon^{0,b} \sim (\nu^0)^{\otimes H}$.
        \STATE Compute associated mean field trajectory $\xi^{\bar\pi^*, \varepsilon^{0,b}}_{\Theta^b, \mathcal{I}}$.
        \STATE Project into PCA space: $z^b = ((\xi^{\bar\pi^*, \varepsilon^{0,b}}_{\Theta^b, \mathcal{I}} - \bar\xi) W - \bar z) / \sigma_z$.
    \ENDFOR
    \STATE Compute NLL loss: $\mathcal{L}(\gamma_t) = -\dfrac{1}{B} \displaystyle\sum_{b=1}^B \log \Gamma^{\gamma_t}_{\Theta^b, \mathcal{I}}(z^b)$.
    \STATE Update $\gamma_{t+1}$ using one step of the Optimizer on $\nabla_\gamma \mathcal{L}(\gamma_t)$.
\ENDFOR
\STATE \textbf{Return} Trained estimator $\Gamma^{\gamma_T}_{\mathcal{I}}$.
\end{algorithmic}
\end{algorithm}

\clearpage

\section{Experiment 1 (Beach Bar): details}
\label{appendix:beachbar}

\subsection{Model description} 

Let $\cX = \{0, \dots, 6 \}$, $\cA = \{-1, 0, 1\}$, $H = 20$, and $\cE^0 = \{-1, 0, 1\}^{|\cX|}$.
The environment is disrupted by external events that simultaneously displace all agents at specific locations. This is governed by a common noise whose frequency depends on a constant $\eta\ge0$.  
Given $\eta$, and common noise $e^0 \in \cE^0$, the dynamics are governed by a transition kernel $P$ that incorporates both \emph{common noise} $e^0(x)$ and \emph{idiosyncratic noise} (represented by a uniform shift $s \in \{-1, 0, 1\}$). 
Let for any $(x, a, \rho, e^0) \in \cX\times\cA\times\Delta_\cX\times\cE^0$, $P(x, a, \rho, e^0) = \frac{1}{3} \sum_{s \in \{-1, 0, 1\}} \delta_{x + a + e^0(x) + s \pmod{|\cX|}}$.
The common noise is $\varepsilon^0 \sim (\nu^0_\eta)^{\otimes 7}$, where $\nu^0_\eta$ is the distribution of $\varepsilon^0_1 \varepsilon^0_2$ with $\varepsilon^0_1 \sim \text{Bernoulli}(\eta)$ and $\varepsilon^0_2 \sim \text{Unif}\{-1, 0, 1\}$.
The reward function, reflecting a preference for a central bar at $x_{\text{bar}} = 3 $ and a penalty for overcrowding, is parameterized by $\theta \ge 0$:
$r_\theta(x, a, \rho) = -|x - x_{\text{bar}}| - \theta \log(\rho(x)+10^{-5}) - |a|/\vert \cX\vert $. Fig.~\ref{fig:beachbar_traj_2d} shows realizations of the mean field for two different values of $\theta$: while the agents try to be close to the bar, larger $\theta$ leads to lower density values because the agent spread more across states.

\begin{figure}[htbp]
    \centering
    \includegraphics[width=1\linewidth]{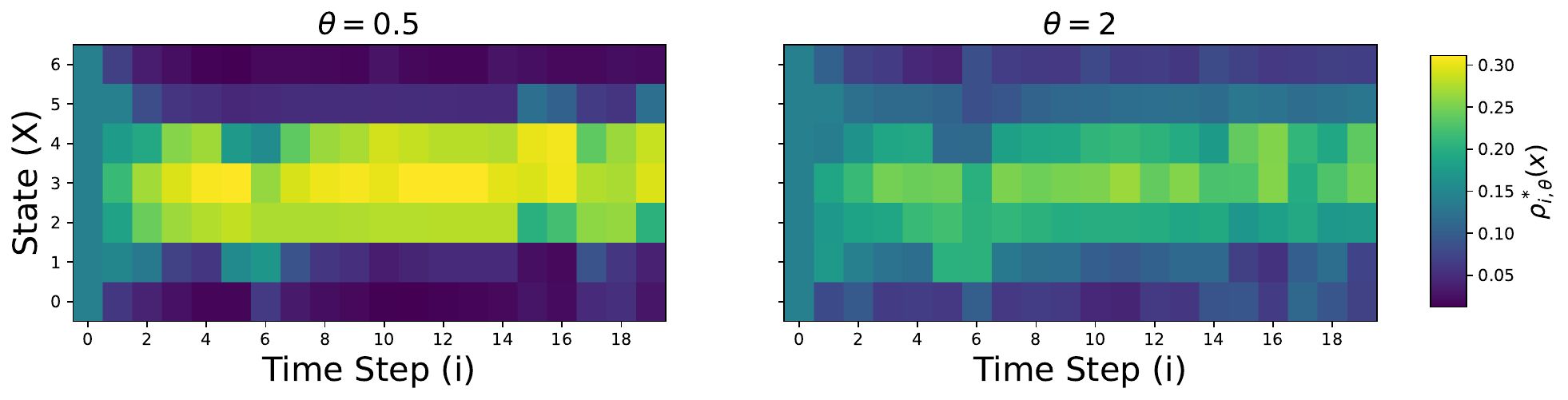}
    \caption{Mean Field Equilibrium Trajectories for two different values of $\theta$, with $\eta = 0.2$, on the same common noise realization.}
    \label{fig:beachbar_traj_2d}
\end{figure}

\subsection{Hyperparameters}

\begin{table}[htbp]
\centering
\caption{Hyperparameters for the Beach Bar Experiment}
\label{tab:beachbar_hyperparameters}
\begin{tabular}{lc}
\hline
\textbf{Hyperparameter} & \textbf{Value} \\ \hline
State/Action Space & $|\cX|=7$, $|\cA|=3$ \\
Horizon ($H$) & 20 \\
Common Noise Intensity ($\eta$) & $\{0.0, 0.1, 0.2, 0.3, 0.4, 0.5 \}$ \\
Prior $q(\theta)$ & GP: $\cN(\frac{5}{4}, 0.01)$ ; MAF: $\text{Unif}([0, \frac{5}{2}])$ \\
Test Parameters & $\{0.5, 0.875, 1.25, 1.625, 2\}$ \\
Observation Horizon ($|\cI|$) & $\{19, 5\}$ \\ \hline
Bayesian Fictitious Play Rounds ($K_{\text{bays}}$) & $15$ \\
Fictitious Play Inner Epochs & $8000$ \\
Distillation Epochs & $50000$ \\
Monte Carlo Sample Size ($M$) & $10000$ \\
Learning Rate (Fictitious Play / Distillation) & $5 \times 10^{-4}$ \\
Batch Size & $512$ \\ \hline
PCA Variance Threshold & $0.99$ \\
Normalizing Flow (MAF) Epochs & $25\,000 $\\
Normalizing Flow Learning Rate & $1 \times 10^{-4}$ \\
Normalizing Flow Ensemble Size & $5$ models \\
Normalizing Flow Architecture & $6$ layers, $128$ hidden units \\
Posterior Grid Resolution ($n_{\text{grid}}$) & $300$ \\ \hline
\end{tabular}
\end{table}

\subsection{Full sensitivity analysis}

\begin{figure}[htbp]
    \centering
    \includegraphics[width=0.8\linewidth]{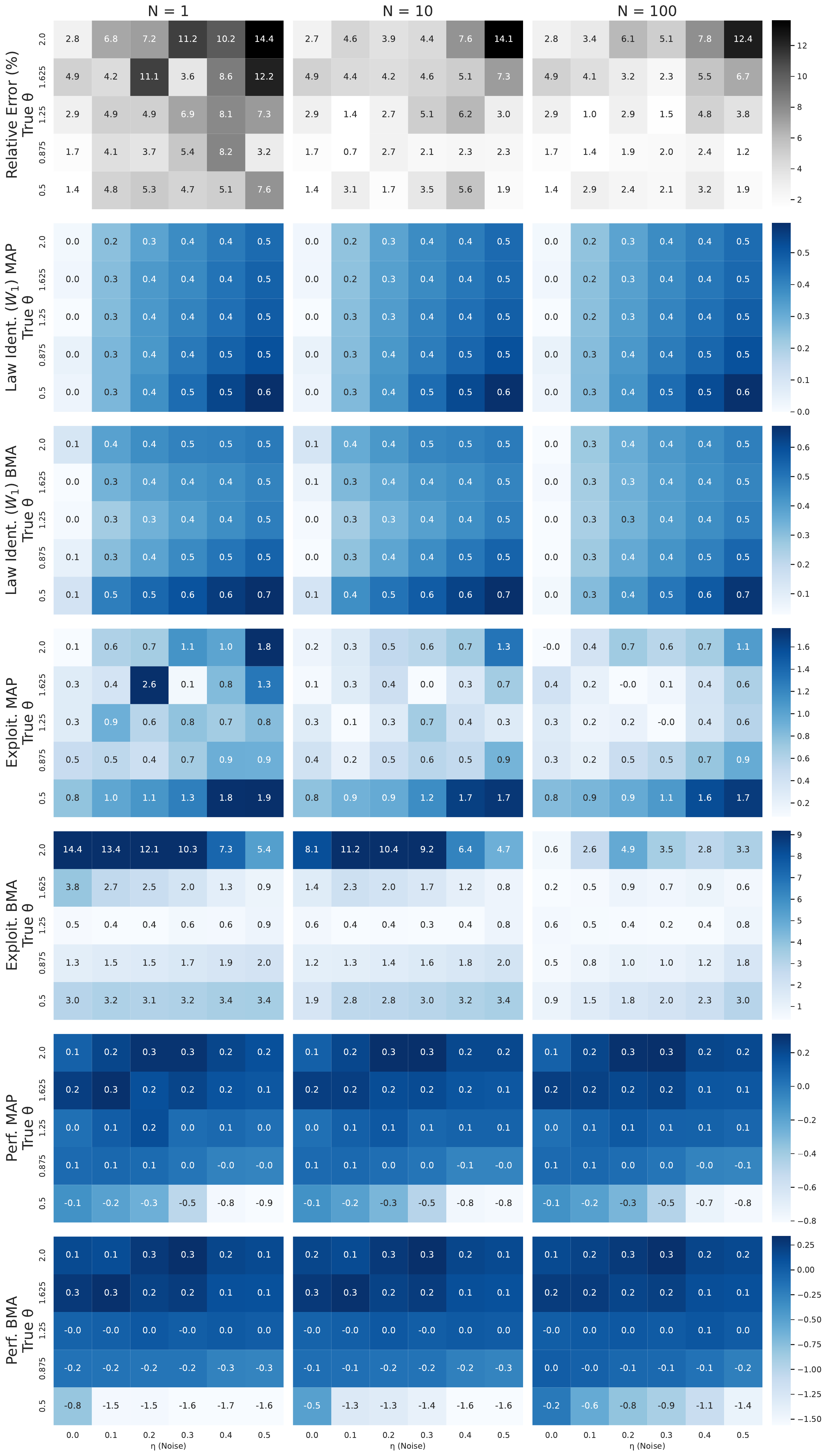}
    \caption{Sensitivity analysis, MAF, Averaged Metrics per $\theta_0$, $\xi = \rho$, $\cI = [H-1]$.}
\end{figure}

\begin{figure}
    \centering
    \includegraphics[width=0.78\linewidth]{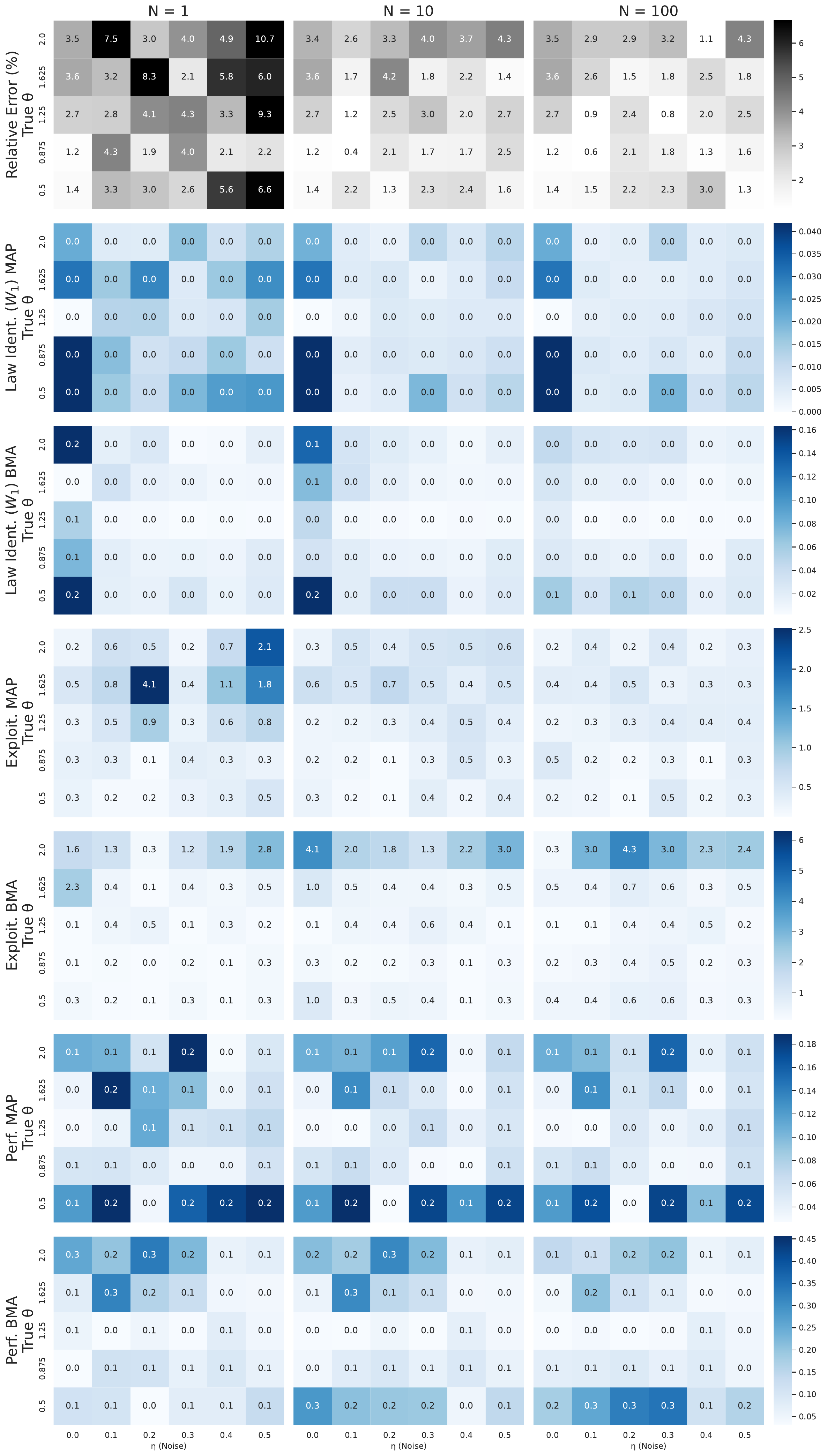}
    \caption{Sensitivity analysis, MAF, Standard Deviation Metrics per $\theta_0$, $\xi = \rho$, $\cI = [H-1]$.}
\end{figure}

\begin{figure}[htbp]
    \centering
    \includegraphics[width=0.8\linewidth]{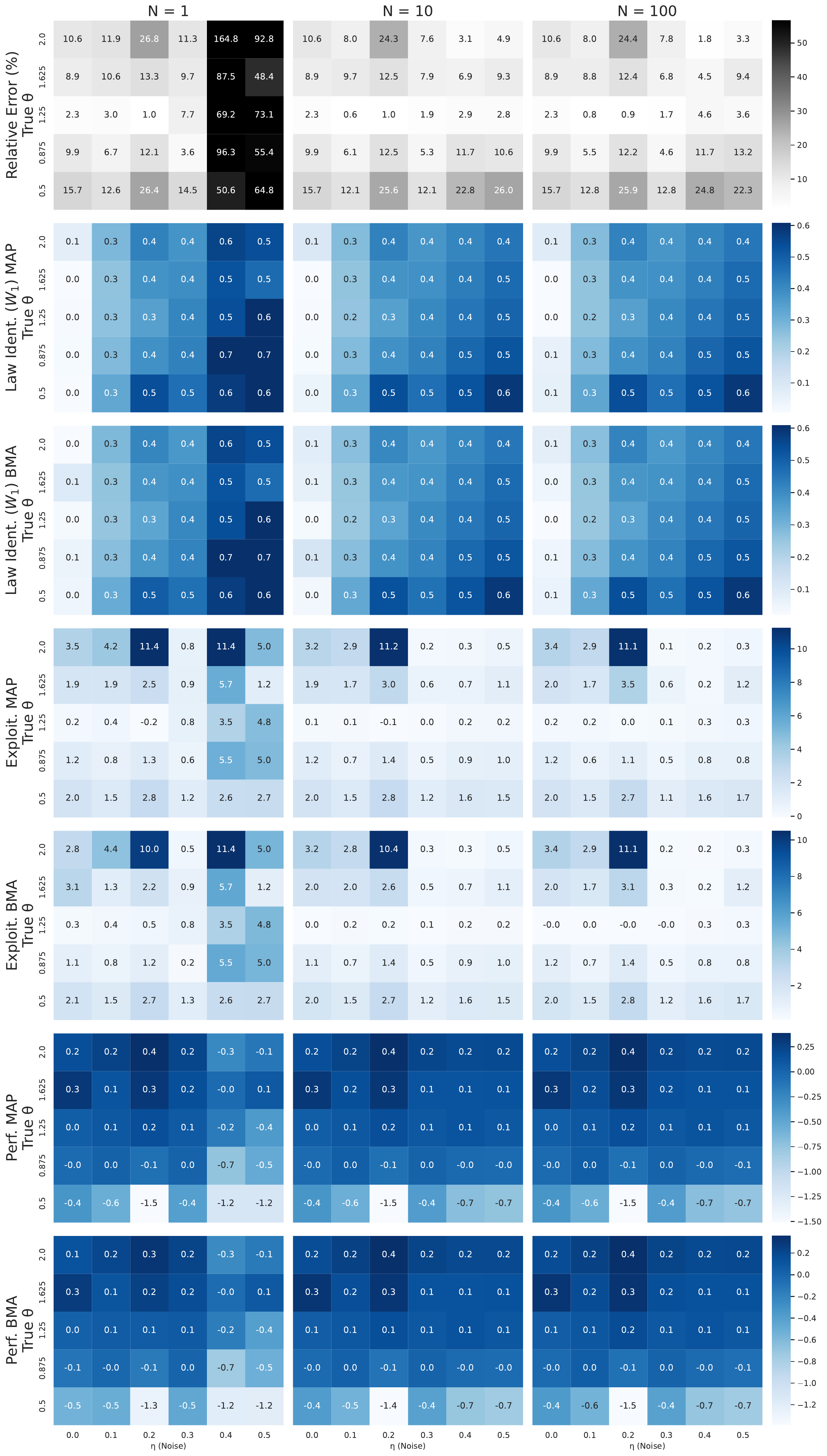}
    \caption{Sensitivity analysis, GP, Averaged Metrics per $\theta_0$, $\xi = \rho$, $\cI = [H-1]$.}
\end{figure}

\begin{figure}
    \centering
    \includegraphics[width=0.78\linewidth]{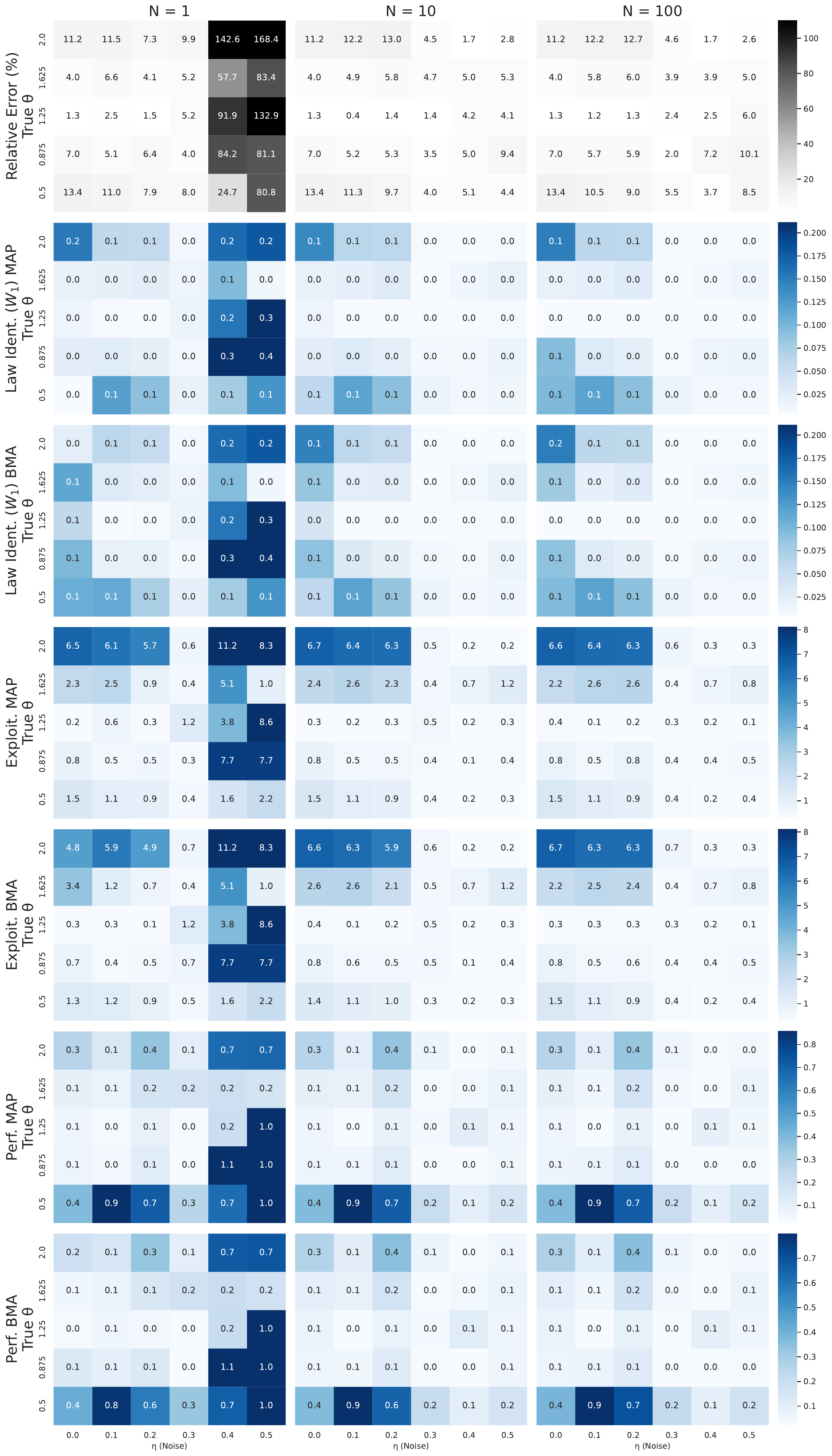}
    \caption{Sensitivity analysis, GP, Standard Deviation Metrics per $\theta_0$, $\xi = \rho$, $\cI = [H-1]$.}
\end{figure}

\begin{figure}[htbp]
    \centering
    \includegraphics[width=1\linewidth]{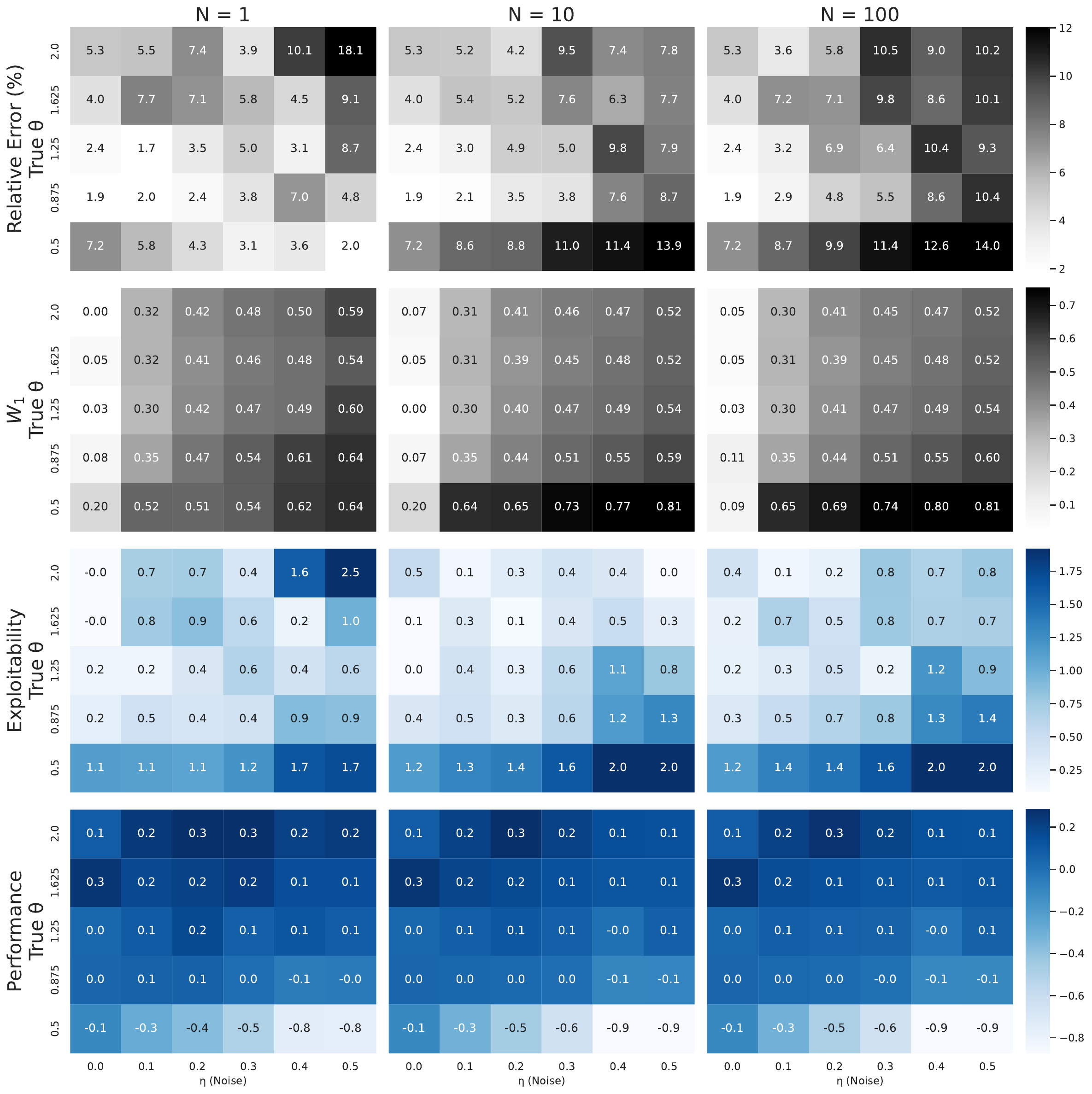}
    \caption{Sensitivity analysis, MF-IRL, Averaged Metrics per $\theta_0$, $\xi = \mu$, $\cI = [H-1]$.}
\end{figure}

\begin{figure}
    \centering
    \includegraphics[width=1\linewidth]{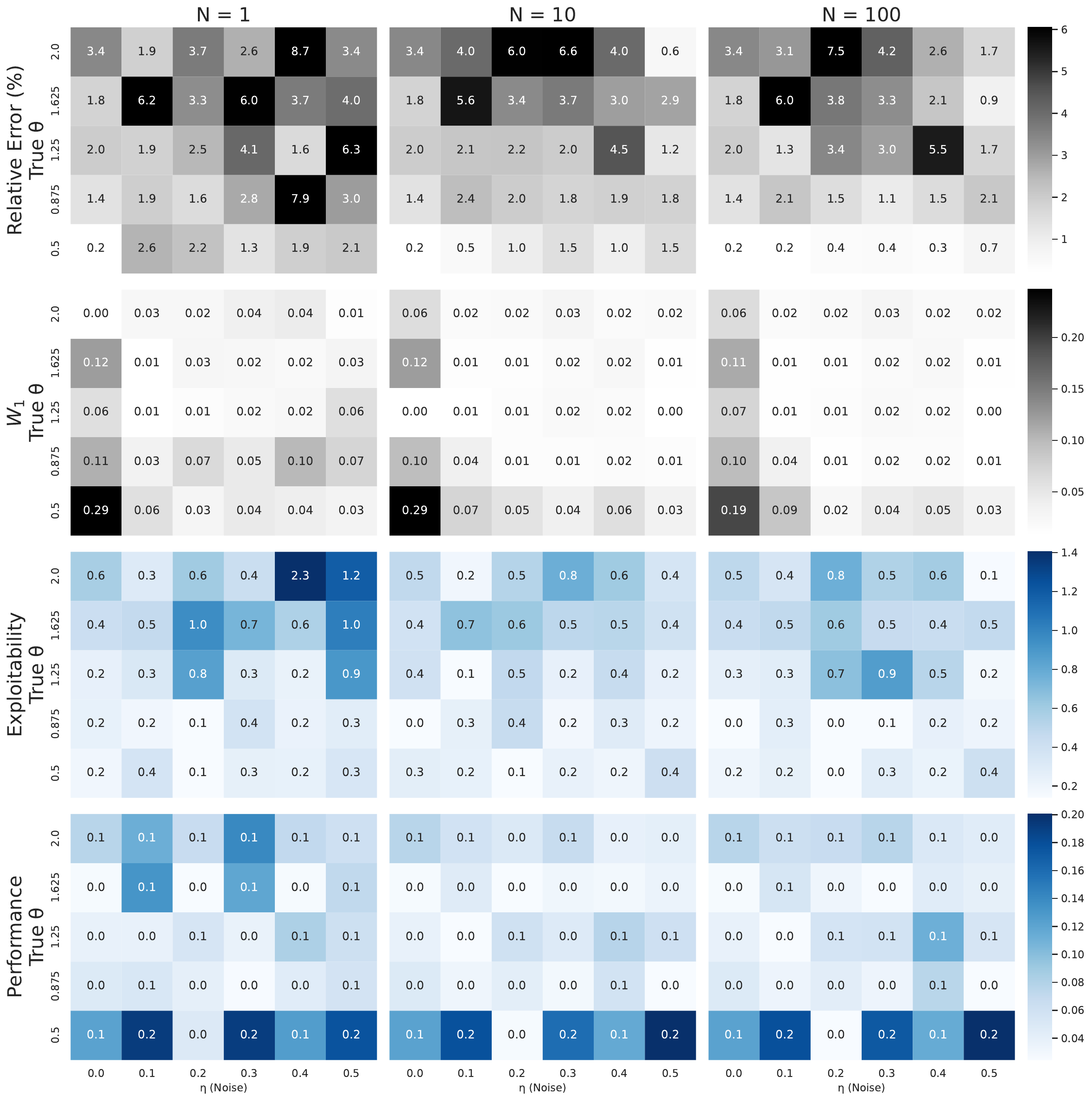}
    \caption{Sensitivity analysis,  MF-IRL, Standard Deviation Metrics per $\theta_0$, $\xi = \mu$, $\cI = [H-1]$.}
\end{figure}

\clearpage

\section{Experiment 2 (District): details}
\label{appendix:district}

\subsection{Model description}
\label{appendix:district-model}
Let $\cX = \{0, 1, 2\}^2$, $\cA = \{(0,0), (\pm 1, 0), (0, \pm 1)\}$, and $\cE^0 = \{0, 1\}^{|\cX|}$. Let for any $(x, a, \rho, e^0) \in \cX \times \cA \times \Delta_\cX \times \cE^0$, $ P(x, a, \rho, e^0) = \mathds{1}_{\{e^0(x) = 0\}}\delta_{x + a \bmod 3} + \mathds{1}_{\{e^0(x) = 1\}}\frac{1}{|\cA|} \sum_{s \in \cA} \delta_{x + s \bmod 3},$ where the sum $x + a$ and the reduction modulo $3$ are taken componentwise.
The common noise law is defined as $\varepsilon^0 \sim (\nu^0_\eta)^{\otimes |\cX|}$, where $\nu^0_\eta = \text{Bernoulli}(\eta)$ and $\eta \ge 0$ is the common noise intensity. The reward function, reflecting a preference for specific hotspots and a penalty for overcrowding, is: $r_{(\theta^e, \theta^c)}(x, a, \rho) = 100\theta^e(x) - 2\theta^c \log(\rho(x) + 10^{-5}) - |a|/5 $
for $\theta = (\theta^e, \theta^c) \in \bar \Theta = \RR_+^9 \times [0,5]$.

\subsection{Mean field illustration and ground truth ($\theta_0$)}

Fig.~\ref{fig:district_traj_eta=0.3} shows realizations of the mean field for three different values of $\theta$.

\begin{figure}[h]
    \centering
    \includegraphics[width=\linewidth]{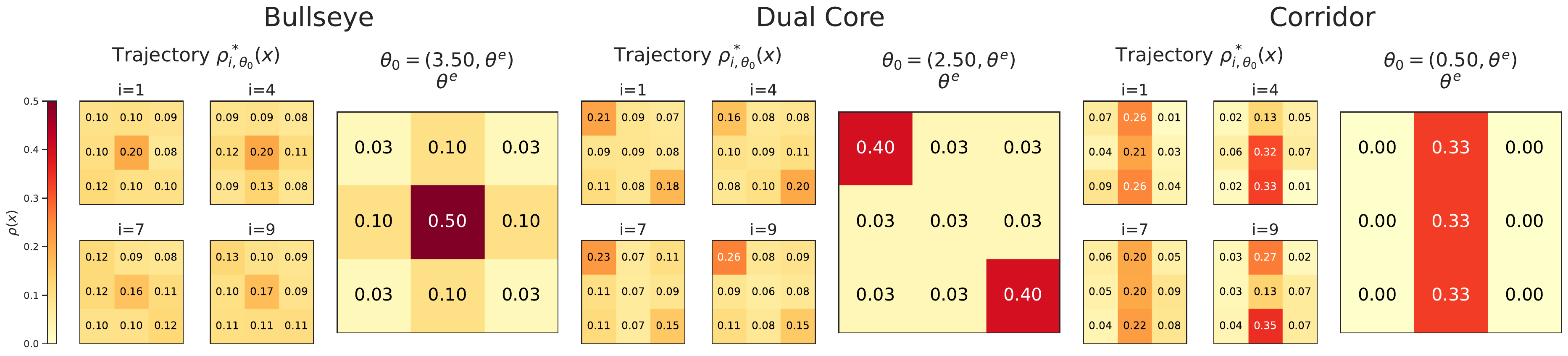}
    \caption{Mean Field Equilibrium Trajectories for three different scenarios of $\theta_0$, with $\eta = 0.3$.}
    \label{fig:district_traj_eta=0.3}
\end{figure}

\subsection{Hyperparameters}

\begin{table}[htbp]
\centering
\caption{Hyperparameters for the Entertainment District Experiment}
\label{tab:district_hyperparameters}
\begin{tabular}{lc}
\hline
\textbf{Hyperparameter} & \textbf{Value} \\ \hline
State/Action Space & $|\cX|=9$, $|\cA|=5$ \\
Horizon ($H$) & $10$ \\
Common Noise Intensity ($\eta$) & $\{0.3, 0.5 \}$ \\
Prior $q(\theta)$ (GP) & $\cN\big(((\frac{1}{9}, \dots,\frac{1}{9}), \frac{5}{2}), 0.01\big)$ \\
Prior $q(\theta)$ (MAF) & $\text{Dirichlet}(1, \dots, 1) \otimes \text{Unif}[0, 5]$ \\
Test Parameters & See Figure~\ref{fig:district_traj_eta=0.3} \\
Observation Horizon ($|\cI|$) & $\{9,1\}$ \\ \hline
Bayesian Fictitious Play Rounds ($K_{\text{bays}}$) & $20$ \\
Fictitious Play Inner Epochs & $8000$ \\
Distillation Epochs & $50000$ \\
Monte Carlo Sample Size ($M$) & $10000$ \\
Learning Rate (Fictitious Play / Distillation) & $5 \times 10^{-4}$ \\
Batch Size & $512$ \\ \hline
PCA Variance Threshold & $1$ \\
Normalizing Flow (MAF) Epochs & $25\,000 $\\
Normalizing Flow Learning Rate & $1 \times 10^{-4}$ \\
Normalizing Flow Ensemble Size & $5$ models \\
Normalizing Flow Architecture & $6$ layers, $128$ hidden units \\
\end{tabular}
\end{table}

\subsection{Full sensitivity analysis}

\begin{figure}[htbp]
    \centering
    \includegraphics[width=0.68\linewidth]{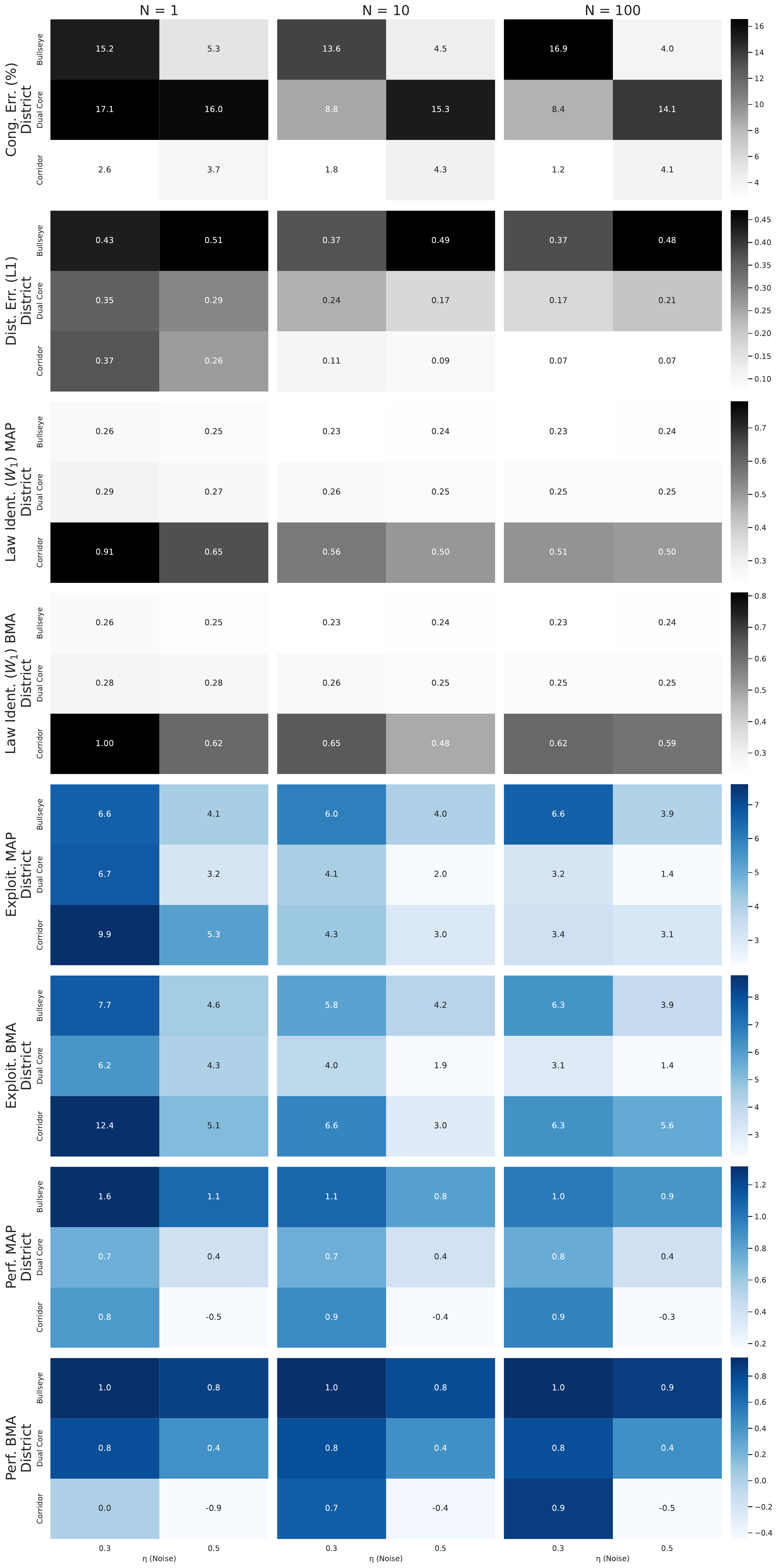}
    \caption{Sensitivity analysis, MAF, Averaged Metrics per $\theta_0$, $\xi = \rho$, $\cI = [H-1]$.}
\end{figure}

\begin{figure}
    \centering
    \includegraphics[width=0.68\linewidth]{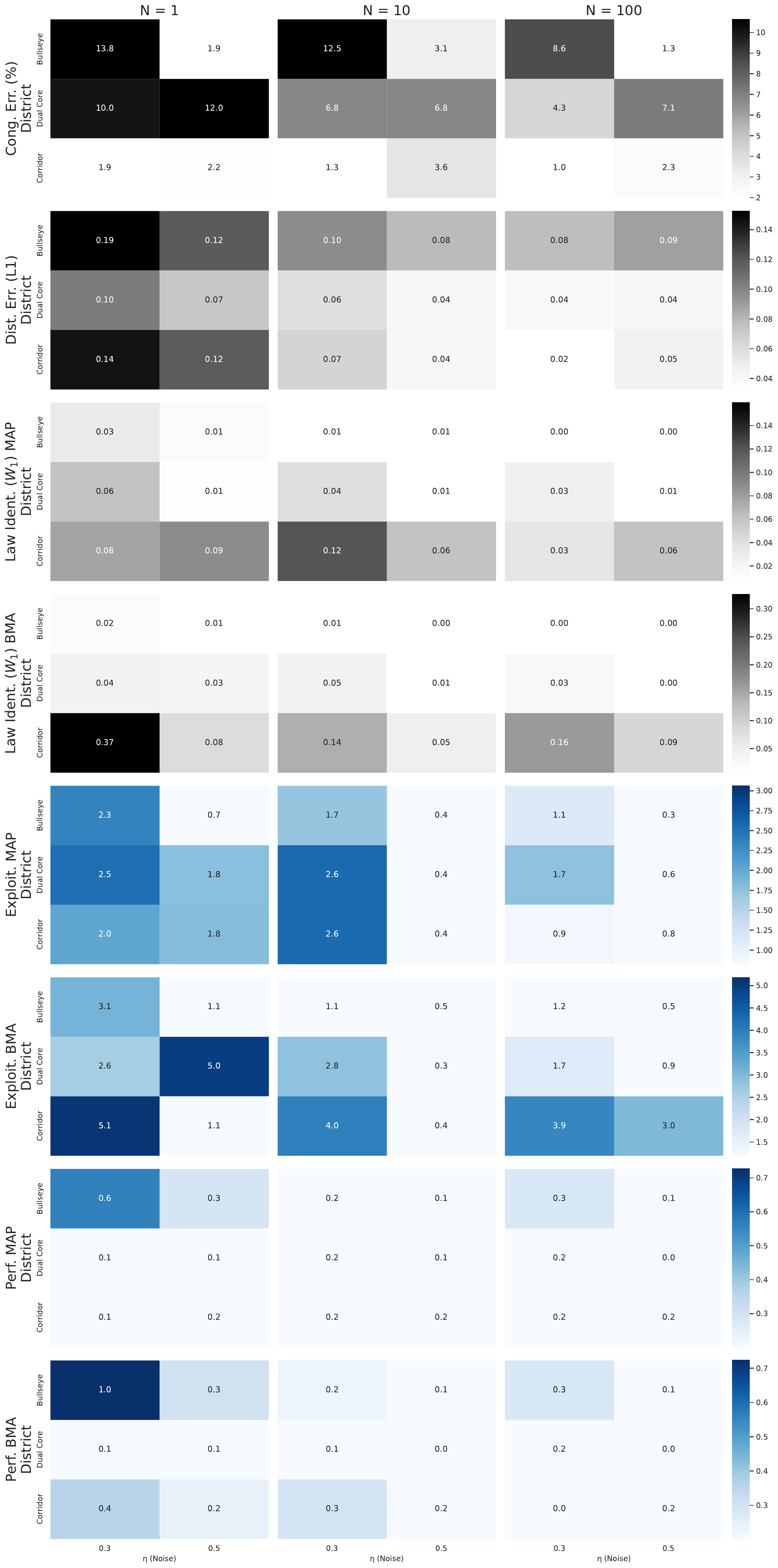}
    \caption{Sensitivity analysis, MAF, Standard Deviation Metrics per $\theta_0$, $\xi = \rho$, $\cI = [H-1]$.}
\end{figure}

\begin{figure}[htbp]
    \centering
    \includegraphics[width=0.68\linewidth]{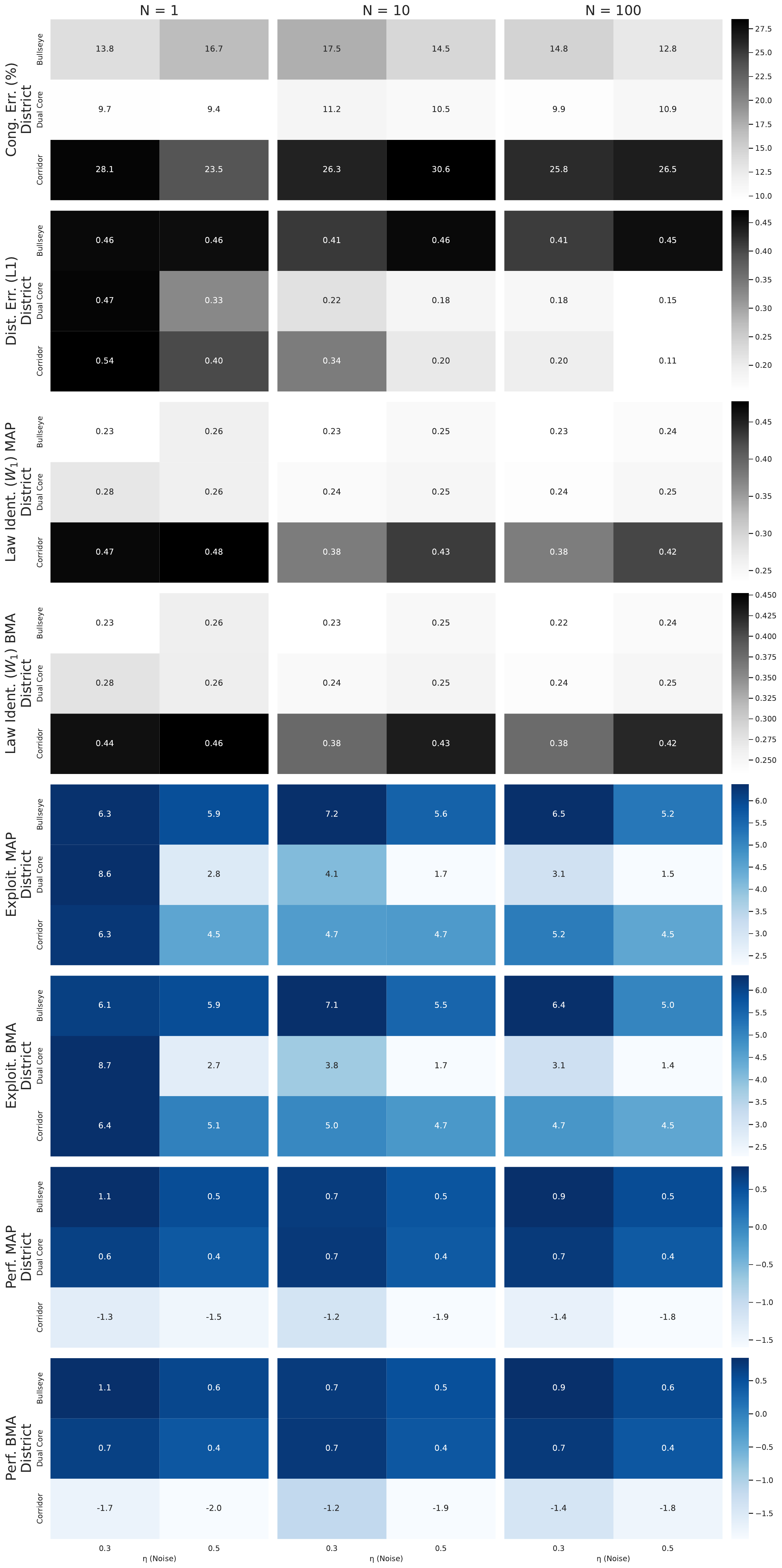}
    \caption{Sensitivity analysis, GP, Averaged Metrics per $\theta_0$, $\xi = \rho$, $\cI = [H-1]$.}
\end{figure}

\begin{figure}
    \centering
    \includegraphics[width=0.68\linewidth]{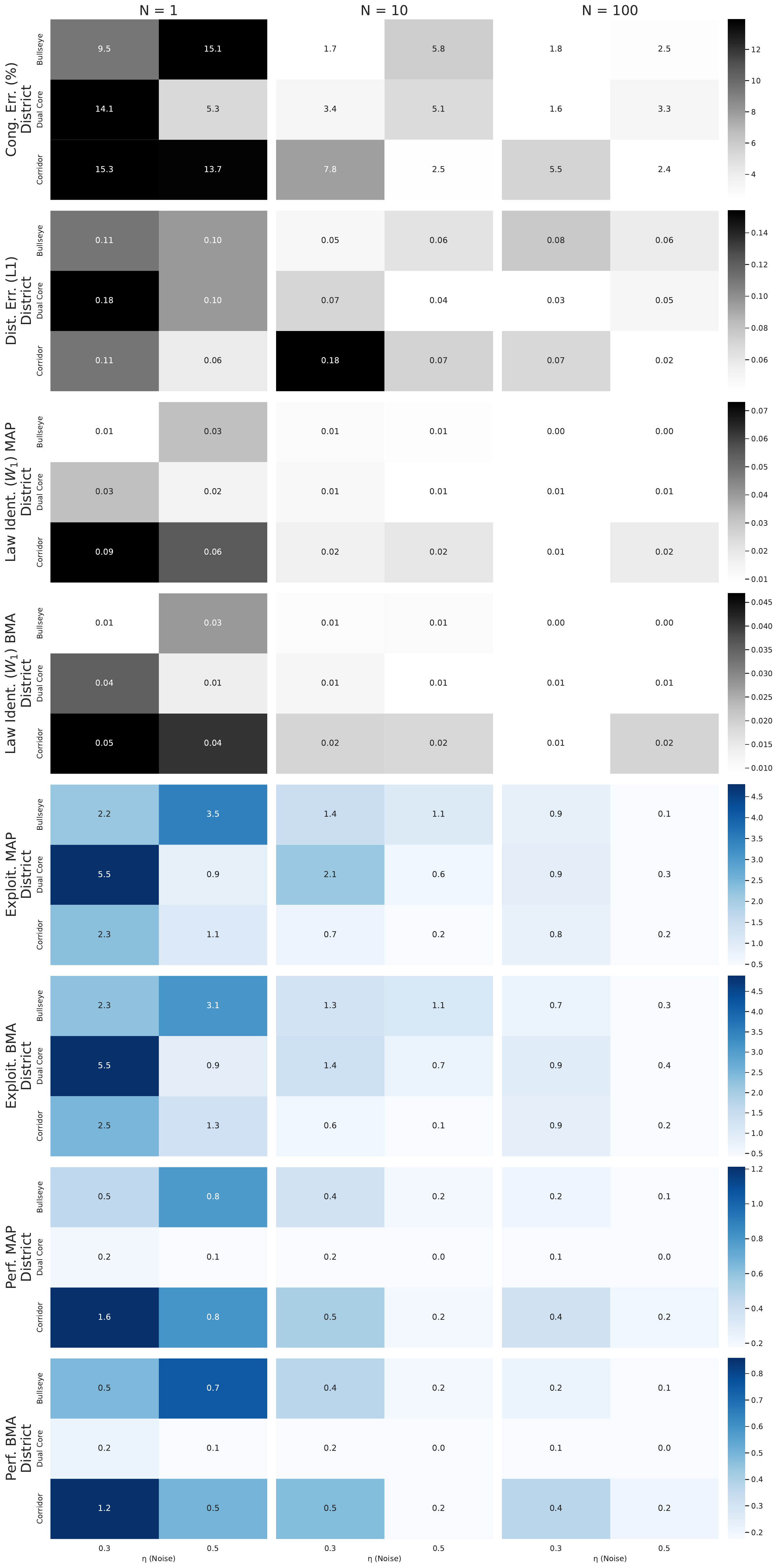}
    \caption{Sensitivity analysis, GP, Standard Deviation Metrics per $\theta_0$, $\xi = \rho$, $\cI = [H-1]$.}
\end{figure}

\begin{figure}[htbp]
    \centering
    \includegraphics[width=1\linewidth]{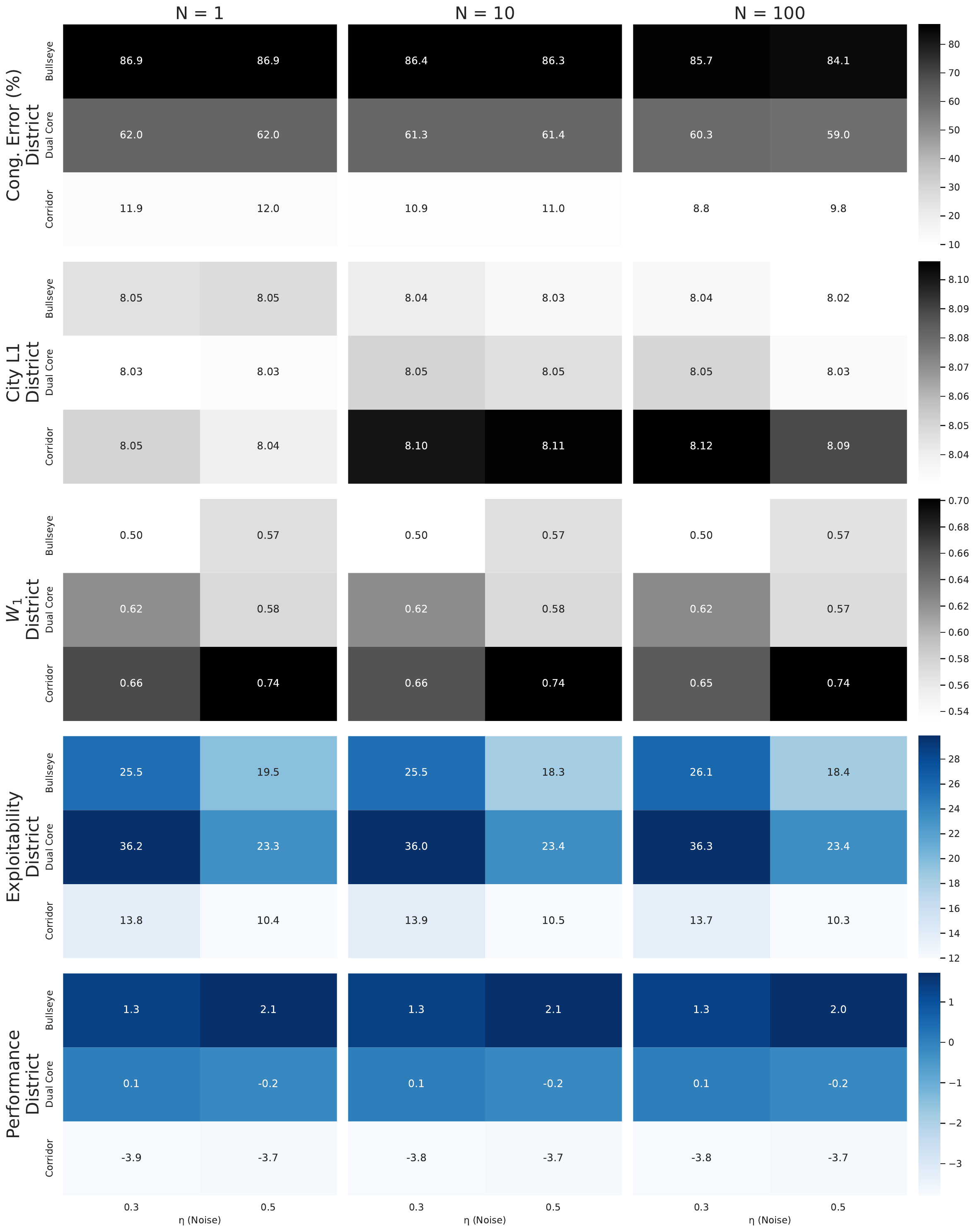}
    \caption{Sensitivity analysis, MF-IRL, Averaged Metrics per $\theta_0$, $\xi = \mu$, $\cI = [H-1]$.}
\end{figure}

\begin{figure}
    \centering
    \includegraphics[width=1\linewidth]{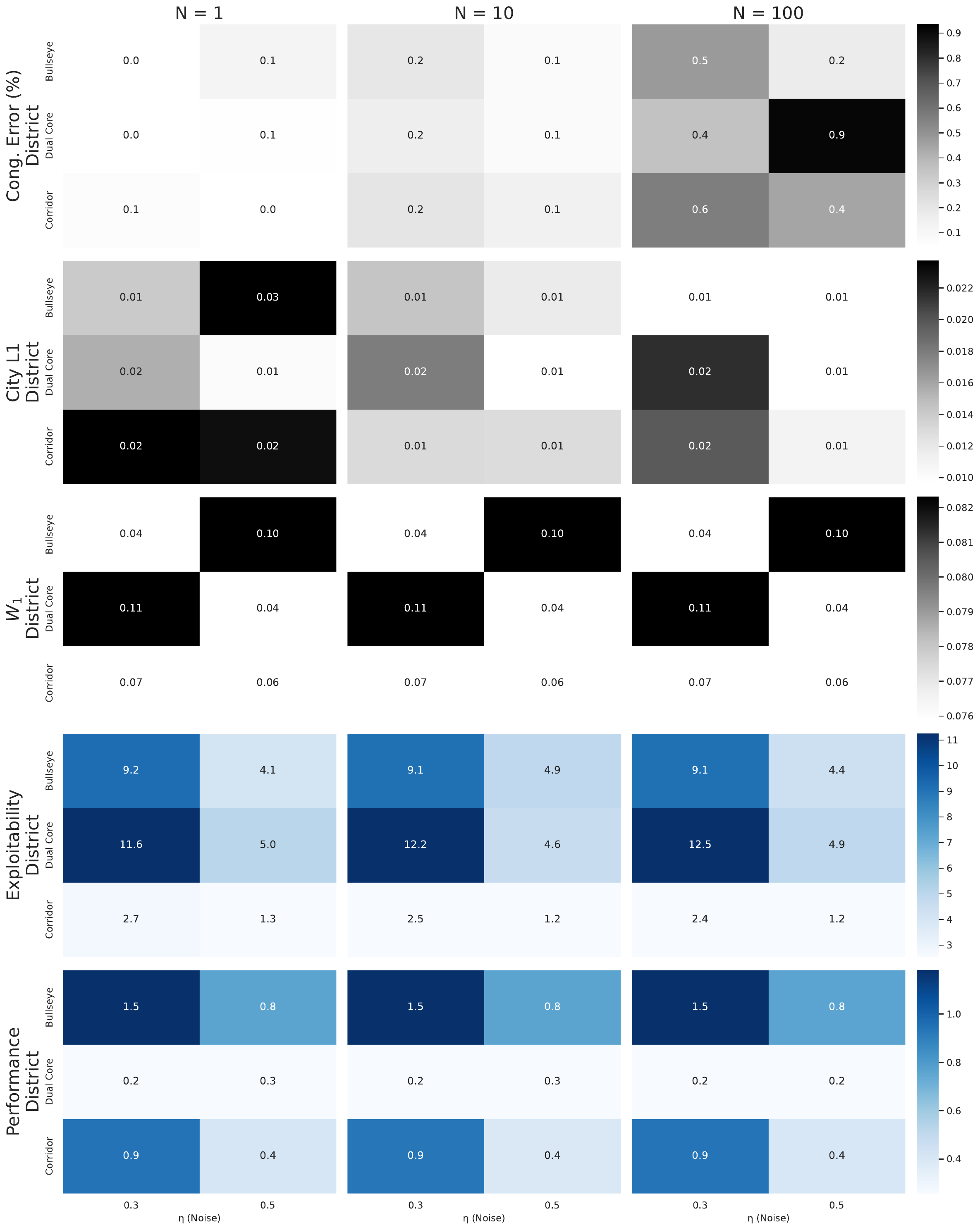}
    \caption{Sensitivity analysis,  MF-IRL, Standard Deviation Metrics per $\theta_0$, $\xi = \mu$, $\cI = [H-1]$.}
\end{figure}

\clearpage

\section{Experiment 3 (SIS): details}

\subsection{SIS epidemic model}
\label{sec:exp-sis}

To evaluate our method on parameter inference related to the dynamics, we consider a Susceptible--Infected--Susceptible (SIS) epidemic model inspired by~\cite{cui2021approximately}, where the infection probability of a susceptible agent is directly modulated by the fraction of infected agents in the population. Since the goal of this experiment is solely to demonstrate that our framework also applies to dynamics and to the state-action observation setting $\xi = \mu$, we do not report the full suite of metrics but focus on parameter recovery. 

\noindent\textbf{Model.}
The state space is $\cX = \{S, I\}$ (Susceptible and Infected), and the action space is $\cA = \{U, D\}$ (Go Out / Unprotected, and Social Distancing). 
Let $\rho = (\rho(S), \rho(I)) \in \Delta_\cX$ denote the current mean-field distribution.  The transition kernel is given by:
\begin{equation}
\label{eq:sis-transitions}
\begin{aligned}
    P_\theta(I \mid S, U, \rho, \varepsilon^0) &= \mathrm{clip}\!\big(\theta\, \rho(I) + \varepsilon^0 \cdot \delta, \; 0, \; 1\big), \\
    P_\theta(S \mid S, D, \rho, \varepsilon^0) &= 1, \\
    P_\theta(S \mid I, a, \rho, \varepsilon^0) &= \gamma, \qquad \forall\, a \in \cA,
\end{aligned}
\end{equation}
where $\theta \in (0, 1]$ is the \emph{infection transmission rate}, $\gamma \in (0, 1]$ is the \emph{recovery probability}, and $\delta > 0$ is a small perturbation strength for common noise events. The common noise $\varepsilon^0 \in \{0, 1\}$ models aggregate shocks (e.g., super-spreader events): when $\varepsilon^0$ is active, the infection probability receives an additive perturbation $\delta$, controlled by a Bernoulli intensity parameter $\eta$.  The key mean-field interaction lies in the term $\theta\, \rho(I)$: a susceptible agent who chooses to go out faces a risk that is directly proportional to the current fraction of infected individuals and to the intrinsic transmissibility $\theta$. Social distancing ($a = D$) provides full protection against infection.

The reward function captures the trade-off between the cost of being infected and the cost of preventive action:
\begin{equation}
\label{eq:sis-reward}
    r(x, a, \rho) = -\mathds{1}_{x = I} - c_D \cdot \mathds{1}_{a = D},
\end{equation}
where $c_D > 0$ is the cost of social distancing. The agent thus weighs the disutility of infection against the opportunity cost of distancing. At equilibrium, healthy agents in low-infection regimes ($\rho(I) \approx 0$) will predominantly go out, while high-infection regimes ($\rho(I) \gg 0$) incentivize distancing, creating a non-trivial fixed-point problem.

\noindent\textbf{Bayesian setup:} We consider $\bar \Theta = [0, 1]$, $\xi = \mu$ and $\cI = [H-1]$. We consider only the MAF inference. The observer considers the prior $q$ being $\text{Unif}([0.1, 0.8])$. We vary the ground truth $\theta_0$ across $\{0.2, 0.3, 0.4, 0.5, 0.6, 0.7\}$. We let vary the number of observations $N \in \{10, 100, 1000\}$ to study posterior concentration as a function of the sample size.

\noindent\textbf{Results:} 
The SIS model presents a qualitatively different identifiability challenge compared to the previous experiments. Because $\theta$ enters multiplicatively in the transition kernel, the sensitivity of the equilibrium trajectory $\rho^*_\theta$ to changes in $\theta$ depends on the equilibrium infection level $\rho^*_\theta(I)$ itself. In the low-infection regime ($\theta \lesssim 0.4$), small changes in $\theta$ produce clearly distinguishable trajectories and the parameter is well-identified. In the high-infection regime ($\theta \gtrsim 0.6$), the equilibrium infection level saturates due to the clipping and the self-limiting nature of the SIS dynamics, leading to diminished sensitivity and an identifiability ceiling. But overall, the averaged error on $\theta_0 \in \{0.2, 0.3, 0.4, 0.5, 0.6, 0.7\}$ remains inferior to $5\%$ (Figure~\ref{fig:sis_posterior}).

\begin{figure}[htbp]
    \centering
    \includegraphics[width=1\linewidth]{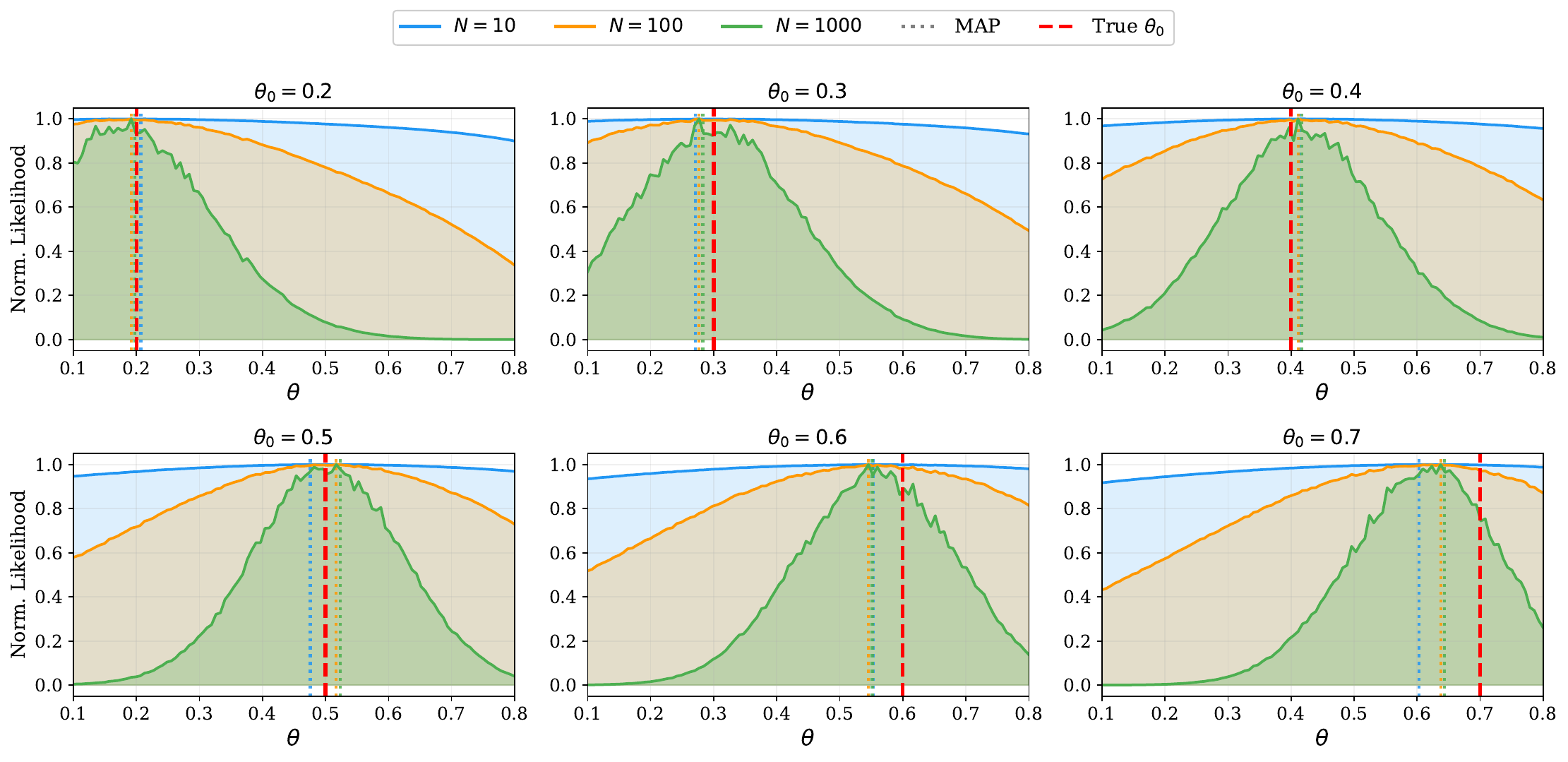}
    \caption{Posterior across $\theta_0$: $\xi = \mu$, $\eta = 0.1$, $\cI = [H-1]$.}
    \label{fig:sis_posterior}
\end{figure}

\subsection{Hyperparameters}

Hyperparameters are provided in Table~\ref{tab:sis_hyperparameters}.

\begin{table}[htbp]
\centering
\caption{Hyperparameters for the SIS Epidemic Experiment (Medium Configuration)}
\label{tab:sis_hyperparameters}
\begin{tabular}{lc}
\hline
\textbf{Hyperparameter} & \textbf{Value} \\ \hline
State/Action Space & $|\cX|=2$, $|\cA|=2$ \\
Horizon ($H$) & 10 \\
Fixed Parameters & $\gamma = 0.3$, $c_D = 0.5$ \\
Common Noise Intensity ($\eta$) & 0.1 \\
Common Noise Perturbation ($\delta$) & 0.05 \\
Prior $q(\theta)$ & MAF: $\text{Unif}([0.1, 0.8])$\\
Test Parameters & $\{0.2, 0.3, 0.4, 0.5, 0.6, 0.7\}$ \\
Observation Horizon ($|\cI|$) & 9 \\ \hline
Bayesian Fictitious Play Rounds ($K_{\text{bays}}$) & 20 \\
Fictitious Play Inner Epochs & 2000 \\
Distillation Epochs & 1000 \\
Monte Carlo Sample Size ($M$) & 1000 \\
Learning Rate (Fictitious Play / Distillation) & $5 \times 10^{-4}$ \\
Batch Size & 64 \\ \hline
PCA Variance Threshold & 0.999 \\
PCA Minimum Dimension ($\xi=\mu$) & 8 \\
Normalizing Flow (MAF) Epochs & 10\,000 \\
Normalizing Flow Learning Rate & $5 \times 10^{-4}$ \\
Normalizing Flow Ensemble Size & 10 models \\
Normalizing Flow Architecture & 6 layers, 128 hidden units \\
Posterior Grid Resolution ($n_{\text{grid}}$) & 100 \\ \hline
\end{tabular}
\end{table}

\section{Experiment 4 (real data): details}
\label{appendix:energy}

In this section, we show that the BE-MFG can find applicability in real-world settings. We propose a simple MFG model for electricity prices, driven by an exogenous demand and by the aggregate production of energy producers. We do not intend to study this model in depth, nor to claim genuine predictive value for electricity prices: such a study would require a more complex model and is beyond the scope of this paper.

\subsection{Model: a mean field game of electricity producers}

For simplicity, the model deals with normalized quantities, which are dimensionless.
An agent is a producer, whose state $x \in \cX$ is its normalized production level, on a grid of $32$ equally spaced values in $[x_{min}, x_{max}] = [1 - 6\sigma_D, 1 + 6\sigma_D]$, where $\sigma_D$ is the standard deviation of the normalized demand, estimated on the data below. We denote by $\Delta$ the grid step.
The time horizon is one day, $H = 24$ hours. At each hour, the producer chooses a ramp $a \in \{-2, \dots, 2\}$, and its production evolves as $x_{t+1} = \min(x_{max}, \max(x_{min}, x_t + a_t \Delta  + \epsilon_t \Delta ))$, where $\epsilon_t \in \{-1, 0, 1\}$ is an idiosyncratic noise with uniform law. The common noise is the normalized demand curve of the day, $(\tilde D_t)_{t}$, which is exogenous. It is carried as a \defi{shared state component} in the sense of Remark~\ref{remark:shared-state}: the demand $\tilde D_t$ is common to all the producers and observed by them, so that the mean field of the model is of the form $\rho_t = \delta_{\tilde D_t}\otimes \hat \rho_t$, where $\hat \rho_t \in \Delta_\cX$ is the distribution of the production levels. In particular, $\tilde D_t$ is recovered from $\rho_t$ by projection on the first marginal, and, since the parameter $\theta = (c, \beta)$ introduced below only enters the reward and not the transition kernel, this model satisfies the assumptions of Theorem~\ref{thm:consistency}.
The price is endogenous: given the aggregate production $\tilde Q_t = \sum_{x} x \rho_t(x)$ of the population, it is
\[
\tilde P_t = \max\big(0, \tilde D_t + \beta (\tilde D_t - \tilde Q_t)\big),
\]
so that the price increases when the demand exceeds the aggregate production. A producer sells its production at this price and is penalized when it produces more than the demand:
\[
r_\theta(x, \rho_t, \tilde D_t) = x \tilde P_t - c \, (x - \tilde D_t)_+ .
\]
The unknown parameter is $\theta = (c, \beta)$, where $c$ is the over-production penalty and $\beta$ is the sensitivity of the price to the gap between demand and production. The initial distribution $\rho_0$ is a discretized Gaussian centered at the production level $\tilde Q_0 = \tilde D_0 + (\tilde D_0 - \tilde P_0)/\beta$ that is consistent with the observed initial demand and price. The optimal production of a producer depends on the price, hence on the production of the whole population, and on the random demand.

\subsection{Data and protocol}

We use the public data of the ENTSO-E Transparency Platform for the French bidding zone: the hourly actual total load, used as the demand, and the hourly day-ahead price, in 2018 and 2019. Days with missing values or outliers are removed. The standard deviation of the normalized demand, used to set the production grid $[1 - 6\sigma_D, 1 + 6\sigma_D]$ of the model, is estimated on the 2018 data, which gives $\sigma_D = 0.089$. The three periods have distinct roles:
\begin{itemize}
    \item \textbf{2018:} the daily demand curves define the common noise of the model, used to compute the BE-NE and to train the MAFs;
    \item \textbf{January to November 2019} ($311$ days): the observations used to infer $\theta$, each day being normalized by its own daily mean;
    \item \textbf{December 2019} ($22$ days): held-out days, on which we compare the predicted and observed prices. To avoid any leakage, these days are normalized by the mean demand and price of November 2019.
\end{itemize}
The observation of a day is the pair of normalized curves $\xi = (\tilde D_t, \tilde P_t)_{t \in [H-1]}$: the observer does not see the distribution of the producers, but only an aggregate quantity (the price) together with the common noise.

\noindent\textbf{BE-MFG.} We use the Gaussian prior $c \sim \cN(2.75, 1.15^2)$, $\beta \sim \cN(0.85, 0.25^2)$, and compute the posterior on $[0.5, 5] \times [0.2, 1.5]$. The BE-NE is computed as in the other experiments, and the conditional density $\Gamma_{\theta, \cI}$ is learned by an ensemble of MAFs directly on the $48$-dimensional observation. The posterior is computed on a $60 \times 40$ grid of parameters, and we use the BMA policy with $64$ posterior samples.

\noindent\textbf{Prediction.} For each day of December and each hour $t$, the mean field is initialized from the observed price $\tilde P_t$, as for $\rho_0$, and propagated for one step with the equilibrium policy, which gives the predicted price $\tilde P_{t+1}$. We report the mean squared error (MSE) between the predicted and observed prices over all hours of December, and the Wasserstein-1 distance $\mathcal{W}_1$ between the empirical distributions of the predicted and observed prices.

\noindent\textbf{Baselines.} (i) MF-IRL (App.~\ref{appendix:mf-irl}), inferred on the same data; since it returns a point estimate, we predict with a dedicated PMFG solved at the estimated parameter. (ii) An autoregressive model (AR): for each hour $t$, $\tilde P_{t+1} = a_t \tilde P_t + b_t \tilde D_t + d_t \tilde D_{t+1} + e_t$, fitted by least squares on January to November 2019.

\subsection{Results}

Table~\ref{tab:energy_results} and Figure~\ref{fig:energy_predictions} report the results. The BE-MFG reproduces the daily shape of the prices of December, including the morning and evening peaks, from a parameter inferred on the previous months. MF-IRL, which is not designed for common noise, returns a price that is almost insensitive to the demand, and fails. The AR regression, a statistical one-step predictor with $4 \times 23$ free coefficients, is slightly more accurate than the BE-MFG, which only has two parameters and explains the prices through an equilibrium mechanism. Overall, this experiment shows that the BE-MFG pipeline can be applied to real data.

\begin{table}[htbp]
\centering
\small
\vspace{8pt}
\caption{Real data: one-step prediction of the prices of December 2019, with $\theta$ inferred on January to November 2019. BE-MFG: mean $\pm$ std over five seeds.}
\label{tab:energy_results}
\begin{tabular}{@{}lcc@{}}
\toprule
Method & MSE & $\mathcal{W}_1$ \\
\midrule
BE-MFG (MAF, BMA) & $0.0116 \pm 0.0006$ & $0.071 \pm 0.004$ \\
MF-IRL & $0.0616$ & $0.191$ \\
AR (fitted on January to November 2019) & $0.0098$ & $0.049$ \\
\bottomrule
\end{tabular}
\end{table}

\begin{figure}[htbp]
    \centering
    \vspace{12pt}
    \includegraphics[width=\linewidth]{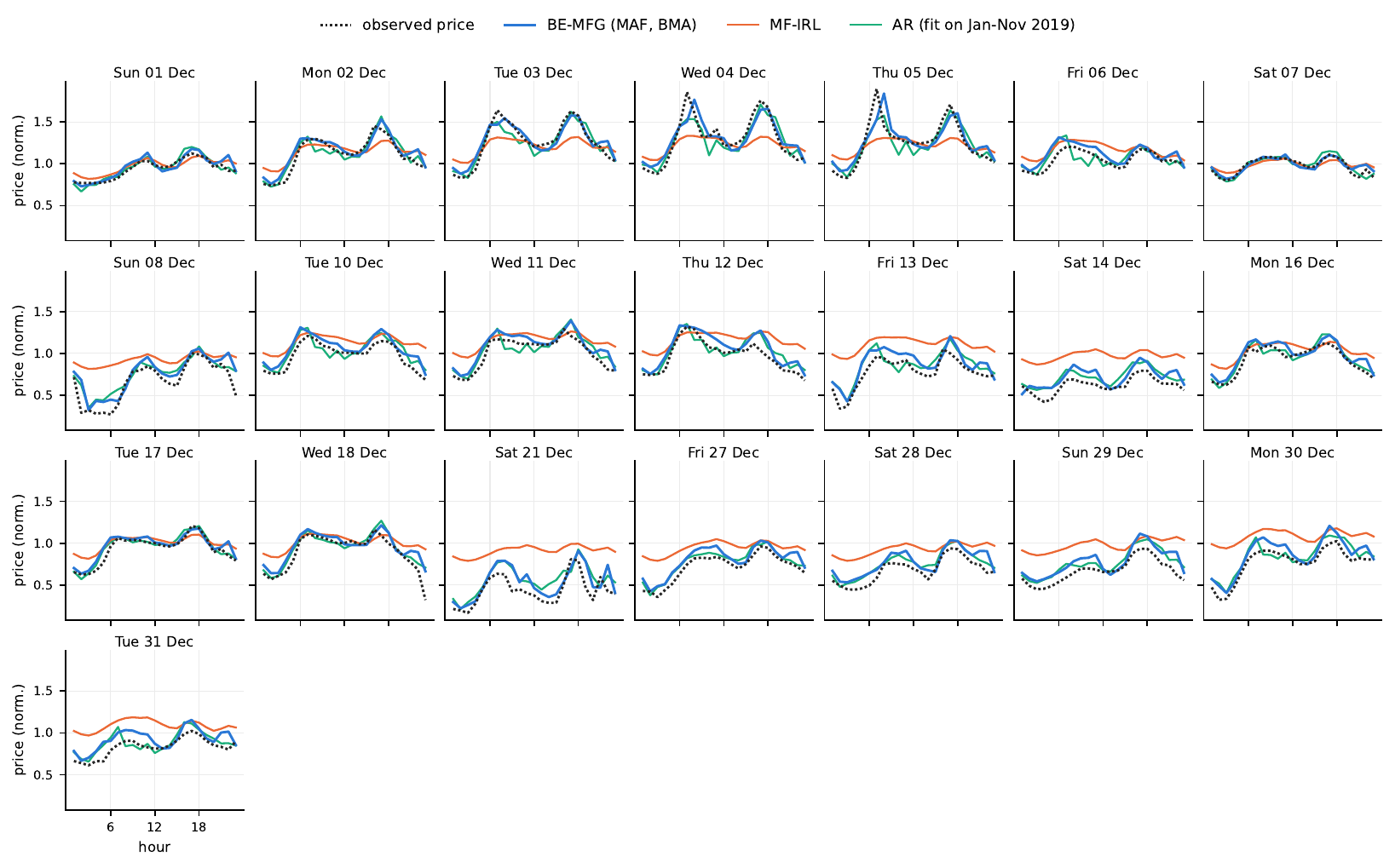}
    \caption{Real data: observed prices of December 2019 (dotted) and one-step predictions of the BE-MFG (BMA), MF-IRL and the AR model, one panel per day. Prices are normalized by the mean price of November 2019.}
    \label{fig:energy_predictions}
\end{figure}

\subsection{Hyperparameters}

\begin{table}[htbp]
\centering
\small
\caption{Hyperparameters for the real-data experiment.}
\label{tab:energy_hyperparameters}
\begin{tabular}{@{}lc@{}}
\toprule
\textbf{Hyperparameter} & \textbf{Value} \\
\midrule
State/Action Space & $|\cX| = 32$, ramps $\{-2, \dots, 2\}$ \\
Horizon ($H$) & $24$ \\
Prior $q(\theta)$ & $\cN(2.75, 1.15^2) \otimes \cN(0.85, 0.25^2)$ \\
Bayesian Fictitious Play Rounds ($K_{\text{bays}}$) & $20$ \\
Fictitious Play Inner Epochs & $8000$ \\
Distillation Epochs & $8000$ \\
Learning Rate (Fictitious Play / Distillation) & $5 \times 10^{-4}$ \\
Batch Size & $64$ \\
\midrule
Normalizing Flow (MAF) Epochs & $8000$ \\
Normalizing Flow Learning Rate & $10^{-3}$ \\
Normalizing Flow Ensemble Size & $5$ models \\
Normalizing Flow Architecture & $5$ layers, $128$ hidden units \\
Posterior Grid Resolution & $60 \times 40$ \\
BMA Posterior Samples & $64$ \\
\midrule
Dedicated PMFG (MF-IRL prediction) & $8$ FP rounds $\times$ $2000$ epochs \\
\bottomrule
\end{tabular}
\end{table}

\section{Grid search}
\label{appendix:grid-search}

A natural approach to the inverse problem is a grid search: solve a dedicated PMFG at each node of a grid over $\bar\Theta$, and select the node whose equilibrium best explains the observations. We do not use it as a baseline in the main paper, because it is not competitive: (i) it requires one NE computation per node, and computing a single NE of an MFG is already expensive; (ii) its accuracy is limited by the resolution of the grid; (iii) its number of nodes grows exponentially with the dimension $d$ of the parameter. In this appendix, we illustrate these three points numerically, on Beach Bar ($d=1$), which is the most favorable case for a grid search, and on District ($d=10$).

\paragraph{Protocol.} For a given $\theta$, the quality of a policy $\pi$ is measured by its relative exploitability $\cE_\theta(\pi)/\cE_\theta(\pi_{\mathrm{unif}})$, where $\pi_{\mathrm{unif}}$ is the uniform policy. For the BE-MFG, we consider the policy induced at $\theta$ (Theorem~\ref{thm:NE_BE-MFG_induces_NE_PMFG}), and its exploitability is computed with the same (non-Bayesian) best-response network as for the dedicated PMFG, so that both methods are exploited over the same class of policies. The dedicated PMFG and the BE-MFG policies are neural networks of comparable sizes (Beach Bar: $\sim 10^4$ parameters; District: $\sim 10^5$ parameters), trained with the same number of gradient steps per FP round ($8{,}000$) and the same batch size ($512$). We use $\eta = 0.3$ and five seeds. The reported wall-clock is the cumulative training time of the best responses, for both methods; it excludes the distillation and the exploitability evaluations, and the exploitability is computed for the FP average policy. 

\paragraph{Computational cost.} Table~\ref{tab:grid-beachbar} shows that, at each $\theta$, a dedicated PMFG reaches $5\%$ relative exploitability in a time comparable to the single BE-MFG solve. A grid search pays this cost at every node: a grid of $8$ nodes costs $3{,}380$ seconds, whereas a single BE-MFG solve reaches $5\%$ at all the nodes in $1{,}082 \pm 590$ seconds. On District (Table~\ref{tab:grid-district}), a grid with only three points per axis would require $3^{10} = 59{,}049$ dedicated solves, each costing between several minutes and two hours, whereas a single BE-MFG solve reaches $10\%$ on the three scenarios in a time comparable to one dedicated solve.

\begin{table}[htbp]
\centering
\small
\caption{Beach Bar: FP rounds and best-response wall-clock (seconds) to reach $5\%$ relative exploitability (mean $\pm$ std over five seeds). The last row is the cost of the whole grid: the sum over the $8$ nodes for the dedicated PMFGs, and the time for the single BE-MFG solve to reach $5\%$ at all the nodes.}
\label{tab:grid-beachbar}
\begin{tabular}{@{}lcccc@{}}
\toprule
 & \multicolumn{2}{c}{FP rounds} & \multicolumn{2}{c}{Wall-clock (s)} \\
\cmidrule(lr){2-3} \cmidrule(lr){4-5}
$\theta$ & PMFG & BE-MFG & PMFG & BE-MFG \\
\midrule
$0.5$ & $8 \pm 1$ & $11 \pm 5$ & $645 \pm 64$ & $1082 \pm 590$ \\
$0.75$ & $7 \pm 0$ & $8 \pm 3$ & $522 \pm 36$ & $747 \pm 306$ \\
$1$ & $5 \pm 1$ & $5 \pm 0$ & $380 \pm 50$ & $439 \pm 37$ \\
$1.25$ & $4 \pm 1$ & $4 \pm 0$ & $289 \pm 58$ & $330 \pm 44$ \\
$1.5$ & $4 \pm 1$ & $4 \pm 1$ & $302 \pm 49$ & $330 \pm 71$ \\
$1.75$ & $5 \pm 1$ & $4 \pm 0$ & $414 \pm 107$ & $403 \pm 45$ \\
$2$ & $6 \pm 0$ & $5 \pm 0$ & $476 \pm 33$ & $421 \pm 45$ \\
$2.25$ & $5 \pm 1$ & $5 \pm 0$ & $352 \pm 94$ & $458 \pm 1$ \\
\midrule
Grid of $8$ nodes & -- & -- & $3{,}380$ & $1{,}082 \pm 590$ \\
\bottomrule
\end{tabular}
\end{table}

\begin{table}[htbp]
\centering
\small
\caption{District ($d=10$): FP rounds and best-response wall-clock (seconds) to reach $10\%$ relative exploitability, and relative exploitability at the end of the FP budget (mean $\pm$ std over five seeds).}
\label{tab:grid-district}
\begin{tabular}{@{}llccc@{}}
\toprule
 & Method & Bullseye & Dual-core & Corridor \\
\midrule
FP rounds & PMFG & $46 \pm 3$ & $34 \pm 8$ & $6 \pm 0$ \\
 & BE-MFG & $33 \pm 4$ & $29 \pm 2$ & $29 \pm 6$ \\
\midrule
Wall-clock (s) & PMFG & $7351 \pm 778$ & $4577 \pm 1816$ & $439 \pm 36$ \\
 & BE-MFG & $6019 \pm 1683$ & $4506 \pm 450$ & $4765 \pm 1723$ \\
\midrule
Rel. exploitability & PMFG & $8.8 \pm 0.7\%$ & $7.6 \pm 0.6\%$ & $1.7 \pm 0.6\%$ \\
 & BE-MFG & $6.0 \pm 0.9\%$ & $6.0 \pm 0.6\%$ & $10.7 \pm 2.6\%$ \\
\bottomrule
\end{tabular}
\end{table}

\paragraph{Accuracy.} We compare parameter recovery on Beach Bar with a grid search over the $5$ dedicated PMFGs of Experiment~1, at $\theta \in \{0.5, 0.875, 1.25, 1.625, 2.0\}$. We sample $100$ true parameters $\theta_0$ uniformly in $[0.5, 2.0]$. For each $\theta_0$, a dedicated NE, computed with the same settings as the nodes of the grid, generates $N=100$ state observations over the full trajectory. The grid search selects the node minimizing the Wasserstein distance $\mathcal{W}_1$ between the observations and trajectories simulated at the node, while the BE-MFG with MAF returns the MAP over a grid of $300$ values of $\theta$, which requires no additional NE computation. We also report the $\mathcal{W}_1$ distance between the law of the observations and the law reconstructed with the recovered parameter, computed on independent draws. Table~\ref{tab:grid-recovery} shows that the BE-MFG recovers $\theta_0$ about $1.7$ times more accurately, while both methods reproduce the law of the observations equally well. 

\begin{table}[htbp]
\centering
\small
\caption{Beach Bar: parameter recovery from $N=100$ observations, for $100$ parameters $\theta_0$ drawn uniformly in $[0.5, 2.0]$ (mean $\pm$ std over five seeds and the $100$ parameters).}
\label{tab:grid-recovery}
\begin{tabular}{@{}lccc@{}}
\toprule
Method & $\vert\hat\theta - \theta_0\vert$ & $\mathcal{W}_1$ in law & Number of NE solves \\
\midrule
Grid search ($5$ nodes) & $0.124 \pm 0.078$ & $0.406 \pm 0.026$ & $5$ \\
BE-MFG (MAF, MAP) & $0.075 \pm 0.053$ & $0.409 \pm 0.023$ & $1$ \\
\bottomrule
\end{tabular}
\end{table}

\end{document}